\documentclass[twocolumn,10pt,superscriptaddress,amsmath,amssymb,aps,pra,floatfix,]{revtex4-2}

\usepackage{graphicx}
\usepackage{bm}
\usepackage{times}
\usepackage{amsmath,amssymb}
\usepackage{mathtools}
\usepackage{amsthm}
\usepackage{enumerate}
\usepackage{multirow}
\usepackage{siunitx}
\usepackage{makecell}
\usepackage{physics}
\usepackage{longtable,array}
\usepackage[caption=false]{subfig}
\usepackage{tikz}
\usepackage{xcolor}
\usepackage{tabularx}
\usepackage{soul}
\usepackage{booktabs}
\setstcolor{red}

\renewcommand{\arraystretch}{1.2}

\usepackage[colorlinks=true,linkcolor=blue,citecolor=red, linktocpage=true,breaklinks=true]{hyperref}

\newcommand{\eq}{\begin{equation}}
\newcommand{\en}{\end{equation}}
\newcommand{\eqa}{\begin{eqnarray}}
\newcommand{\ena}{\end{eqnarray}}
\newcommand{\pr}{\operatorname{Pr}}

\newtheorem{theorem}{Theorem} 
\newtheorem{lemma}[theorem]{Lemma} 
\newtheorem{corollary}[theorem]{Corollary}

\newtheorem{proposition}[theorem]{Proposition}

\begin{document}

\title{Circuit-depth optimization of quantum partial-search algorithms}
\author{Yan-Bo \surname{Jiang}}
\affiliation{School of Physics, Northwest University, Xi'an 710127, China}

\author{Xiao-Hui \surname{Wang}}
\affiliation{School of Physics, Northwest University, Xi'an 710127, China}
\affiliation{Shaanxi Key Laboratory for Theoretical Physics Frontiers, Xi'an 710127, China}
\affiliation{Peng Huanwu Center for Fundamental Theory, Xi'an 710127, China}
\affiliation{Fundamental Discipline Research Center for Quantum Science and Technology of Shaanxi Province, Xi'an 710127, China}

\author{Kun \surname{Zhang}}
\email{kunzhang@nwu.edu.cn}
\affiliation{School of Physics, Northwest University, Xi'an 710127, China}
\affiliation{Shaanxi Key Laboratory for Theoretical Physics Frontiers, Xi'an 710127, China}
\affiliation{Peng Huanwu Center for Fundamental Theory, Xi'an 710127, China}
\affiliation{Fundamental Discipline Research Center for Quantum Science and Technology of Shaanxi Province, Xi'an 710127, China}

\author{Vladimir \surname{Korepin}}
\affiliation{C. N. Yang Institute for Theoretical Physics, Stony Brook University, New York 11794, USA}

\date{\today}

\begin{abstract}
Grover's algorithm is optimal in terms of oracle queries. We can trade accuracy for speed, which gives rise to the quantum partial-search algorithm. The partial-search algorithm is implemented using two kinds of Grover operators, global and local, where the former is the standard Grover operator for full search and the latter has a diffusion operator acting on the search subspace. The global-local-global sequence, also known as the Grover-Radhakrishnan-Korepin (GRK) algorithm, has been proved optimal in the oracle-query metric. In this work, we show that the alternating sequence of global and local Grover operators, formed by repeatedly applying a fixed product of global and local Grover operators, can achieve a lower expected circuit depth than the GRK algorithm. Through systematic analysis, we obtain its exact success probability, expected depth, and asymptotically optimal parameters. We derive the boundary, characterized by the ratio between the depths of the oracle and the global diffusion operator, separating the depth-optimal and oracle-optimal partial-search algorithms. When the depths of the oracle and the global diffusion operator are comparable, our proposed alternating partial-search sequence can reduce the minimum expected circuit depth by more than 20\% compared with the GRK algorithm.
\end{abstract}

\maketitle

\section{\label{sec:intro} Introduction}
Unstructured search is a canonical example of a provable quantum advantage. Grover's algorithm searches a database of size $N$ using $\Theta(\sqrt N)$ oracle queries, a quadratic improvement over the classical $\Theta(N)$ scaling, and this query complexity is optimal \cite{grover1996fast,grover1997quantum,bennett1997strengths,zalka1999grover}. Its underlying amplitude-amplification mechanism \cite{brassard2000quantum} also supports applications in global optimization \cite{baritompa2005grover}, cryptanalysis \cite{grassl2016applying}, Boolean satisfiability \cite{dantsin2005quantum}, and quantum machine learning \cite{kerenidis2019qmeans}. Experimental realizations began with a two-qubit nuclear-magnetic-resonance implementation \cite{chuang1998experimental} and have since progressed across trapped ions \cite{figgatt2017complete}, superconducting circuits \cite{dewes2012quantum,zhang2021implementation,yan2023scalable,abughanem2025characterizing}, silicon spin qubits \cite{thorvaldson2025grover}, and error-suppressed Grover search on up to five qubits \cite{pokharel2024better}. Despite this progress, implementations of Grover search remain small-scale demonstrations, reflecting the gap between an asymptotic query advantage and the resource limitations of noisy intermediate-scale quantum devices \cite{preskill2018quantum,leymann2020bitter}.

When only partial information is required, quantum partial search seeks to identify the block containing the target rather than the target itself \cite{grover2005partial}. Partial search can be applied to partitioned databases, prefix-based recovery, and hierarchical search \cite{korepin2007hierarchical,korepin2009quantum}. For a database of size $N=2^n$ partitioned into $K=2^{n-m}$ blocks of size $b=2^m$, the Grover-Radhakrishnan-Korepin (GRK) algorithm combines global Grover iterations with local iterations, where the latter apply the diffusion operator only to the $m$ within-block qubits \cite{grover2005partial,korepin2005simple,korepin2005optimization,korepin2006quest}. The resulting GRK algorithm has a global-local-global form whose asymptotic optimality in the oracle-query metric has recently been established \cite{jiang2026exactboundsquantumpartial,zhang2026asymptotic}. 

Although both Grover's full-search algorithm and the GRK partial-search algorithm have been proven optimal in terms of the number of oracle queries, query optimality does not imply depth optimality. This is because the oracle is not the only operator used in implementations. The depth-efficient design of the diffusion operator for full search has been systematically investigated \cite{Zhang2020DepthO,brianski2021introducing,Liu2021HardwareEfficientQuantumSearch,zhang2026palindromic,burke2026quantum}. However, the depth-efficient design of partial-search algorithms has not been explored. Intuitively, we can increase the number of local diffusion operators to reduce the minimum expected circuit depth. Since the global and local Grover operators do not commute, their order is essential for efficiency.

In this work, we address the depth-optimal design of partial-search algorithms. The key metric is the expected depth required to find the target block rather than the number of oracle queries. Through finite-size exhaustive numerical tests over all possible sequences of global and local Grover operators, we identify depth-optimal sequences in which the global and local Grover operators are applied alternately. We systematically analyze the alternating sequence and derive its exact success probability, expected depth, and asymptotically optimal parameters. We then compare the depth-optimized GRK family with the proposed depth-efficient alternating sequence. When the depths of the oracle and the global diffusion operator are comparable, we find an asymptotic 25\% reduction in the minimum expected circuit depth compared with the oracle-optimal GRK algorithm.


The paper is organized as follows. Sec.~\ref{sec:quantum_search} reviews Grover search and the GRK partial-search algorithm. Sec.~\ref{sec:numerical} describes the numerical optimization and identifies the structure of the depth-efficient partial-search operator. Sec.~\ref{sec:analytical} derives the exact and asymptotic theory of the partial-search algorithm with the alternating sequence. Sec.~\ref{sec:critical} compares the proposed alternating sequence with the GRK algorithm and determines the critical depth ratio, which separates the depth- and oracle-optimal regimes of the partial-search problem. The final section presents the conclusions and outlook. Supporting results are presented in the appendices.

\section{\label{sec:quantum_search} Quantum search algorithms}

In this section, we review the standard Grover search algorithm and the GRK partial-search algorithm in Secs.~\ref{subsec:grover} and~\ref{subsec:partial_search}, respectively.

\subsection{Grover's algorithm\label{subsec:grover}}

Let the database size be $N=2^n$. Let $t\in\{0,1\}^n$ denote the target specified by a Boolean function $f(w)$ with $f(t)=1$ and $f(w)=0$ for $w\neq t$. Grover search starts from the uniform state
\begin{equation}
\label{eq:s_n}
|s_n\rangle=\frac{1}{\sqrt{N}}\sum_{w=0}^{N-1}|w\rangle.
\end{equation}
The phase oracle acts as \cite{nielsen2010quantum}
\begin{equation}
O_t|w\rangle=(-1)^{f(w)}|w\rangle,
\qquad O_t=I_n-2|t\rangle\langle t|,
\end{equation}
where $I_n$ is the identity operator on $n$ qubits. The phase oracle alone does not change measurement probabilities. Amplitude amplification is produced by the diffusion operator
\begin{equation}
D_n=2|s_n\rangle\langle s_n|-I_n,
\end{equation}
which reflects amplitudes about the uniform state. Note that both the oracle $O_t$ and the diffusion operator $D_n$ are highly nonlocal. The circuit depth required to implement $O_t$ is problem-dependent. The circuit depth of $D_n$ is determined mainly by the implementation of the multi-controlled Toffoli gate, for which ancilla-free constructions of depth linear in $n$ are available \cite{barenco1995,saeedi2013linear,he2017decompositions,zindorf2025efficient}. 

The Grover operator is \cite{grover1996fast,grover1997quantum}
\begin{equation}
G_n=D_nO_t.
\end{equation}
With $\sin\theta_1=1/\sqrt{N}$, each iteration of $G_n$ realizes a rotation through an angle $2\theta_1$ within the two-dimensional subspace spanned by the target state and the uniform non-target state. If $\tilde{\omega}$ denotes the measured $n$-bit string, then after $k$ iterations the probability of measuring the target is
\begin{equation}
\pr_{\tilde{\omega}=t}(k)=\sin^2((2k+1)\theta_1).
\end{equation}
For $N\gg1$, the integer iteration count closest to the first maximum is
\begin{equation}
k_{\mathrm{opt}}\simeq\left\lfloor\frac{\pi}{4}\sqrt N-\frac12\right\rceil.
\end{equation}
Here $\lfloor \cdot\rceil$ denotes nearest-integer rounding. Grover's algorithm gives a quadratic query speedup over classical unstructured search.

It is not optimal to run Grover search to its first probability maximum, because the expected number of oracle queries can be further optimized. Specifically, one may stop earlier and restart after failure, as in serial or punctuated search \cite{boyer1996tight,Gingrich2000Generalized}. For $k\in\mathbb Z_{\ge1}$, the success probability after $k$ iterations is $\pr_{\tilde{\omega}=t}(k)$, which gives the expected iteration count
\begin{equation}
\mathrm{E}_{\tilde{\omega}=t}(k)=\frac{k}{\pr_{\tilde{\omega}=t}(k)}.
\end{equation}
Minimizing this expectation yields an optimal stopping point $k_{\min}$ that satisfies the transcendental equation $\tan((2k_{\min}+1)\theta_1)=4\theta_1 k_{\min}$. In the large-$N$ limit this gives $k_{\min}\simeq (y_0/2)\sqrt N$, where $y_0\approx1.1656$ is the smallest positive root of $\tan y=2y$. The corresponding minimum expected iteration count is
\begin{equation}
\label{eq:E_min_Grover}
\mathrm{E}_{\tilde{\omega}=t}^{\min}=\frac{k_{\min}}{\pr_{\tilde{\omega}=t}(k_{\min})}\simeq\kappa_0\sqrt{N},
\end{equation}
with $\kappa_0=y_0/(2\sin^2y_0)\approx0.6900$, which is smaller than $\pi/4$.

\subsection{\label{subsec:partial_search} Partial-search algorithm}

In partial search, the goal is to identify the block containing the target rather than the target item itself. Consequently, we can trade resolution for speed. We write the target string as $t=t_1t_2$, where $t_1$ has length $n-m$ and labels the block, while $t_2$ has length $m$ and labels the item inside that block. Thus the database is partitioned into $K=2^{n-m}$ blocks, each of size $b=2^m$. The GRK algorithm introduces a local diffusion operator that performs reflections only within individual blocks \cite{grover2005partial,korepin2005simple,korepin2005optimization}:
\begin{equation}
D_m=I_{n-m}\otimes\left(2|s_m\rangle\langle s_m|-I_m\right),
\end{equation}
where $|s_m\rangle$ is the uniform superposition of the $m$-qubit computational-basis states and $I_{n-m}$ and $I_m$ are the identity operators on the block-label and block-item qubits, respectively. Unlike $D_n$, which acts on all $n$ qubits, $D_m$ leaves the first $n-m$ qubits unchanged and diffuses only the remaining $m$ qubits. For $m<n$, local diffusion can therefore reduce the non-oracle contribution to circuit depth.

The local Grover operator is defined as
\begin{equation}
G_m=D_mO_t.
\end{equation}
The local Grover operator enlarges the invariant search subspace from two to three dimensions \cite{korepin2006group}. A convenient orthonormal basis is
\begin{subequations}
\label{eq:three_dimen_basis}
\begin{align}
&|t\rangle=|t_1\rangle\otimes|t_2\rangle,\\
&|b\bar t\rangle=\frac{1}{\sqrt{b-1}}\sum_{j\neq t_2}|t_1\rangle\otimes|j\rangle,\\
&|\bar b\rangle=\frac{1}{\sqrt{N-b}}\left(\sqrt{N}|s_n\rangle-|t\rangle-\sqrt{b-1}|b\bar t\rangle\right),
\end{align}
\end{subequations}
where $|b\bar t\rangle$ is the uniform non-target state inside the target block, and $|\bar b\rangle$ is the uniform state over all non-target blocks. The initial state in Eq. \eqref{eq:s_n} is therefore
\begin{equation}\label{eq:sn_three}
|s_n\rangle=\sin\gamma\sin\theta_2|t\rangle+\sin\gamma\cos\theta_2|b\bar t\rangle+\cos\gamma|\bar b\rangle,
\end{equation}
with the angles $\sin\theta_2 = 1/\sqrt{b}$ and $\sin\gamma = 1/\sqrt{K}$. The matrix representations of the global and local Grover operators in this three-dimensional basis are, respectively,
\begin{widetext}
\begin{equation}\label{eq:matrix_GnGm}
G_n=\begin{pmatrix}
1-2\sin^2\gamma\sin^2\theta_2 & 2\sin^2\gamma\sin\theta_2\cos\theta_2 & 2\sin\gamma\cos\gamma\sin\theta_2\\
-2\sin^2\gamma\sin\theta_2\cos\theta_2 & 2\sin^2\gamma\cos^2\theta_2-1 & 2\sin\gamma\cos\gamma\cos\theta_2\\
-2\sin\gamma\cos\gamma\sin\theta_2 & 2\sin\gamma\cos\gamma\cos\theta_2 & 2\cos^2\gamma-1
\end{pmatrix},\quad
G_m=\begin{pmatrix}
\cos(2\theta_2) & \sin(2\theta_2) & 0\\
-\sin(2\theta_2) & \cos(2\theta_2) & 0\\
0 & 0 & 1
\end{pmatrix}.
\end{equation}
\end{widetext}
These matrices form an $O(3)$ representation \cite{korepin2006group}, with $\det G_n=-1$. Thus $G_n$ is an improper orthogonal transformation rather than an element of $SO(3)$.

\begin{figure}[t]
\centering
\includegraphics[width=\linewidth]{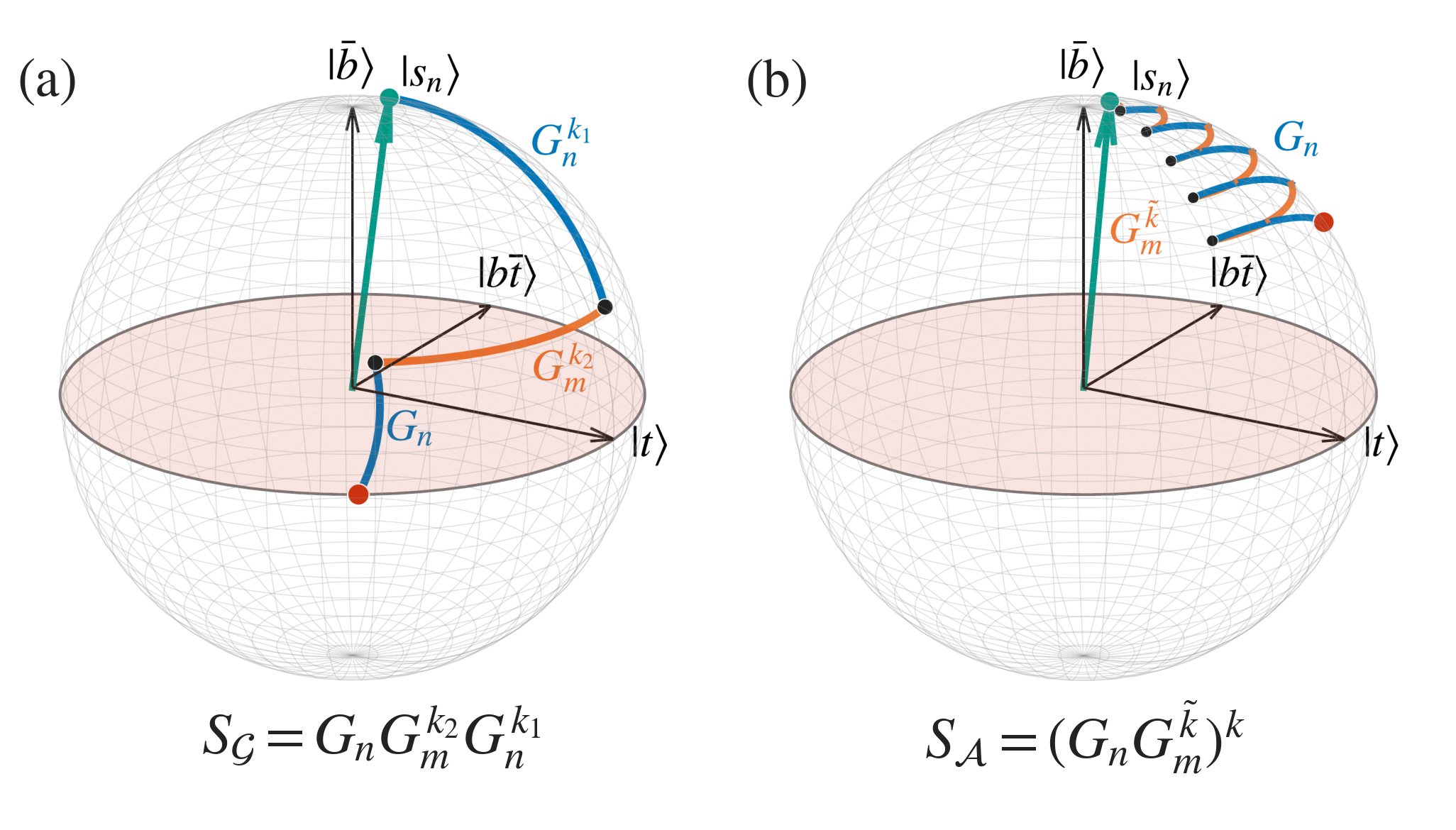}
\caption{Geometric illustration of the two families of partial-search sequences in the three-dimensional invariant subspace. (a) The GRK sequence $S_{\mathcal{G}}=G_nG_m^{k_2}G_n^{k_1}$ follows a global-local-global trajectory. The blue arcs denote global Grover iterations $G^{k_1}_n$, whereas the orange arc denotes the local Grover iterations $G_m^{k_2}$. (b) The depth-efficient alternating sequence $S_{\mathcal{A}}=(G_nG_m^{\tilde k})^k$ repeatedly applies the same local--global pair (orange and blue), producing a spiral trajectory. The coordinate axes mark the target item $|t\rangle$, the non-target-in-block state $|b\bar t\rangle$, and the non-target-block state $|\bar b\rangle$.}
\label{fig:grk_dep_opt_trajectory}
\end{figure}

The GRK sequence applies $k_1$ global iterations, then $k_2$ local iterations, and finally one global iteration, and is given by 
\begin{equation}
\label{eq:GRK_sequence}
S_{\mathcal{G}}(k_1,k_2)=G_nG_m^{k_2}G_n^{k_1}.
\end{equation}
Writing $\tilde{\omega}_1$ for the measured block-index string, success means $\tilde{\omega}_1=t_1$. For a fixed number of oracle queries $k_{\mathrm{tot}}=1+\beta\sqrt N$ with $0<\beta<\pi/4$, the maximum success probability within the GRK sequence has the following expansion in powers of $K^{-1/2}$ \cite{jiang2026exactboundsquantumpartial}
\begin{multline}
\label{eq:GRK_success_expand}
\pr_{\tilde{\omega}_1=t_1}^{\max}(k_{\mathrm{tot}})
=\sin^2(2\beta)+\kappa_1K^{-1/2}\sin(4\beta)\\
+\mathcal{O}(K^{-1})+\mathcal{O}(N^{-1/2}),
\end{multline}
where $\kappa_1\approx0.6849$. The $\mathcal{O}(N^{-1/2})$ term records finite-$N$ corrections that are not controlled by powers of $K^{-1/2}$ alone. In the joint limit $b,K\to\infty$, $N^{-1/2}=K^{-1/2}b^{-1/2}=o(K^{-1/2})$, so both displayed remainders are smaller than the partial-search correction term. The leading term is the global-search contribution under the same query count, while the displayed $K^{-1/2}$ correction is the partial-search gain. The success probability of the GRK sequence reaches its maximum at $k_2=\pi\sqrt{b}/6$ \cite{korepin2005simple,korepin2005optimization}. See Fig.~\ref{fig:grk_dep_opt_trajectory}(a) for the geometric illustration of the GRK algorithm. Throughout this work, $\Theta(f)$ denotes a term of the same asymptotic order as $f$ up to constant factors, $\mathcal{O}(f)$ a term whose magnitude is asymptotically bounded above by a constant multiple of $f$, and $o(f)$ a term asymptotically negligible relative to $f$ (i.e., its ratio to $f$ tends to zero).

For the expected-query efficiency metric under independent repetitions, the GRK algorithm can be optimized similarly \cite{jiang2026exactboundsquantumpartial}. In the limits $b\gg1$ and $K\gg1$, its optimal total query count is
\begin{equation}
\label{eq:GRK_query_opt}
k_{\mathrm{tot}}\simeq \kappa_2\sqrt{N}-\kappa_3\sqrt{b},
\end{equation}
where $\kappa_2\approx0.5829$ and $\kappa_3\approx0.4969$, unless the boundary solution $k_{\mathrm{tot}}=1$ yields a lower expected cost. For $m\lesssim\lfloor n/2\rfloor$, the minimum expected number of oracle queries is
\begin{equation}
\label{eq:E^min_oracle_GRK}
\mathrm{E}^{\min}_{\tilde{\omega}_1=t_1}\simeq \kappa_0\sqrt{N}-\kappa_4\sqrt{b},
\end{equation}
where $\kappa_4\approx0.4054$. The first term is the original Grover cost. The second negative correction is the query saving from partial search. For $m>\lfloor n/2\rfloor$, however, the number of blocks is too small for this quantum advantage to persist, and the optimal expected cost approaches $\mathrm{E}_{\tilde{\omega}_1=t_1}^{\min}\simeq K=2^{n-m}$, which is the cost of random guessing. Thus, within the expected-query metric, the speedup of quantum partial search is not unconditional. Only when the number of blocks $K$ is sufficiently large can GRK outperform classical partial search. For readers' convenience, we summarize the universal constants, such as $\kappa_0$, obtained from optimization in Table~\ref{tab:constants}.

\begin{table}[t]
\centering
\footnotesize
\caption{Numerical constants used throughout the paper and their associated equations.}
\label{tab:constants}
\setlength{\tabcolsep}{3pt}
\begin{tabular*}{\columnwidth}{@{\extracolsep{\fill}}cllc@{}}
\toprule
Symbol & Value & Definition & Equation \\
\midrule
$y_0$ & $1.1656$ & Smallest positive root of $\tan y = 2y$ & \eqref{eq:E_min_Grover} \\
$\kappa_0$ & $0.6900$ & $y_0/(2\sin^2 y_0)$ & \eqref{eq:E_min_Grover} \\
$\kappa_1$ & $0.6849$ & GRK success-probability correction & \eqref{eq:GRK_success_expand} \\
$\kappa_2$ & $0.5829$ & GRK query-complexity coefficient & \eqref{eq:GRK_query_opt} \\
$\kappa_3$ & $0.4969$ & GRK query-complexity correction & \eqref{eq:GRK_query_opt} \\
$\kappa_4$ & $0.4054$ & Partial-search query saving & \eqref{eq:E^min_oracle_GRK} \\
$\kappa_5$ & $0.6200$ & $\pi\kappa_0/(6\kappa_2)$ & \eqref{eq:GRKdepth_small_q} \\
$\kappa_6$ & $1.0400$ & $3^{2/3}/2$ & \eqref{eq:Enew_expand} \\
\bottomrule
\end{tabular*}
\end{table}

\section{\label{sec:numerical} Numerical optimization for depth-optimal partial search}

In this section, we numerically identify the optimal partial-search sequence using the circuit-depth metric rather than the number of oracle queries. Sec.~\ref{subsec:setups} specifies the numerical setup, and Sec.~\ref{subsec:results} presents the phase diagram of the depth-optimal partial-search sequences. 

\subsection{\label{subsec:setups} Numerical setup}

From the perspective of oracle-query complexity, global and local Grover operators are equivalent, with each counted as one query. From the perspective of circuit depth, however, the global Grover operator is more expensive than the local one because the local diffusion operator has lower circuit depth than the global diffusion operator. Both numerical and analytical results show that the GRK sequence, given by Eq.~\eqref{eq:GRK_sequence}, is strictly optimal in the number of oracle queries \cite{jiang2026exactboundsquantumpartial,zhang2026asymptotic}. However, when the metric changes to circuit depth, the GRK sequence is not guaranteed to be optimal. 

To numerically determine the depth-optimal partial-search sequence, we consider all admissible alternating sequences of global and local Grover operators, denoted by
\begin{equation}
\label{eq:S_sequence}
S(k_1,k_2,\ldots,k_J)
=G_n^{k_1}G_m^{k_2}G_n^{k_3}\cdots G_n^{k_J}.
\end{equation}
Here $J$ is an odd integer. The exponents satisfy $k_1,k_2,\ldots,k_{J-1}\in\mathbb Z_{\ge1}$, and $k_J\in\mathbb Z_{\ge0}$. The choice $k_J=0$ allows the first operator acting on the state to be local. However, we always set $k_1\geq 1$ because the local Grover operator does not change the amplitude of the block-label qubits, so the sequence must end with a global Grover operator. Note that quantum operators act on state vectors from right to left.

The total number of oracle queries of sequence $S$ is
\begin{equation}
k_{\mathrm{tot}} = \sum_{p=1}^{J} k_p,
\end{equation}
and the probability of finding the target block after applying $S$ is
\begin{equation}
\pr_{\tilde{\omega}_1=t_1}(S)=1-\bigl|\langle\bar b|S|s_n\rangle\bigr|^2.
\end{equation}
Since this work focuses on quantum partial search, we may omit the subscript $\tilde{\omega}_1=t_1$ below when there is no ambiguity. 

Let $\ell(S)$ denote the circuit depth of operator $S$. For brevity, we write $\ell_{O_t} = \ell(O_t)$, $\ell_n = \ell(D_n)$, and $\ell_m = \ell(D_m)$ for the depths of the oracle, global diffusion, and local diffusion operators, respectively. In practice, we have $\ell_m<\ell_n$, and the diffusion depth increases with register size, $\ell_m<\ell_{m+1}$. Because the oracle depth is problem-dependent, we parameterize it by the depth ratio
\begin{equation}
\alpha=\frac{\ell_{O_t}}{\ell_n},
\end{equation}
which gives $\ell_{O_t} = \alpha \ell_n$. When $\alpha\gg1$, oracle calls dominate the depth, and the depth and oracle-query metrics become equivalent \cite{zhang2026palindromic}. At finite $\alpha$, however, the two metrics can be inequivalent. This motivates us to explore depth-optimal partial-search sequences, which are relevant to depth-constrained implementations. 

Specifically, under this model, the depth of $k$ iterations of the global Grover operator is
\begin{equation}
\ell(G_n^k) = k(\ell_n + \ell_{O_t}) = k(\alpha+1)\ell_n,
\end{equation}
while that of $k$ iterations of the local Grover operator is
\begin{equation}
\ell(G_m^k) = k(\ell_m + \ell_{O_t}) = k(\ell_m + \alpha\ell_n).
\end{equation}
For a sequence $S$ in Eq.~\eqref{eq:S_sequence}, where odd indices correspond to $G_n$ and even indices to $G_m$, the total circuit depth is
\begin{equation}
\ell(S) = \Bigl(\alpha k_{\mathrm{tot}} + \sum_{\text{odd } p} k_p\Bigr)\ell_n + \Bigl(\sum_{\text{even } p} k_p\Bigr)\ell_m.
\end{equation}
The expected depth of finding the target block is
\begin{equation}
\mathrm{E}_{\ell}(S)=\frac{\ell(S)}{\pr(S)},
\end{equation}
which is the expected depth per successful sample under independent repetitions. This metric captures circuit-depth cost rather than query count. It averages the quantum circuit-depth expenditure over successful outcomes but does not include the cost of verifying a measured block label, which uses classical resources. Since our cost model focuses on quantum resources, we omit the classical verification cost. 

The optimization problem for finding the partial-search algorithm with minimum expected depth is
\begin{equation}
\mathrm{E}_{\ell}^{\min}
=\min_{S}\mathrm{E}_{\ell}(S),
\label{eq:E_opt}
\end{equation}
where $S$ is chosen from all admissible sequences. It is expected that the GRK sequence gives $\mathrm{E}_{\ell}^{\min}$ when $\alpha$ is very large. It is therefore essential to determine how large the critical $\alpha$ is and, below it, which sequence yields a lower minimum expected circuit depth than the GRK sequence.

\subsection{\label{subsec:results} Optimal sequences and phase diagram}

We numerically solve the optimization problem in Eq.~\eqref{eq:E_opt} by enumerating every admissible sequence up to a cutoff in $k_{\mathrm{tot}}$. For the linear-depth model used in the numerical scan below, any admissible word of length $k_{\mathrm{tot}}$ containing at least one global iteration satisfies
\begin{equation}
\mathrm{E}_{\ell}\ge(\alpha n+m)k_{\mathrm{tot}}+(n-m).
\label{eq:enumeration_lower_bound}
\end{equation}
Thus no word with $k_{\mathrm{tot}}\ge\lceil(\mathrm{E}_{\mathrm{best}}-(n-m))/(\alpha n+m)\rceil$ can improve the current optimum, where $\mathrm{E}_{\mathrm{best}}$ denotes the smallest expected depth found during the enumeration. Therefore we can safely exclude any $S$ whose length is at or above $\lceil(\mathrm{E}_{\mathrm{best}}-(n-m))/(\alpha n+m)\rceil$.

The numerical search is performed for $N=2^{10}$, $m=2,\ldots,9$, and
$\alpha\in\{1,3,5,10,100\}$. The diffusion depths are set to scale linearly with their register sizes, namely $\ell_n=\xi n$ and $\ell_m=\xi m$, corresponding to an ancilla-free linear-depth model for multi-controlled Toffoli gates \cite{saeedi2013linear,he2017decompositions,zindorf2025efficient}. We set the common positive scale $\xi=1$ without loss of generality. Table~\ref{tab:n10_alpha1_unrestricted} lists the results for $\alpha=1$, where the GRK sequence is not optimal in depth. Figure~\ref{fig:expected_depth_vs_m} compares the optimized results with those from the GRK ansatz at $\alpha=1$. For $m=2,\ldots,5$, the optimized results are substantially shallower than the GRK sequence, demonstrating that expected-query efficiency within the GRK ansatz does not imply depth optimality. For $m\ge6$, the optima reduce to a single global iteration, as also reported in Ref.~\cite{jiang2026exactboundsquantumpartial}, indicating that the quantum partial-search sequences are no longer efficient. For other representative values of $\alpha\in\{3,5,10,100\}$, the results are summarized in Appendix~\ref{app:tables}. In particular, when $\alpha=100$, the depth-optimized sequences are all GRK sequences, showing that the query and depth optima coincide at large $\alpha$. 

\begin{table}[t]
\centering
\footnotesize
\caption{Depth-optimal sequences and the corresponding success probabilities and expected depths for $\alpha=1$, $n=10$, $\ell_n=n$, and $\ell_m=m$. We apply the cutoff $k_{\mathrm{tot}}\le24$.}
\label{tab:n10_alpha1_unrestricted}
\setlength{\tabcolsep}{4pt}
\begin{tabular}{@{}clcccc@{}}
\toprule
$m$ & Optimal sequence $S$ & $k_{\mathrm{tot}}$ & $\ell(S)$ & $\pr(S)$ & $\mathrm{E}_{\ell}^{\min}$ \\
\midrule
2 & $(G_n G_m)^{10}$ & 20 & 320 & 0.824 & 388.5 \\
3 & $G_n G_m^2 (G_n G_m)^8$ & 19 & 310 & 0.848 & 365.5 \\
4 & $G_n G_m^3 (G_n G_m^2)^5$ & 19 & 302 & 0.859 & 351.8 \\
5 & $G_n G_m^4 (G_n G_m^2)^4$ & 17 & 280 & 0.830 & 337.3 \\
6 & $G_n$ & 1 & 20 & 0.070 & 286.5 \\
7 & $G_n$ & 1 & 20 & 0.132 & 151.7 \\
8 & $G_n$ & 1 & 20 & 0.256 & 78.2 \\
9 & $G_n$ & 1 & 20 & 0.504 & 39.7 \\
\bottomrule
\end{tabular}
\end{table}

\begin{figure}[!t]
\centering
\includegraphics[width=\columnwidth]{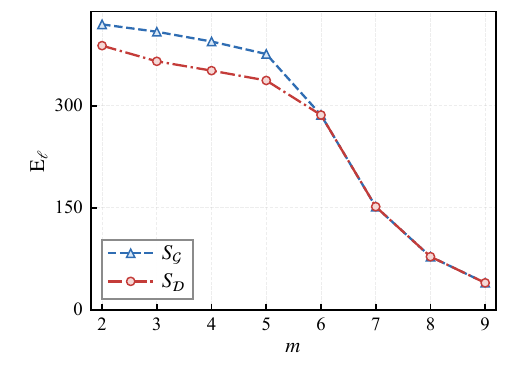}
\caption{Expected circuit depth $\mathrm{E}_{\ell}$ as a function of the block-size parameter $m$ for $N=2^{10}$ and $\alpha=1$. We assume the linear-depth model of diffusion operators with $\ell_n=n$ and $\ell_m=m$. Circles show the depth-optimal sequences listed in Table~\ref{tab:n10_alpha1_unrestricted}, which have the form $S_{\mathcal{D}}$ defined in Eq.~\eqref{eq:S_D}. Triangles show the optimal expected circuit depth restricted to the GRK sequence given by $S_{\mathcal{G}}=G_nG_m^{k_2}G_n^{k_1}$. }
\label{fig:expected_depth_vs_m}
\end{figure}

\begin{table}[t]
\centering
\footnotesize
\caption{Summary of the three sequence families used throughout this work.}
\label{tab:sequence_families}
\renewcommand{\arraystretch}{1.15}
\begin{tabular}{@{}c@{\hspace{1pt}}c@{\hspace{1pt}}l@{}}
\toprule
Sequence & Operator form & Comment \\
\midrule
${S}_{\mathcal{D}}(k,\bar k,\tilde k)$
& ~$G_nG_m^{\bar k}(G_nG_m^{\tilde k})^k$~
& \makecell[l]{General family of depth-optimal sequences} \\
\addlinespace[2pt]
${S}_{\mathcal{G}}(k_1,k_2)$
& $G_nG_m^{k_2}G_n^{k_1}$
& \makecell[l]{GRK sequence with\\$S_{\mathcal{G}}(k_1,k_2)=S_{\mathcal{D}}(k_1,k_2,0)$} \\
\addlinespace[2pt]
${S}_{\mathcal{A}}(k,\tilde k)$
& $(G_nG_m^{\tilde k})^k$
& \makecell[l]{Alternating sequence with\\$S_{\mathcal{A}}(k,\tilde k)=S_{\mathcal{D}}(k-1,\tilde{k},\tilde{k})$} \\
\bottomrule
\end{tabular}
\end{table}

\begin{figure}[!t]
\centering
\includegraphics[width=\linewidth]{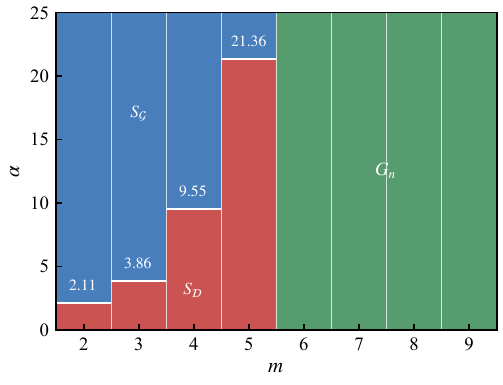}
\caption{Phase diagram of depth-optimal partial-search operators obtained from a numerical scan in $\alpha$ for the discrete block sizes $m=2,\ldots,9$ at $N=2^{10}$. We assume the linear-depth model of diffusion operators with $\ell_n=n$ and $\ell_m=m$. Red denotes the non-GRK $S_{\mathcal{D}}$ region, blue $S_{\mathcal{G}}$, and green the single global iteration $G_n$. The annotated crossings are $\alpha_{\mathrm{c}}=2.11,3.86,9.55,21.36$ for $m=2,3,4,5$.}
\label{fig:phase_diagram_v2}
\end{figure}

Across all numerically tested $\alpha$ values, we find that the depth-optimal sequences fall into the three-parameter family
\begin{equation}
\label{eq:S_D}
S_{\mathcal{D}}(k,\bar k,\tilde{k})=G_nG_m^{\bar k}(G_n G_m^{\tilde{k}})^k,
\end{equation}
with $k,\bar k,\tilde{k}\in\mathbb Z_{\ge 0}$.
The $m>\lfloor n/2\rfloor$ regime gives a random-guess solution with the operator $G_n = S_{\mathcal{D}}(0,0,0)$. For $m\le\lfloor n/2\rfloor$, the optimal sequences fall into two major categories. When $\alpha$ is large, the oracle dominates the depth and the GRK structure is selected, given by
\begin{equation}
S_{\mathcal{G}}(k_1,k_2)=S_{\mathcal{D}}(k_1,k_2,0)=G_nG_m^{k_2}G_n^{k_1}.
\end{equation}
When $\alpha$ is below a threshold, replacing global diffusions with local ones becomes advantageous, and the optimal sequence has the alternating form
\begin{equation}
\label{eq:S_A}
S_{\mathcal{A}}(k,\tilde k)=S_{\mathcal{D}}(k-1,\tilde{k},\tilde{k})=(G_n G_m^{\tilde{k}})^k,
\end{equation}
with $k\in\mathbb Z_{\ge1}$. The alternating sequence has a geometric interpretation distinct from that of the GRK sequence. See Fig.~\ref{fig:grk_dep_opt_trajectory}(b). For finite $N$, the optimal sequences need not contain an integer number of identical alternating units. This produces an edge operator $G_nG_m^{\bar k}$, as in $G_nG_m^{2}(G_nG_m)^8$ with $m=3$ and $G_nG_m^{3}(G_nG_m^{2})^{5}$ with $m=4$ in Table~\ref{tab:n10_alpha1_unrestricted}. However, most of the structure follows $S_{\mathcal{A}}(k,\tilde k)$. Table~\ref{tab:sequence_families} summarizes the three sequence families and their relations.

We present the phase diagram of the depth-optimal sequence in Fig.~\ref{fig:phase_diagram_v2}. It shows three distinct regimes: $G_n$, the GRK sequence $S_{\mathcal{G}}$, and
$S_{\mathcal{D}}$ outside the GRK form. For the phase diagram, we scan $0\le\alpha\le25$ and solve each crossing between $S_{\mathcal{G}}$ and the non-GRK portion of the $S_{\mathcal{D}}$ family. This gives the critical ratios $\alpha_{\mathrm{c}}\approx 2.11$, $3.86$, $9.55$, and $21.36$ for $m=2,3,4,5$, respectively. For
$m=7,8,9$, $G_n$ occupies the entire scan range.

\section{\label{sec:analytical} Analytical theory of the alternating sequence}

We analyze the success probability and the minimum expected circuit depth of the alternating sequence $S_{\mathcal{A}}(k,\tilde k)$ defined in Eq.~\eqref{eq:S_A} in Secs.~\ref{subsec:operator} and~\ref{subsec:expected-depth}, respectively. In particular, the optimization of the minimum expected depth is divided into three cases corresponding to distinct parameter regimes of the depth ratio $\alpha$.


\subsection{\label{subsec:operator} Success probability of the alternating sequence}

The alternating sequence $S_{\mathcal{A}}(k,\tilde k)$ defined in Eq.~\eqref{eq:S_A} consists of repeated applications of $G_nG_m^{\tilde k}$, which acts as an orthogonal transformation on the three-dimensional partial-search subspace spanned by $\{|t\rangle,|b\bar t\rangle,|\bar b\rangle\}$ defined in Eq.~\eqref{eq:three_dimen_basis}. On the initial state $|s_n\rangle$, this transformation realizes a rotation within the two-dimensional invariant subspace containing the initial state. Specifically, the success probability of $S_{\mathcal{A}}(k,\tilde k)$ has the following closed form.

\begin{lemma}\label{lem:exactProb}
For $k\in\mathbb Z_{\ge0}$ and $\tilde k\in\mathbb Z_{\ge0}$, the probability that $S_{\mathcal{A}}(k,\tilde{k})=(G_nG_m^{\tilde{k}})^k$ finds the target block is
\begin{multline}
\pr_{\mathcal{A}}(k,\tilde{k})
=1-\frac{K-1}{K-\sin^2\lambda}\\
\times\cos^2\!\left((2k+1)\arcsin\!\left(\frac{\sin\lambda}{\sqrt K}\right)\right),
\label{eq:exactP}
\end{multline}
where $\lambda=(\tilde{k}+1)\theta_2$ and $\theta_2=\arcsin(1/\sqrt b)$.
\end{lemma}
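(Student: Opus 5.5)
The plan is to reduce $G_nG_m^{\tilde k}$, restricted to the relevant part of the three-dimensional subspace of Eq.~\eqref{eq:three_dimen_basis}, to a product of two reflections in an effective two-dimensional plane. Then $(G_nG_m^{\tilde k})^k$ acts on $|s_n\rangle$ as an ordinary Grover rotation, and the formula follows from the standard Grover computation of Sec.~\ref{subsec:grover}. The steps are as follows.

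First I rewrite $G_n$ and $G_m$ as reflections. On the subspace, $O_t=I-2|t\rangle\langle t|$ and $D_n=2|s_n\rangle\langle s_n|-I$. From the matrix in Eq.~\eqref{eq:matrix_GnGm}, $G_m$ is a rotation by $-2\theta_2$ in the $\{|t\rangle,|b\bar t\rangle\}$ plane and fixes $|\bar b\rangle$. Let $|u_\phi\rangle=\sin\phi|t\rangle+\cos\phi|b\bar t\rangle$. In the sign convention fixed by Eq.~\eqref{eq:matrix_GnGm}, one has $G_m|u_\phi\rangle=|u_{\phi-2\theta_2}\rangle$, possibly up to orientation; I will check this directly against the matrix.

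Second, I define the key vector $|v\rangle=\sin\gamma\,|u_{\lambda}\rangle+\cos\gamma|\bar b\rangle$, with $\lambda=(\tilde k+1)\theta_2$. By Eq.~\eqref{eq:sn_three}, $|s_n\rangle$ equals this expression with $\lambda$ replaced by $\theta_2$. The idea is to conjugate by powers of the in-block rotation $R=G_m$, which commutes with $|\bar b\rangle\langle\bar b|$, so that the repeated unit $G_nG_m^{\tilde k}$ becomes a product of two reflections about fixed vectors.

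Concretely, write $G_n=-(I-2|s_n\rangle\langle s_n|)(I-2|t\rangle\langle t|)$. I will look for a frame change $|x\rangle\mapsto R^{j}|x\rangle$ with a half-integer-like shift, so that $G_nG_m^{\tilde k}$ becomes, up to sign, $R^{a}\,\mathrm{Ref}_{s}\,\mathrm{Ref}_{t'}\,R^{-a}$. Here $\mathrm{Ref}$ denotes a reflection, and the angle between the two reflection axes is the angle relevant to success. Rather than chasing this operator identity, I would verify the structure directly: show that $\mathrm{span}\{|s_n\rangle, G_nG_m^{\tilde k}|s_n\rangle\}$ is invariant. Then $G_nG_m^{\tilde k}$ restricted to this plane is a rotation (or reflection), and its angle is determined by the trace. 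Computing $\operatorname{tr}(G_nG_m^{\tilde k})$ from Eq.~\eqref{eq:matrix_GnGm} should give $-1-2\cos 2\mu$ or $1+2\cos2\mu$, with $\sin\mu=\sin\gamma\sin\lambda=\sin\lambda/\sqrt K$. This identifies the rotation angle $2\mu$, and the third eigenvector is the axis.

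Third, I decompose $|s_n\rangle$ into the rotation axis component and the plane component. The axis component stays fixed and the plane component rotates by $2k\mu$. The overlap $\langle\bar b|S_{\mathcal A}|s_n\rangle$ then takes the form $A+B\cos(2k\mu+\phi_0)$. The claim corresponds to $A=0$, $\phi_0=\mu$, and $B^2=(K-1)/(K-\sin^2\lambda)$, that is, $B=\cos\gamma/\cos\mu$. So I expect $|\bar b\rangle$ to be orthogonal to the axis, or the axis to be orthogonal to $|s_n\rangle$ in an appropriate sense. I will verify this by checking the base cases: $k=0$ gives $\cos^2\mu\cdot B^2=\cos^2\gamma$, which is correct, and $k=1$ fixes the phase.

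The main obstacle is this geometric bookkeeping. I must identify the invariant plane and the axis explicitly and show that the $\tilde k$ local rotations combine with the first-step offset $\theta_2$ of $|s_n\rangle$ to produce exactly the effective angle $\lambda=(\tilde k+1)\theta_2$. If the trace and axis computation is messy, my fallback is induction on $k$ via a Chebyshev recurrence. Since the unit is orthogonal with eigenvalues $\{\pm1,e^{\pm 2i\mu}\}$, the amplitude $a_k=\langle\bar b|S_{\mathcal A}(k,\tilde k)|s_n\rangle$ satisfies $a_{k+1}-2\cos(2\mu)a_k+a_{k-1}=c$ for a constant $c$. I will check that $c=0$ from $a_0$, $a_1$, and $a_2$ computed from the matrices, and then solve the recurrence to obtain $a_k=\frac{\cos\gamma}{\cos\mu}\cos((2k+1)\mu)$.
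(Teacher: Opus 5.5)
Your overall strategy is the paper's: treat $T=G_nG_m^{\tilde k}$ as an element of $O(3)$, read the rotation angle off the trace, show that $|s_n\rangle$ has no component along the remaining eigenvector, and solve a two-term recurrence from $a_0,a_1$. Your target values $\sin\mu=\sin\lambda/\sqrt K$, $A=0$, $B=\cos\gamma/\cos\mu$, $\phi_0=\mu$ are also correct. However, the main route as written goes wrong at the step you flagged as the obstacle. Since $\det D_n=\det G_m=1$ and $\det O_t=-1$ on the invariant subspace, $\det T=-1$, so the real eigenvalue is $-1$. The trace is therefore $1-4\sin^2\gamma\sin^2\lambda=-1+2\cos2\mu$, which is neither of your guesses: $1+2\cos2\mu$ is the trace of a proper rotation, and $-1-2\cos2\mu$ would give rotation angle $\pi-2\mu$ and the wrong formula. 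Likewise, the axis component is not fixed but changes sign at every step. Your third-step ansatz should read $a_k=A(-1)^k+B\cos(2k\mu+\phi_0)$, which has three unknowns, so the base cases $k=0,1$ cannot establish $A=0$; that must be proved separately. Of your two candidate reasons for $A=0$, only one holds. The $(-1)$-eigenvector is $|v_-\rangle\propto\cos\gamma\sin(\tilde k\theta_2)|t\rangle-\cos\gamma\cos(\tilde k\theta_2)|b\bar t\rangle+\sin\gamma\cos\lambda|\bar b\rangle$, which is orthogonal to $|s_n\rangle$ but has $\langle\bar b|v_-\rangle\neq0$ unless $\lambda=\pi/2$. (Also, in your parametrization, $G_m|u_\phi\rangle=|u_{\phi+2\theta_2}\rangle$.)

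Your fallback closes the gap after one correction. By Cayley--Hamilton, $(T^2-2\cos2\mu\,T+I)|s_n\rangle$ lies in the $(-1)$-eigenspace, so the inhomogeneous term is $(-1)^kc$ rather than a constant. A single check $a_2-2\cos2\mu\,a_1+a_0=0$ still forces $c=0$. Then $a_0=\cos\gamma$ and $a_1=\cos\gamma(2\cos2\mu-1)$ give $a_k=\cos\gamma\cos((2k+1)\mu)/\cos\mu$. This is the paper's argument, except that the paper verifies $\langle v_-|s_n\rangle=0$ directly instead of computing $a_2$.

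Alternatively, the reflection idea you set aside works without any conjugation. $O_tG_m^{\tilde k}$ fixes $|\bar b\rangle$ and acts on the in-block plane as a reflection, so $O_tG_m^{\tilde k}=I-2|t'\rangle\langle t'|$ with $|t'\rangle=\cos(\tilde k\theta_2)|t\rangle+\sin(\tilde k\theta_2)|b\bar t\rangle$. Hence $T=D_n(I-2|t'\rangle\langle t'|)$ is an ordinary Grover iterate with effective target $|t'\rangle$, where $\langle t'|s_n\rangle=\sin\gamma\sin\lambda=\sin\mu$ and $\langle\bar b|t'\rangle=0$. The textbook Grover rotation then gives $\langle\bar b|T^k|s_n\rangle=\cos((2k+1)\mu)\langle\bar b|s_n\rangle/\cos\mu$ at once. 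It also explains the orthogonality, since $|v_-\rangle$ is the vector orthogonal to both $|s_n\rangle$ and $|t'\rangle$. This route is shorter than the paper's spectral computation and makes the effective angle $\lambda=(\tilde k+1)\theta_2$ transparent.
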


Here, $\arcsin(\cdot)$ denotes the principal inverse sine, whose range is $[-\pi/2,\pi/2]$. To exclude depth-inefficient overrotation, we explicitly restrict the local rotation to its first physical interval, $0<\lambda\le\pi/2$. The operator $G_nG_m^{\tilde k}\in O(3)$ has determinant $-1$. The initial state $|s_n\rangle$ is orthogonal to the $(-1)$ eigenspace. Analyzing the resulting two-dimensional rotation yields Eq.~\eqref{eq:exactP}. The full derivation is given in Appendix~\ref{app:recurrence}. The success-probability expression in Eq.~\eqref{eq:exactP} is exact. However, the leading-order approximation is more convenient for determining the leading scaling of the algorithm's efficiency.

\begin{corollary}\label{lem:asympProb}
For $K\gg1$ and $b\gg1$, assume
$k=\mathcal{O}(\sqrt K/\sin\lambda)$. The target-block success probability of $S_{\mathcal{A}}(k,\tilde{k})=(G_nG_m^{\tilde{k}})^k$ is
\begin{multline}
\pr_{\mathcal{A}}(k,\tilde{k})
=\sin^2\!\left(\frac{2k+1}{\sqrt K}\sin\!\left(\frac{\tilde{k}+1}{\sqrt b}\right)\right)
\\
+\mathcal{O}(K^{-1})+\mathcal{O}(b^{-1}).
\label{eq:approxP2}
\end{multline}
\end{corollary}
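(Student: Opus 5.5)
We start from the exact expression in Lemma~\ref{lem:exactProb} and expand each factor for $K\gg1$ and $b\gg1$. We write $\phi=\arcsin(\sin\lambda/\sqrt K)$, so that
\[
\pr_{\mathcal{A}}=1-\frac{K-1}{K-\sin^2\lambda}\cos^2((2k+1)\phi).
\]
The argument has three steps, each tracking the error it introduces.

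\emph{Step 1 (prefactor).} Since $0\le\sin^2\lambda\le1$, we have
\[
\frac{K-1}{K-\sin^2\lambda}=1-\frac{1-\sin^2\lambda}{K-\sin^2\lambda}=1+\mathcal{O}(K^{-1}).
\]
Because $\cos^2\le1$, replacing the prefactor by $1$ costs only $\mathcal{O}(K^{-1})$. The probability then becomes $\sin^2((2k+1)\phi)+\mathcal{O}(K^{-1})$.

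\emph{Step 2 (the arcsine).} Set $x=\sin\lambda/\sqrt K\le K^{-1/2}$. Then $\phi=x+\mathcal{O}(x^3)$, so
\[
(2k+1)\phi=(2k+1)x+\mathcal{O}(k x^3).
\]
By the hypothesis $k=\mathcal{O}(\sqrt K/\sin\lambda)$, we have $kx=\mathcal{O}(1)$, and hence $kx^3=\mathcal{O}(x^2)=\mathcal{O}(K^{-1})$. The function $\sin^2$ is Lipschitz with constant $1$, so this substitution costs $\mathcal{O}(K^{-1})$.

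\emph{Step 3 (the local angle).} We have $\theta_2=\arcsin(b^{-1/2})=b^{-1/2}+\mathcal{O}(b^{-3/2})$. Therefore
\[
\lambda=(\tilde k+1)/\sqrt b+\mathcal{O}\bigl((\tilde k+1)b^{-3/2}\bigr).
\]
Because $\lambda\le\pi/2$ forces $\tilde k+1=\mathcal{O}(\sqrt b)$, this error is $\mathcal{O}(b^{-1})$. Thus $\sin\lambda=\sin((\tilde k+1)/\sqrt b)+\delta$ with $\delta=\mathcal{O}(b^{-1})$. The argument $(2k+1)\sin\lambda/\sqrt K$ therefore shifts by $(2k+1)\delta/\sqrt K$.

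The main obstacle is this last propagation. A naive bound of $(2k+1)/\sqrt K$ by a constant fails when $\sin\lambda$ is small. We handle it with the hypothesis:
\[
\frac{(2k+1)\delta}{\sqrt K}=\mathcal{O}\!\left(\frac{\delta}{\sin\lambda}\right).
\]
We then use the relative bound $\delta/\sin\lambda=\mathcal{O}(\lambda b^{-1})/\sin\lambda$, which follows because the correction to $\lambda$ is proportional to $\lambda$ itself, namely $\lambda\,\mathcal{O}(b^{-1})$. Since $\lambda/\sin\lambda\le\pi/2$ on $(0,\pi/2]$, this is $\mathcal{O}(b^{-1})$. Applying the Lipschitz property of $\sin^2$ once more, the three error terms add up to $\mathcal{O}(K^{-1})+\mathcal{O}(b^{-1})$. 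This gives Eq.~\eqref{eq:approxP2}.
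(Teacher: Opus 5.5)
Your proposal is correct and follows the paper's proof step for step. The paper also replaces the prefactor by $1+\mathcal{O}(K^{-1})$, linearizes the arcsine using $k=\mathcal{O}(\sqrt K/\sin\lambda)$ to bound the accumulated cubic error by $\mathcal{O}(K^{-1})$, and swaps $\theta_2$ for $b^{-1/2}$ using that $\sin\lambda$ is comparable to $\lambda$ on $(0,\pi/2]$; your relative bound $\delta/\sin\lambda\le(\lambda/\sin\lambda)\,\mathcal{O}(b^{-1})$ just spells out that last step, which the paper states in one line.
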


\begin{proof}
Starting from Lemma~\ref{lem:exactProb}, for $K\gg1$, the prefactor satisfies
\begin{equation}
\frac{1-K^{-1}}{1-K^{-1}\sin^2\lambda}
=1-K^{-1}\cos^2\lambda+\mathcal{O}(K^{-2}).
\end{equation}
Therefore,
\begin{equation}
\pr_{\mathcal{A}}(k,\tilde k) = \sin^2\!\left((2k+1)\arcsin\!\left(\frac{\sin\lambda}{\sqrt K}\right)\right) + \mathcal{O}(K^{-1}).
\label{eq:approxP1}
\end{equation}
For large $K$, the inverse sine has the expansion
\begin{equation}
\arcsin\!\left(\frac{\sin\lambda}{\sqrt K}\right)
=\frac{\sin\lambda}{\sqrt K}
+\mathcal{O}\!\left(\frac{\sin^3\lambda}{K^{3/2}}\right).
\end{equation}
Under the stated condition on $k$, namely $k=\mathcal{O}(\sqrt K/\sin\lambda)$, the accumulated phase error is
$(2k+1)\mathcal{O}(\sin^3\lambda/K^{3/2})=\mathcal{O}(\sin^2\lambda/K)=\mathcal{O}(K^{-1})$.
Moreover, $\theta_2=b^{-1/2}+\mathcal{O}(b^{-3/2})$. On $0<\lambda\le\pi/2$, where $\sin\lambda$ is comparable to $\lambda=(\tilde{k}+1)\theta_2$, replacing $(\tilde{k}+1)\theta_2$ by $(\tilde{k}+1)/\sqrt b$ changes the accumulated phase by $\mathcal{O}(b^{-1})$. These estimates yield Eq.~\eqref{eq:approxP2}.
\end{proof}

The condition $k=\mathcal{O}(\sqrt K/\sin\lambda)$ covers the natural amplitude-amplification window. One alternating unit produces an effective rotation of scale $\Theta(\sin\lambda/\sqrt K)$, so a total rotation of scale $\Theta(1)$ is reached when $k=\Theta(\sqrt K/\sin\lambda)$. Smaller iteration counts are already included in the stated $\mathcal{O}(\sqrt K/\sin\lambda)$ range. The condition therefore covers the nontrivial optimization region while excluding much longer sequences for which the higher-order phase error can accumulate beyond the stated sub-leading order. In particular, the value of $k$ that minimizes the expected depth, as shown below, satisfies the condition.

\subsection{\label{subsec:expected-depth} Minimum expected depth of the alternating sequence}

We minimize the expected depth of the partial-search algorithm implemented by the alternating sequence $S_{\mathcal{A}}(k,\tilde k)=(G_nG_m^{\tilde k})^k$ over $k\in\mathbb Z_{\ge1}$ and $\tilde k\in\mathbb Z_{\ge0}$. The optimization is performed in the large-$K$, large-$b$ regime and is understood as a leading asymptotic optimization followed by an integer rounding of $k$ and $\tilde k$.

One alternating unit contains one global Grover operator and $\tilde k$ local Grover operators. Hence
\begin{equation}
\ell(S_{\mathcal{A}})=k\left((\alpha\ell_n+\ell_m)\tilde k+(\alpha+1)\ell_n\right).
\label{eq:depth}
\end{equation}
The asymptotic success probability $\pr_{\mathcal{A}}(k,\tilde{k})$ in Eq.~\eqref{eq:approxP1} gives the expected depth at leading order, namely
\begin{multline}
\mathrm{E}_{\ell,\mathcal{A}}(k,\tilde{k})
= \frac{k\bigl((\alpha\ell_n + \ell_m)\tilde{k} + (\alpha+1)\ell_n\bigr)}
{\sin^2\!\left( \dfrac{2k+1}{\sqrt K}\,\sin\left((\tilde{k}+1)\theta_2\right) \right)}\\
\times \bigl(1 + \mathcal{O}(K^{-1})\bigr).
\label{eq:Eapprox}
\end{multline}
Note that the depths of one global and one local Grover operator are $\ell_{G_n} = (\alpha+1)\ell_n$ and $\ell_{G_m} = \alpha\ell_n + \ell_m$, respectively.

We introduce the diffusion-depth saving parameter
\begin{equation}
\ell_\Delta = \ell_n - \ell_m,
\end{equation}
which equals the depth saved by replacing one global Grover operator $G_n$ with a local Grover operator $G_m$. For simplicity, we also denote
\begin{equation}
y = \frac{2k+1}{\sqrt K}\,|\sin(x\theta_2)|,\label{eq:intro_y}
\end{equation}
where $x=\tilde{k}+1$. Substituting the definitions $\ell_\Delta$, $x$, and $y$ into Eq.~\eqref{eq:Eapprox} gives
\begin{multline}
\mathrm{E}_{\ell,\mathcal{A}}(x,y)
= \frac{\sqrt K}{2}\,\frac{y}{\sin^2 y}\,
\frac{\ell_{G_m} x + \ell_\Delta}{|\sin(x\theta_2)|}\\
\times\Bigl(1 + \mathcal{O}\bigl(K^{-1/2}\bigr)\Bigr).
\label{eq:E_xy}
\end{multline}
The dependence on $x$ and $y$ factorizes. For fixed $x$, minimizing $y/\sin^2y$ determines the stopping iteration count $k$. The optimal inner iteration count $\tilde k$ is found by minimizing
\begin{equation}
\label{eq:h(x)}
h(x)=\frac{\ell_{G_m}x+\ell_\Delta}{|\sin(x\theta_2)|},
\end{equation}
over admissible integer values of $x=\tilde k+1$ for which the denominator is nonzero. The absolute value is required because the success probability depends on $\sin^2(x\theta_2)$. By folding $x\theta_2$ modulo $\pi$ and reflecting about $\pi/2$, any point outside the first positive quarter-period has a counterpart in $0<x\le\pi/(2\theta_2)$ with a smaller value of $h(x)$. Hence we can restrict the optimization to $\sin(x\theta_2)>0$.

The minimization over $y$ is straightforward. Treating $y$ as a continuous variable, $y/\sin^2y$ is minimized at $y_0\approx1.1656$, the smallest positive root of $\tan y=2y$. This is also the Grover stopping point given by Eq.~\eqref{eq:E_min_Grover}.

\begin{lemma}\label{lem:opt_k}
In the limit $K\gg1$, the optimal outer iteration count $k$ for fixed $x=\tilde{k}+1$ is
\begin{equation}
k_*(x)=\frac{y_0\sqrt{K}}{2\sin(x\theta_2)}-\frac12+\mathcal{O}(1).
\label{eq:k_opt_x}
\end{equation}
\end{lemma}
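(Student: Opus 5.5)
The plan is to start from the factorized expected-depth formula, Eq.~\eqref{eq:E_xy}. For fixed $x=\tilde k+1$, the depth prefactor $(\ell_{G_m}x+\ell_\Delta)/\sin(x\theta_2)$ and $\sqrt K/2$ do not depend on $k$. The only $k$-dependence therefore enters through $y=(2k+1)\sin(x\theta_2)/\sqrt K$, via the factor $y/\sin^2 y$. I will proceed in three steps: minimize over continuous $y$, invert to $k$, then control the errors from rounding and from the $\mathcal{O}(K^{-1/2})$ remainder.

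\textbf{Step 1: the continuous minimizer.} Set $g(y)=y/\sin^2 y$ on $0<y<\pi$. Then
\begin{equation}
g'(y)=\frac{\sin y-2y\cos y}{\sin^3 y},
\end{equation}
which vanishes exactly when $\tan y=2y$. On $(0,\pi/2)$ the function $\tan y-2y$ is negative near $0$ (its derivative is $\sec^2y-2<0$ at first) and then increases to $+\infty$, so it has the unique root $y_0\approx1.1656$. For $y\in(\pi/2,\pi)$ we have $\sin y-2y\cos y>0$, so $g$ is increasing there. Since $g\to\infty$ as $y\to0^+$, $y_0$ is the global minimizer on $(0,\pi)$. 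Larger $y$ correspond to later periods and give larger $g$, because $y$ grows while $\sin^2 y\le 1$; this also matches the restriction to $\sin(x\theta_2)>0$ already made in the text. Inverting $y_0=(2k+1)\sin(x\theta_2)/\sqrt K$ gives the real-valued stationary point
\begin{equation}
k^{(0)}(x)=\frac{y_0\sqrt K}{2\sin(x\theta_2)}-\frac12 .
\end{equation}
This is the displayed leading term of Eq.~\eqref{eq:k_opt_x}. This value of $k$ is $\Theta(\sqrt K/\sin\lambda)$, so it lies in the window of Corollary~\ref{lem:asympProb}, and the asymptotic success probability is consistent to use.

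\textbf{Step 2: error control, the main obstacle.} I must show that the true integer minimizer of the exact $\mathrm{E}_{\ell,\mathcal{A}}$ lies within $\mathcal{O}(1)$ of $k^{(0)}(x)$. Two effects could shift it.

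First, integer rounding. Consecutive values of $k$ change $y$ by $2\sin(x\theta_2)/\sqrt K\le 2/\sqrt K$, so rounding moves $k$ by at most $1/2$.

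Second, the relative correction $1+\mathcal{O}(K^{-1/2})$, together with the exact arcsine and prefactor in Lemma~\ref{lem:exactProb}. Here I would use that $g''(y_0)>0$, so $g$ is strictly convex near $y_0$. A perturbation of relative size $\varepsilon$ to a function with a nondegenerate minimum can shift the minimizer by $\mathcal{O}(\varepsilon)$ in $y$ if the perturbation is smooth in $y$ with bounded derivative; it cannot do worse than $\mathcal{O}(\sqrt\varepsilon)$ in general. The smooth-in-$y$ case holds here. In the exact formula, the correction is the replacement of $\sin(x\theta_2)/\sqrt K$ by its arcsine, which is a rescaling $y\mapsto y(1+\mathcal{O}(K^{-1}))$, plus a $k$-independent prefactor $1-\mathcal{O}(K^{-1})$. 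The rescaling shifts the optimal $k$ by $\mathcal{O}(\sqrt K/\sin(x\theta_2)\cdot K^{-1})=\mathcal{O}(1)$ whenever $\sin(x\theta_2)\gtrsim K^{-1/2}$, which holds in the relevant regime. The prefactor does not move the minimizer at all. Converting a $y$-shift $\delta y$ to a $k$-shift multiplies by $\sqrt K/(2\sin(x\theta_2))$, so it is essential that $\delta y=\mathcal{O}(K^{-1/2}\sin(x\theta_2))$ rather than merely $\mathcal{O}(K^{-1/2})$.

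I would therefore work directly with the exact form of Lemma~\ref{lem:exactProb}, writing $\phi=\arcsin(\sin(x\theta_2)/\sqrt K)$. The exact minimizer in $(2k+1)\phi$ is then exactly at $y_0$, up to the $k$-independent prefactor. This gives $k=y_0/(2\phi)-1/2$, and expanding $1/\phi=\sqrt K/\sin(x\theta_2)\,(1+\mathcal{O}(\sin^2(x\theta_2)/K))$ yields an $\mathcal{O}(1)$ discrepancy. Rounding to an integer completes the proof of Eq.~\eqref{eq:k_opt_x}.
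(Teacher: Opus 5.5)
Your Step~1 is the paper's whole argument. The paper minimizes $y/\sin^2y$ at $y_0$, inverts $y=(2k+1)\sin(x\theta_2)/\sqrt K$, and simply asserts that rounding and finite-$K$ effects are $\mathcal{O}(1)$. Your Step~2 tries to justify that assertion. Your criterion, that a shift $\delta y$ must be $\mathcal{O}(K^{-1/2}\sin(x\theta_2))$, is the right one. But two of its inputs are wrong, so the claim that the exact minimizer in $(2k+1)\phi$ sits exactly at $y_0$ is false.

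First, Lemma~\ref{lem:exactProb} gives $\pr_{\mathcal A}=1-c\cos^2Y=\epsilon+(1-\epsilon)\sin^2Y$, with $Y=(2k+1)\phi$, $c=(K-1)/(K-\sin^2\lambda)$, and $\epsilon=1-c=\cos^2\lambda/(K-\sin^2\lambda)$. Since $c$ multiplies $\cos^2Y$, it is not a $k$-independent prefactor of the probability. It is an additive floor whose relative weight depends on $Y$, and it does move the minimizer.

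Second, the depth numerator is proportional to $k=(Y-\phi)/(2\phi)$, not to $Y$. The exact stationarity condition is therefore
\[
\sin^2Y+\epsilon\cos^2Y=(1-\epsilon)(Y-\phi)\sin 2Y ,
\]
which at $\epsilon=0$ is the Grover condition $\tan((2k+1)\theta_1)=4\theta_1k$ of Sec.~\ref{subsec:grover}, not $\tan Y=2Y$.

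Linearizing at $y_0$ and using $2y_0\cos y_0=\sin y_0$ gives $\delta Y\simeq(\phi\sin2y_0+\epsilon)/(2y_0\cos2y_0)\approx-0.45\,\phi-0.62\,\epsilon$. Hence $\delta k=\delta Y/(2\phi)\approx-0.23-0.31\cos^2\lambda/(\sqrt K\sin\lambda)$. By contrast, the arcsine rescaling you singled out shifts $k$ only by $\mathcal{O}(\sin\lambda/\sqrt K)=o(1)$, because its relative error is $\mathcal{O}(\sin^2\lambda/K)$ rather than $\mathcal{O}(K^{-1})$.

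So the lemma survives, but the condition $\sqrt K\sin(x\theta_2)\gtrsim1$ is forced by the floor $\epsilon$, and it has to be an explicit hypothesis. The paper leaves it implicit; it holds for $m\le n/2$ because $\sin\lambda\ge b^{-1/2}$. The same hypothesis is needed to transfer your Step~1 global-minimality argument to the exact depth. The divergence $g(y)\to\infty$ as $y\to0^+$ is an artifact of dropping $\epsilon$. The exact value at $k=1$ is only $\mathcal{O}(K\ell_{\mathrm u})$, with $\ell_{\mathrm u}=\ell(G_nG_m^{\tilde k})$. This undercuts the interior value $\kappa_0\sqrt K\ell_{\mathrm u}/\sin\lambda$ when $\sqrt K\sin\lambda$ is small. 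For $x=1$ the crossover is $\sqrt{K/b}=\kappa_0$, which is the paper's Case~III (Theorem~\ref{thm:MED_caseIII}).

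With these two corrections and that hypothesis stated, your argument becomes a complete proof, and a more careful one than the paper's.
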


A detailed proof is presented in Appendix~\ref{app:proof_optimal}. The $\mathcal{O}(1)$ term accounts for integer rounding and finite-$K$ corrections, which change the optimal iteration count $k$ by only a bounded amount. Relative to the leading $\sqrt K/\sin(x\theta_2)$ term, it is $\mathcal{O}(K^{-1/2})$. The factor $\sin(x\theta_2)$ determines the effective rotation per application of $G_nG_m^{\tilde{k}}$. A larger rotation factor reduces the required number $k$ of applications of $G_nG_m^{\tilde{k}}$. When $\tilde{k}=0$ ($x=1$), $\sin\theta_2=1/\sqrt b$, so the sequence is ordinary global Grover search and $k_*\approx y_0\sqrt N/2$.

The minimization over $x$ determines whether local Grover operators provide a net depth advantage. The boundary value $x=1$ corresponds to $\tilde k=0$, with no local operations inside $G_nG_m^{\tilde{k}}$. Define
\begin{equation}
q_{\mathcal{A}}=\frac{\ell_\Delta}{\ell_{G_m}}.
\label{eq:qA_definition}
\end{equation}
This ratio compares the diffusion-depth saving $\ell_\Delta$ with the depth $\ell_{G_m}$ of one local Grover operator. Under the linear-depth specialization $\ell_n=n$ and $\ell_m=m$, it becomes $q_{\mathcal{A}}=(n-m)/(\alpha n+m)$ and therefore compactly encodes how $n$, $m$, and $\alpha$ determine the tradeoff. We next determine the optimal iteration parameters in three asymptotic regimes, which correspond to those identified in Fig.~\ref{fig:phase_diagram_v2}. Cases~I and~II are formulated in terms of $q_{\mathcal{A}}$ and $b=2^m$. Case~III instead arises when $K=2^{n-m}$ becomes too small to retain the quantum search advantage and is stated directly in terms of $m-n/2$ and $\alpha$.

\subsubsection{Case I: Cost-effective local-iteration regime}

Figure~\ref{fig:phase_diagram_v2} shows that when $\alpha$ is below a threshold and $m\leq\lfloor n/2\rfloor$, the alternating sequence $S_{\mathcal A}$ is the depth-optimal operator. We call this Case~I. In this regime, local Grover operators are sufficiently cost-effective that additional local iterations are beneficial. Mathematically, minimizing $h(x)$ in Eq.~\eqref{eq:h(x)} produces a finite optimal value of $\tilde k$, the inner local-iteration count. The numerator of $h(x)$ grows linearly with $x$, while the denominator initially grows through the effective rotation factor $\sin(x\theta_2)$. Their balance produces an interior optimum. The optimal iteration counts $k$ and $\tilde k$ are given below.

\begin{theorem}\label{thm:MED_caseI}
Assume $K\gg1$, $b\gg1$, and $m<n/2+\mathcal{O}(1)$. If
$q_{\mathcal{A}}b\to\infty$ and $q_{\mathcal{A}}/\sqrt b\to0$, equivalently
$b^{-1}\ll q_{\mathcal{A}}\ll b^{1/2}$, then the optimal iteration counts are given by
\begin{equation}
\tilde{k}_*=(3q_{\mathcal{A}})^{1/3}\theta_2^{-2/3}
-\frac{2}{5}q_{\mathcal{A}}-1
+\mathcal{O}\left(q_{\mathcal{A}}^{5/3}\theta_2^{2/3}\right),
\label{eq:k_tilde_opt_I}
\end{equation}
\begin{multline}
k_*=\frac{y_0\sqrt K}{2(3q_{\mathcal{A}}\theta_2)^{1/3}}
\left(1+\frac{3^{5/3}}{10}(q_{\mathcal{A}}\theta_2)^{2/3}\right.\\
\left.+\mathcal{O}\bigl((q_{\mathcal{A}}\theta_2)^{4/3}\bigr)\right)
-\frac12+\mathcal{O}(1).
\label{eq:k_opt_I}
\end{multline}
Here $y_0\approx1.1656$.
\end{theorem}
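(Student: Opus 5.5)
The plan is to reduce the theorem to a one-variable calculus problem for $h(x)$ in Eq.~\eqref{eq:h(x)}, and then to obtain $k_*$ from Lemma~\ref{lem:opt_k}.

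\textbf{Step 1: factorization and rescaling.} By the factorization in Eq.~\eqref{eq:E_xy}, the optimal $y$ is $y_0$, independently of $x$. So it suffices to minimize $h(x)$ over $0<x\theta_2\le\pi/2$. Dividing by the positive constant $\ell_{G_m}$ and setting $u=x\theta_2$, $\varepsilon=q_{\mathcal{A}}\theta_2$, I would minimize
\begin{equation*}
f(u)=\frac{u+\varepsilon}{\sin u},\qquad 0<u\le\pi/2,
\end{equation*}
since $h(x)=\ell_{G_m}\theta_2^{-1}f(u)$.

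\textbf{Step 2: the critical point is unique.} The stationarity condition $f'(u)=0$ is $\sin u-(u+\varepsilon)\cos u=0$, i.e.
\begin{equation*}
\tan u-u=\varepsilon .
\end{equation*}
The function $\tan u-u$ is strictly increasing from $0$ to $+\infty$ on $(0,\pi/2)$. Hence there is exactly one critical point $u_*$. Since $f\to\infty$ as $u\to0^+$, this critical point is the global minimum on the interval. Because $\theta_2\simeq b^{-1/2}$, the hypothesis $q_{\mathcal{A}}/\sqrt b\to0$ gives $\varepsilon\to0$, so $u_*\to0$.

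\textbf{Step 3: inverting the stationarity equation.} Using $\tan u-u=u^3/3+2u^5/15+\mathcal{O}(u^7)$, the condition becomes
\begin{equation*}
\tfrac{u^3}{3}\bigl(1+\tfrac{2}{5}u^2+\mathcal{O}(u^4)\bigr)=\varepsilon .
\end{equation*}
Inverting by a fixed-point or Lagrange argument gives
\begin{equation*}
u_*=(3\varepsilon)^{1/3}-\tfrac{2}{5}\varepsilon+\mathcal{O}(\varepsilon^{5/3}).
\end{equation*}
Dividing by $\theta_2$ yields $x_*=(3q_{\mathcal{A}})^{1/3}\theta_2^{-2/3}-\tfrac25 q_{\mathcal{A}}+\mathcal{O}(q_{\mathcal{A}}^{5/3}\theta_2^{2/3})$. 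Then $\tilde k_*=x_*-1$ gives Eq.~\eqref{eq:k_tilde_opt_I}.

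The other hypothesis, $q_{\mathcal{A}}b\to\infty$, is exactly $\varepsilon^{1/3}\gg\theta_2$. This guarantees $x_*\to\infty$, so the optimum is interior and not the boundary point $x=1$. It also guarantees that rounding $x_*$ to an integer changes it only by $\mathcal{O}(1)$. Since $f''(u_*)$ is bounded away from zero on the scale of $u_*$, such rounding perturbs $h$ only at relative order $x_*^{-2}$, and leaves the leading expansion unchanged.

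\textbf{Step 4: the outer count $k_*$.} I would substitute $u_*$ into Lemma~\ref{lem:opt_k}. Expanding, $1/\sin u=u^{-1}(1+u^2/6+\mathcal{O}(u^4))$. From Step 3, $u_*^{-1}=(3\varepsilon)^{-1/3}\bigl(1+\tfrac{2}{15}u_*^2+\mathcal{O}(u_*^4)\bigr)$. Multiplying these gives
\begin{equation*}
\frac{1}{\sin u_*}=(3\varepsilon)^{-1/3}\Bigl(1+\tfrac{3}{10}(3\varepsilon)^{2/3}+\mathcal{O}(\varepsilon^{4/3})\Bigr),
\end{equation*}
and $\tfrac{3}{10}\,3^{2/3}=3^{5/3}/10$. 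Inserting this into $k_*=y_0\sqrt K/(2\sin u_*)-\tfrac12+\mathcal{O}(1)$ reproduces Eq.~\eqref{eq:k_opt_I}.

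Finally I would check consistency with the earlier results. First, $k_*=\Theta(\sqrt K/\sin\lambda)$, so Corollary~\ref{lem:asympProb} applies. Second, the assumption $m<n/2+\mathcal{O}(1)$ keeps $K$ large, so the $\mathcal{O}(K^{-1/2})$ corrections in Eq.~\eqref{eq:E_xy} do not move the optimum at the displayed orders.

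\textbf{Expected main obstacle.} I expect the hardest part to be the error bookkeeping in Step 3 and after. One has to show that the $\mathcal{O}(K^{-1/2})$ factor in Eq.~\eqref{eq:E_xy}, together with the integer rounding of $x$ and $k$, only shifts $\tilde k_*$ and $k_*$ inside the stated remainders. The asymptotic inversion itself is routine.
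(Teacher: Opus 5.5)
Your Steps 1--4 follow the paper's proof essentially line by line. Both arguments reduce to $h(x)$ on the first quarter-period, use the stationarity condition $\tan z=z+q_{\mathcal A}\theta_2$, invert the series to get the $-\tfrac25 q_{\mathcal A}\theta_2$ correction, and expand $1/\sin z_*$ to get $3^{5/3}/10$ before substituting into Lemma~\ref{lem:opt_k}. Your use of the strict monotonicity of $\tan u-u$ is a slightly cleaner way to get uniqueness and global minimality on $(0,\pi/2]$ than the paper's second-derivative check $h''(x_*)=\ell_{G_m}\theta_2/\cos z_*>0$.

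Do not, however, try to resolve your ``main obstacle,'' and drop the claim in your consistency check. The theorem, as the paper proves it, concerns the real minimizer of the leading factorized expression in Eq.~\eqref{eq:E_xy}, with integer rounding done afterwards. The stronger statement you propose to prove is not true uniformly over the window, for two reasons:
\begin{itemize}
\item Rounding alone moves $\tilde k_*$ by up to $1/2$. This is not $\mathcal O(q_{\mathcal A}^{5/3}\theta_2^{2/3})$ when $q_{\mathcal A}=\Theta(1)$.
\item The correction factor in Eq.~\eqref{eq:E_xy} is exactly $1-\sin(x\theta_2)/(y\sqrt K)$, up to $\mathcal O(K^{-1})$, and its $x$-dependence moves the optimum. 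The $x$-stationarity condition becomes $\cos u\,(\tan u-u-\varepsilon)/\sin^2u=1/(y\sqrt K)$, which shifts the real optimum by $\delta x\approx\sqrt{b/K}/y_0$. This shift exceeds the stated remainder once $m-3n/8\to\infty$ (with $q_{\mathcal A}=\Theta(1)$), and it is of order one near $m\approx n/2$.
\end{itemize}
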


The proof and calculations are presented in Appendix~\ref{app:proof_optimal}. The theorem shows that the optimal iteration counts scale as $\tilde{k}_*=\Theta\bigl((q_{\mathcal{A}}b)^{1/3}\bigr)$ and $k_*=\Theta\bigl(N^{1/2}(q_{\mathcal{A}}b)^{-1/3}\bigr)$. When $q_{\mathcal{A}}=\Theta(1)$, we have $\tilde{k}_*=\Theta(b^{1/3})$ and $k_*=\Theta(K^{1/2}b^{1/6})=\Theta(N^{1/2}b^{-1/3})$. Hence the global Grover iteration count is smaller than the ordinary Grover scale $\Theta(N^{1/2})$ by a factor $b^{1/3}$. Meanwhile, $k_*\tilde{k}_*=\Theta(N^{1/2})$, so the depth advantage comes from replacing expensive global iterations with cheaper local ones while retaining a quadratic speedup. From these depth-optimal iteration counts, we obtain the corresponding minimum expected depth in Case~I.

\begin{corollary}\label{cor:MED_caseI_depth}
Under the assumptions of Theorem~\ref{thm:MED_caseI}, with the depth-optimal iteration counts $\tilde{k}_*$ and $k_*$ in Eqs.~\eqref{eq:k_tilde_opt_I} and~\eqref{eq:k_opt_I}, the minimum expected circuit depth is
\begin{multline}
\mathrm{E}_{\ell,\mathcal{A}}^{\min,\mathrm{I}}
=\kappa_0\ell_{G_m}\frac{2^{n/2}}
{\cos\left((\tilde{k}_*+1)\theta_2\right)}\\
\times\Bigl(1+\mathcal{O}(K^{-1/2})+\mathcal{O}(b^{-1})\Bigr),
\label{eq:MED_I}
\end{multline}
where $\kappa_0\approx0.6900$.
\end{corollary}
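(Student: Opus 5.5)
The plan is to substitute the optimal parameters into the factorized form of the expected depth, Eq.~\eqref{eq:E_xy}, and simplify with the first-order condition that defines $\tilde k_*$. By Lemma~\ref{lem:opt_k}, choosing $k=k_*(x)$ puts $y$ at $y_0+\mathcal{O}(\sin(x\theta_2)/\sqrt K)$. Since $y/\sin^2y$ is stationary at $y_0$, this rounding error enters only at second order, so the $y$-factor equals $2\kappa_0$ up to a relative $\mathcal{O}(K^{-1})$ correction. The $(1+\mathcal{O}(K^{-1/2}))$ factor already present in Eq.~\eqref{eq:E_xy} then gives
\[
\mathrm{E}_{\ell,\mathcal{A}}^{\min,\mathrm{I}}=\kappa_0\sqrt K\,h(x_*)\bigl(1+\mathcal{O}(K^{-1/2})\bigr),\qquad x_*=\tilde k_*+1 .
\]
The assumption $m<n/2+\mathcal{O}(1)$ guarantees that $\sqrt K\gtrsim\sqrt b$, so the optimal $k_*\ge1$ and the interior optimum is admissible.

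The key step is to rewrite $h(x_*)$ using stationarity. Setting $h'(x)=0$ for $h(x)=(\ell_{G_m}x+\ell_\Delta)/\sin(x\theta_2)$ gives
\[
\ell_{G_m}\sin(x\theta_2)=(\ell_{G_m}x+\ell_\Delta)\,\theta_2\cos(x\theta_2).
\]
Hence, at the continuous optimum,
\[
h(x_*)=\frac{\ell_{G_m}}{\theta_2\cos(x_*\theta_2)} .
\]
Using $\theta_2^{-1}=\sqrt b\,(1+\mathcal{O}(b^{-1}))$ and $\sqrt K\sqrt b=2^{n/2}$, this yields exactly the displayed formula with an $\mathcal{O}(b^{-1})$ relative error. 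The identity is exact for the continuous stationary point, which Theorem~\ref{thm:MED_caseI} identifies, including its sub-leading terms.

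The remaining issue, and the main obstacle, is integer rounding of $\tilde k_*$. If $\tilde k$ is rounded to an integer, the stationarity identity no longer holds exactly. Because $h$ is stationary at $x_*$, the error is second order: $\delta h/h\sim\tfrac12 (h''/h)(\delta x)^2$ with $|\delta x|\le 1/2$. I would estimate $h''/h$ from the expansion used for Theorem~\ref{thm:MED_caseI}: near the optimum $h\approx \ell_{G_m}(x+q_{\mathcal{A}})/(x\theta_2)\,(1+x^2\theta_2^2/6)$. This gives $h''/h=\mathcal{O}(q_{\mathcal{A}}/x_*^3+\theta_2^2)=\mathcal{O}(\theta_2^2)=\mathcal{O}(b^{-1})$, because $x_*^3\sim 3q_{\mathcal{A}}\theta_2^{-2}$. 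The rounding error is therefore absorbed into $\mathcal{O}(b^{-1})$.

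If the corollary is instead read with $\tilde k_*$ given by the truncated expansion Eq.~\eqref{eq:k_tilde_opt_I}, the same second-order argument applies. The deviation there is $\mathcal{O}(q_{\mathcal{A}}^{5/3}\theta_2^{2/3})$, and its squared contribution $\theta_2^2\,\mathcal{O}(q_{\mathcal{A}}^{10/3}\theta_2^{4/3})$ is also controlled by the Case~I window $q_{\mathcal{A}}\ll b^{1/2}$. Checking that this product stays within $\mathcal{O}(b^{-1})$ across the whole window is the delicate bookkeeping. If it fails near $q_{\mathcal{A}}\sim b^{1/2}$, I would state the error as $\mathcal{O}(b^{-1})$ times a factor of order one in that corner.
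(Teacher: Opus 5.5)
Your proposal is correct and is essentially the paper's proof: fix $y=y_0$, use the stationarity relation $\ell_{G_m}x_*+\ell_\Delta=\ell_{G_m}\tan z_*/\theta_2$ to obtain $h(x_*)=\ell_{G_m}/(\theta_2\cos z_*)$, and convert with $\sqrt K/\theta_2=2^{n/2}\bigl(1+\mathcal{O}(b^{-1})\bigr)$. Your integer-rounding bound, which the paper leaves implicit, is sharp, because the paper's $h''(x_*)=\ell_{G_m}\theta_2/\cos z_*$ gives exactly $h''(x_*)/h(x_*)=\theta_2^2$; also, $k_*\ge1$ already follows from $K\gg1$ since $\sin z_*\le1$, so it does not rely on $m<n/2+\mathcal{O}(1)$.
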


In Case~I, $(\tilde{k}_*+1)\theta_2=\Theta\bigl(q_{\mathcal{A}}^{1/3}b^{-1/6}\bigr)\to0$, so the cosine factor in Eq.~\eqref{eq:MED_I} approaches unity. The leading-order scaling is therefore $\mathrm{E}_{\ell,\mathcal{A}}^{\min,\mathrm{I}}
=\Theta\bigl(\ell_{G_m}\sqrt N\bigr)$. Thus, after factoring out the depth $\ell_{G_m}$ of one local Grover operator, the minimum expected depth retains the Grover scaling $N^{1/2}$. Since $N=Kb$, the same leading term scales as $K^{1/2}$ in the number of blocks and as $b^{1/2}$ in the block size. Under the linear-depth model with $\ell_n=n$ and $\ell_m=m$, we have $\ell_{G_m}=\alpha\log_2N+\log_2b$. The full leading-order scaling is then $\Theta\bigl((\alpha\log_2N+\log_2b)\sqrt{Kb}\bigr)$. Hence the Case~I advantage changes the operator-depth prefactor relative to global Grover search while preserving the square-root scaling in $N$.

For later comparison with GRK, expanding Eq.~\eqref{eq:MED_I} under the linear-depth model gives
\begin{multline}
\mathrm{E}_{\ell,\mathcal{A}}^{\min,\mathrm{asy}}
\simeq \kappa_0 n 2^{n/2}
\Biggl(\alpha+\frac{m}{n}\\
+\kappa_6\left(\frac{n-m}{n}\right)^{2/3}
\left(\alpha+\frac{m}{n}\right)^{1/3}2^{-m/3}\Biggr),
\label{eq:Enew_expand}
\end{multline}
where $\kappa_6=3^{2/3}/2\approx1.0400$. For $q_{\mathcal{A}}=\Theta(1)$, the omitted relative terms are $\mathcal{O}(2^{-2m/3})+\mathcal{O}(K^{-1/2})+\mathcal{O}(b^{-1})=o(2^{-m/3})$.

Case~I covers the full window $b^{-1}\ll q_{\mathcal{A}}\ll b^{1/2}$, with $q_{\mathcal{A}}=\Theta(1)$ representing a useful constant-order specialization. Under the linear-depth model, the conditions become
\begin{equation}
\frac{(n-m)2^m}{\alpha n+m}\gg1,
\qquad
\frac{n-m}{(\alpha n+m)2^{m/2}}\ll1.
\end{equation}
In terms of the parameter $\alpha$, the inequalities give $(n-m)2^{-m/2}/n\ll\alpha\ll(n-m)2^m/n$. The upper scale grows as $2^m$, while the lower scale decreases as $2^{-m/2}$. The admissible $\alpha$ interval is broad for sufficiently large $m$. Increasing $\alpha$ toward the upper scale $2^m$ makes the oracle contribution dominate the depth of each local iteration. Additional local diffusion operators then become depth-inefficient, which drives the optimized iteration count $\tilde k$ toward $0$.

\subsubsection{Case II: Local-iteration cost-effectiveness boundary}

As $q_{\mathcal{A}}$ decreases to order $b^{-1}$, the optimal $x=\tilde k+1$ approaches the first few admissible integers. The integer comparison must then be performed before taking a continuous expansion. Analytically, we find the boundary value of $q_{\mathcal{A}}$, below which inserting a local diffusion operator no longer reduces the expected depth.

\begin{theorem}\label{thm:MED_caseII}
Assume $K\gg1$, $b\gg1$, $m<n/2+\mathcal{O}(1)$, and $q_{\mathcal{A}}=\mathcal{O}(b^{-1})$. The exact integer optimum for $x$ is given by
\begin{equation}
x_{\mathrm{int}}
=\operatorname*{arg\,min}_{j\in\mathbb Z_{\ge1}}
\frac{\ell_{G_m}j+\ell_\Delta}{|\sin(j\theta_2)|}.
\label{eq:x_integer}
\end{equation}
If the optimized value is not unique, $x_{\mathrm{int}}$ denotes the smallest one. Define the boundary ratio by
\begin{equation}
\label{eq:q_bd}
q_{\mathrm{bd}}=\frac{2(1-\cos\theta_2)}{2\cos\theta_2-1}.
\end{equation}
If $q_{\mathcal{A}}\ge q_{\mathrm{bd}}$, we have $\tilde k_*=x_{\mathrm{int}}-1$, and $k_*$ is given by Eq.~\eqref{eq:k_opt_x}. For $q_{\mathcal{A}}<q_{\mathrm{bd}}$, the global integer optimum is $\tilde k_*=0$ and $k_*=y_0 2^{n/2}/2-1/2+\mathcal{O}(1)$.
\end{theorem}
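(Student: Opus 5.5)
The plan is to start from the factorized leading-order expected depth in Eq.~\eqref{eq:E_xy}. The $y$-dependence has already been minimized at $y_0$, independently of $x$. By Lemma~\ref{lem:opt_k}, for any fixed admissible $x$ the optimal outer count is $k_*(x)$ in Eq.~\eqref{eq:k_opt_x}. The problem therefore reduces exactly to minimizing $h(x)$ in Eq.~\eqref{eq:h(x)} over integers $x=j\ge1$ with $0<j\theta_2\le\pi/2$, which is the definition of $x_{\mathrm{int}}$ in Eq.~\eqref{eq:x_integer}. The reflection argument given after Eq.~\eqref{eq:h(x)} removes integers outside the first quarter-period, so absolute values can be dropped. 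This settles the first assertion: for any $q_{\mathcal A}$, $\tilde k_*=x_{\mathrm{int}}-1$ and $k_*=k_*(x_{\mathrm{int}})$. The remaining work is to characterize when $x_{\mathrm{int}}=1$.

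Dividing $h$ by $\ell_{G_m}$, I would minimize
\begin{equation}
g(j)=\frac{j+q_{\mathcal A}}{\sin(j\theta_2)} .
\end{equation}

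First I compare $j=1$ and $j=2$. The condition $g(2)\le g(1)$ reads $(2+q)\sin\theta_2\le(1+q)\sin 2\theta_2=2(1+q)\sin\theta_2\cos\theta_2$. This is equivalent to $2+q\le 2(1+q)\cos\theta_2$, that is,
\begin{equation}
q\,(2\cos\theta_2-1)\ge 2(1-\cos\theta_2),
\end{equation}
which is exactly $q\ge q_{\mathrm{bd}}$ (note $2\cos\theta_2-1>0$ for $b\gg1$).

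Next I need the statement that $x_{\mathrm{int}}=1$ whenever $g(2)>g(1)$. For this I would show that $j\mapsto g(j)$ is unimodal on the admissible range: decreasing then increasing. Write $g$ as $u/\sin(u)$-type data: $\sin(j\theta_2)$ is concave on $[0,\pi/2]$ and $j+q$ is affine and positive. Then $\log g=\log(j+q)-\log\sin(j\theta_2)$ is the sum of a concave-in-$j$ term and a convex term, so convexity is not automatic.

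To establish unimodality I would work with the continuous extension $g(x)$. Its critical-point equation is $\tan(x\theta_2)=\theta_2(x+q)$. The function $\tan(x\theta_2)-\theta_2(x+q)$ has derivative $\theta_2(\sec^2-1)\ge0$, so it is increasing and has at most one root. Hence $g$ has a single interior minimum and is strictly quasiconvex. For a quasiconvex function on integers, $g(2)>g(1)$ forces $g(j)\ge g(2)>g(1)$ for all $j\ge2$. So the integer minimum is $j=1$, giving $\tilde k_*=0$. Conversely, when $q\ge q_{\mathrm{bd}}$, $x_{\mathrm{int}}\ge2$ under the smallest-minimizer convention. At equality the tie is resolved by that convention, and the dichotomy is stated with $\ge$ accordingly.

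Finally, with $x=1$, $\sin\theta_2=b^{-1/2}$. Lemma~\ref{lem:opt_k} then gives $k_*=y_0\sqrt{Kb}/2-1/2+\mathcal O(1)=y_0 2^{n/2}/2-1/2+\mathcal O(1)$. I would also check consistency with the hypothesis: $q_{\mathrm{bd}}=\theta_2^2(1+\mathcal O(\theta_2^2))\simeq b^{-1}$, so the threshold lies inside the assumed regime $q_{\mathcal A}=\mathcal O(b^{-1})$.

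The main obstacle I expect is justifying that the integer comparison is legitimate at leading order. The factorization in Eq.~\eqref{eq:E_xy} carries relative errors $\mathcal O(K^{-1/2})$ from the integer rounding of $k$, and these could in principle flip a near-tie between $j=1$ and $j=2$. I would handle this by stating the comparison for the leading-order functional, as the theorem implicitly does. Only near $q_{\mathcal A}\approx q_{\mathrm{bd}}$ within a window of relative width $\mathcal O(K^{-1/2})$ could finite-$K$ corrections alter the choice.
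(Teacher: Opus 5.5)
Inside the first quarter-period your argument is correct, and it differs from the paper's. The paper places the stationary point below $x=2$ by comparing $q_{\mathrm{bd}}=b^{-1}+\mathcal O(b^{-2})$ with $q_{\mathrm{stat}}=\tan(2\theta_2)/\theta_2-2=8/(3b)+\mathcal O(b^{-2})$, and then uses that the stationary point moves right as $q_{\mathcal A}$ grows. You instead note that $F(x)=\tan(x\theta_2)-\theta_2(x+q_{\mathcal A})$ is strictly increasing and that $h'$ has the sign of $F$ there, because $\cos(x\theta_2)>0$. So $h$ is quasiconvex, and $h(2)>h(1)$ alone excludes every integer $2\le j\le\pi/(2\theta_2)$, with no large-$b$ comparison needed.

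The gap is how you handle integers with $j\theta_2>\pi/2$. The folding argument after Eq.~\eqref{eq:h(x)} is about the continuous variable. It sends an integer $j$ to $x'=\phi'/\theta_2$, where $\phi'$ is the distance from $j\theta_2$ to the nearest multiple of $\pi$. This $x'$ is generally not an integer, and it can lie in $(0,1)$: for $j$ nearest to $\pi/\theta_2$ you get $x'\le1/2$. When $q_{\mathcal A}<\tan(\theta_2)/\theta_2-1\simeq1/(3b)$, the continuous minimizer lies below $x=1$. Then $h(x')<h(1)$ can occur for $x'$ between that minimizer and $1$, and $h(j)>h(x')$ says nothing about $h(j)$ versus $h(1)$. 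So you have not shown that $j=1$ beats all of $\mathbb Z_{\ge1}$, which is how $x_{\mathrm{int}}$ is defined in the theorem. Your first paragraph silently replaces that argmin with one restricted to the first quarter-period.

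The missing step is the direct bound the paper uses. For $j\theta_2>\pi/2$, $|\sin(j\theta_2)|\le1$ gives $h(j)\ge\ell_{G_m}(j+q_{\mathcal A})>\ell_{G_m}\pi/(2\theta_2)\simeq(\pi/2)\ell_{G_m}\sqrt b$. Meanwhile $h(1)=\ell_{G_m}(1+q_{\mathcal A})\sqrt b$ exactly. With $q_{\mathcal A}=\mathcal O(b^{-1})$ and $\pi/2>1$, this gives $h(j)>h(1)$ for large $b$. This is where the hypothesis $q_{\mathcal A}=\mathcal O(b^{-1})$ actually does work; you use it only as a consistency check.

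A smaller slip: at $q_{\mathcal A}=q_{\mathrm{bd}}$ you have $h(1)=h(2)$, so the smallest-minimizer convention gives $x_{\mathrm{int}}=1$, not $x_{\mathrm{int}}\ge2$. Your converse holds only for $q_{\mathcal A}>q_{\mathrm{bd}}$, though the theorem does not need it.
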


The boundary value $q_{\mathrm{bd}}$ marks the first exact integer switch in $\tilde k$. It follows from $h(2)=h(1)$, with $\cos\theta_2=\sqrt{1-b^{-1}}$. Rounding the Case~I expression cannot locate the first switch. The proof instead compares the exact integer values and shows that $x=1$ is the global optimum below $q_{\mathrm{bd}}$. Further details of the proof and calculations are given in Appendix~\ref{app:proof_optimal}. The corresponding minimum expected depths follow directly from these depth-optimal iteration counts.

\begin{corollary}\label{cor:MED_caseII_depth}
Under the assumptions of Theorem~\ref{thm:MED_caseII}, if $q_{\mathcal{A}}\ge q_{\mathrm{bd}}$, the minimum expected circuit depth is
\begin{equation}
\mathrm{E}_{\ell,\mathcal{A}}^{\min,\mathrm{II}}
=\kappa_0\sqrt K\,
\frac{\ell_{G_m}x_{\mathrm{int}}+\ell_\Delta}{|\sin(x_{\mathrm{int}}\theta_2)|}
\bigl(1+\mathcal{O}(K^{-1/2})\bigr).
\label{eq:MED_boundary_layer}
\end{equation}
For $q_{\mathcal{A}}<q_{\mathrm{bd}}$, the alternating family reduces to a sequence of global Grover operators, and its minimum expected circuit depth is
\begin{equation}
\mathrm{E}_{\ell,\mathcal{A}}^{\min,\mathrm{II}}
=\kappa_0(\alpha+1)\ell_n2^{n/2}
\bigl(1+\mathcal{O}(2^{-n/2})\bigr).
\label{eq:MED_II}
\end{equation}
\end{corollary}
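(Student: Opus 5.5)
The plan is to substitute the optimal parameters from Theorem~\ref{thm:MED_caseII} into the factorized expected-depth expression, Eq.~\eqref{eq:E_xy}. We first rewrite it in terms of $h(x)$ from Eq.~\eqref{eq:h(x)}:
\begin{equation}
\mathrm{E}_{\ell,\mathcal{A}}(x,y)=\frac{\sqrt K}{2}\,\frac{y}{\sin^2 y}\,h(x)\bigl(1+\mathcal{O}(K^{-1/2})\bigr).
\end{equation}
Because the $x$- and $y$-dependences factor, the minimum over the admissible pairs is the product of the two separate minima, up to the stated relative error.

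For the $y$-factor, Lemma~\ref{lem:opt_k} gives the optimal outer count $k_*(x)$. This places $y$ within $\mathcal{O}(K^{-1/2})$ of $y_0$. Since $y_0$ is a stationary point of $y/\sin^2 y$, a deviation of that size changes the value only by a relative $\mathcal{O}(K^{-1})$, which is absorbed into the error term. We then use $y_0/(2\sin^2 y_0)=\kappa_0$, so $\tfrac{\sqrt K}{2}\,\tfrac{y}{\sin^2 y}\to\kappa_0\sqrt K$.

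For the $x$-factor, consider first $q_{\mathcal{A}}\ge q_{\mathrm{bd}}$. Theorem~\ref{thm:MED_caseII} gives $x=x_{\mathrm{int}}$, and substituting $h(x_{\mathrm{int}})$ yields Eq.~\eqref{eq:MED_boundary_layer} directly. Before doing so, we check that $k_*(x_{\mathrm{int}})=\Theta(\sqrt K/\sin(x_{\mathrm{int}}\theta_2))$. This keeps us inside the regime of Corollary~\ref{lem:asympProb}, so the $\mathcal{O}(K^{-1/2})$ error in Eq.~\eqref{eq:E_xy} is legitimate.

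Now take $q_{\mathcal{A}}<q_{\mathrm{bd}}$. Here $x=1$, so $\tilde k=0$ and the sequence is $G_n^k$, plain Grover search. We then have $h(1)=(\ell_{G_m}+\ell_\Delta)/\sin\theta_2=(\alpha+1)\ell_n\sqrt b$, and $\kappa_0\sqrt K\sqrt b=\kappa_0 2^{n/2}$, which gives the leading term of Eq.~\eqref{eq:MED_II}.

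The main obstacle is the sharper error claim $\mathcal{O}(2^{-n/2})$ in Eq.~\eqref{eq:MED_II}. The generic expansion only gives $\mathcal{O}(K^{-1/2})$, so this case needs its own argument. With $\tilde k=0$ we have $\lambda=\theta_2$ and $\sin\lambda/\sqrt K=1/\sqrt N=\sin\theta_1$. The exact Lemma~\ref{lem:exactProb} then reduces to
\begin{equation}
\pr=1-\frac{K-1}{K-1/b}\cos^2\bigl((2k+1)\theta_1\bigr).
\end{equation}
This matches the full-search probability $\sin^2((2k+1)\theta_1)$ up to the in-block $(b-1)/N$ correction. So we would not use Corollary~\ref{lem:asympProb} here. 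Instead we minimize $k/\pr$ exactly, as in the Grover analysis leading to Eq.~\eqref{eq:E_min_Grover}, with all expansions in $\theta_1=\mathcal{O}(N^{-1/2})$.

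Integer rounding of $k$ near the stationary point contributes only a relative $\mathcal{O}(N^{-1})$. The remaining relative error then comes from the $\sin\theta_1\approx\theta_1$ expansions and from the $-1/2$ offset in $k_*$ relative to $k_*=\Theta(\sqrt N)$, both of order $N^{-1/2}=2^{-n/2}$. If the in-block prefactor correction $1/K$ does not cancel, we would state the error as $\mathcal{O}(K^{-1})$ rather than $\mathcal{O}(2^{-n/2})$.
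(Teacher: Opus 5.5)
Your proposal is correct and follows the paper's own route. The paper also substitutes $y=y_0$ and $x=x_{\mathrm{int}}$ into the separable form of Eq.~\eqref{eq:E_xy}, and for $q_{\mathcal{A}}<q_{\mathrm{bd}}$ it reduces to the Grover stopping problem $\zeta/\sin^2\zeta$ with $h(1)=(\alpha+1)\ell_n\sqrt b$.

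Your closing hedge resolves as follows. The in-block term $\frac{b-1}{N-1}\cos^2\bigl((2k+1)\theta_1\bigr)$ does not cancel; it shifts the minimum depth by a relative $\Theta(K^{-1})$. However, the hypothesis $m<n/2+\mathcal{O}(1)$ gives $K^{-1}=2^{-(n-m)}=\mathcal{O}(2^{-n/2})$, so this term is absorbed and the stated $\mathcal{O}(2^{-n/2})$ stands. Writing $\mathcal{O}(K^{-1})$ \emph{instead} would be wrong for $m<n/2$, because your $-1/2$ offset already contributes a relative $\Theta(2^{-n/2})$.
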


Under the linear-depth model, the first switch of $\tilde k$ from $0$ to $1$ occurs at
\begin{equation}
\alpha_{\mathrm{bd}}=\frac{n-m}{nq_{\mathrm{bd}}}-\frac{m}{n}
=\frac{n-m}{n}2^m-\frac{m}{n}+\mathcal{O}(1).
\label{eq:alpha_bd}
\end{equation}
Below $\alpha_{\mathrm{bd}}$, at least one local insertion reduces the expected depth. Above it, local operations cost more than they save, and the sequence collapses to ordinary global Grover search. Note that the critical depth ratio $\alpha_{\mathrm{bd}}$ only separates the alternating-sequence regime $S_{\mathcal A}$ from ordinary global Grover search. We find the more relevant critical depth ratio that separates the alternating sequence $S_{\mathcal A}$ from the GRK sequence in Sec.~\ref{subsec:alpha_c}.

\begin{figure}[t]
\centering
\includegraphics[width=\linewidth]{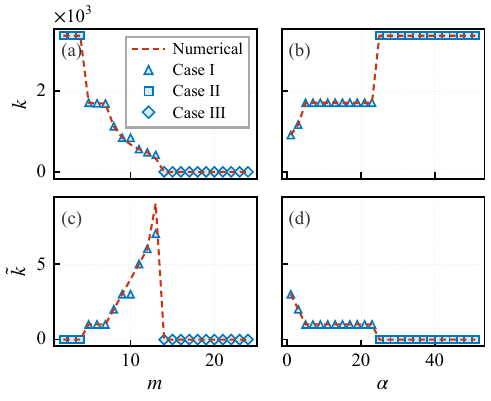}
\caption{Depth-optimal iteration counts for the alternating sequence $S_{\mathcal{A}}=(G_nG_m^{\tilde{k}})^k$ at $N=2^{25}$. Panels (a) and (b) show the outer iteration count $k$ as functions of $m$ at fixed $\alpha=20$ and of $\alpha$ at fixed $m=5$, respectively. Panels (c) and (d) show the corresponding inner iteration count $\tilde{k}$. Red dashed lines denote numerical results. Blue triangles, squares, and diamonds denote the analytical predictions in Cases~I, II, and III, given by Theorems~\ref{thm:MED_caseI}, \ref{thm:MED_caseII}, and~\ref{thm:MED_caseIII}, respectively.}
\label{fig:quantum_partial_search_2x2_n25}
\end{figure}

\subsubsection{Case III: Large-\texorpdfstring{$m$}{m} single-iteration solution}

Figure~\ref{fig:phase_diagram_v2} shows that as $m$ approaches $n/2$ from above, the quantum partial-search algorithm becomes depth-inefficient. The optimization then gives a single-iteration solution. In this regime, the number of blocks $K=2^{n-m}$ becomes too small for repeated amplitude amplification to retain a quantum advantage.

\begin{theorem}\label{thm:MED_caseIII}
Under the linear-depth model $\ell_n=n$ and $\ell_m=m$, in the large-$N$ limit with fixed $\alpha\ge0$, let $m=n/2+c$ with $c=\mathcal{O}(1)$ and define
\begin{equation}
c_*=\log_2\!\frac{\alpha+1}
{\kappa_0(\alpha+1/2)}.
\label{eq:caseIII_crossover}
\end{equation}
For $c>c_*+o(1)$, the depth-minimizing algorithm is given by $k_* = 1$ and $\tilde{k}_* = 0$.
The corresponding sequence is $S_{\mathcal A}(1,0)=G_n$. Its minimum expected circuit depth is
\begin{equation}
\mathrm{E}_{\ell,\mathcal{A}}^{\min,\mathrm{III}}
=(\alpha+1)\ell_n
\frac{K}{1+8(K-1)/N}
\left(1+\mathcal{O}\!\left(\frac{K}{N^2}\right)\right).
\label{eq:MED_III}
\end{equation}
\end{theorem}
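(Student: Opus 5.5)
The plan is to compare the expected depth of the single global iteration $G_n=S_{\mathcal A}(1,0)$ with the best expected depth attainable by any other alternating sequence $S_{\mathcal A}(k,\tilde k)$. I will show that for $m=n/2+c$ with $c>c_*$ the former is strictly smaller. Both sides are obtained from the exact formula of Lemma~\ref{lem:exactProb}, so that no asymptotic assumption on $k$ is needed for the comparison.

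\emph{Step 1 (the candidate $G_n$).} Put $k=1$ and $\tilde k=0$ in Eq.~\eqref{eq:exactP}. Then $\lambda=\theta_2$ and $\sin\lambda/\sqrt K=1/\sqrt N$, so the arcsine equals $\theta_1$. Hence
\[
\pr_{\mathcal A}(1,0)=1-\frac{1-K^{-1}}{1-N^{-1}}\cos^2(3\theta_1).
\]
Expanding $\cos^2(3\theta_1)=1-9/N+\mathcal O(N^{-2})$ and $(1-N^{-1})^{-1}=1+N^{-1}+\mathcal O(N^{-2})$ gives
\[
\pr=\frac1K\Bigl(1+\frac{8(K-1)}{N}\Bigr)\bigl(1+\mathcal O(K/N^2)\bigr).
\]
Dividing the depth $(\alpha+1)\ell_n$ by this probability yields Eq.~\eqref{eq:MED_III}. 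With $K=2^{n/2-c}$ this is $(\alpha+1)\,n\,2^{n/2}2^{-c}(1+o(1))$.

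\emph{Step 2 (competitors with macroscopic rotation).} Let $\phi=(2k+1)\arcsin(\sin\lambda/\sqrt K)$ be the accumulated rotation angle, and fix a small $\varepsilon>0$.

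For sequences with $\phi\ge\varepsilon$, I use the factorized form Eq.~\eqref{eq:E_xy}. Because $\phi=\Theta(1)$ and $K\gg1$, Corollary~\ref{lem:asympProb} applies, and with the bound $y/\sin^2 y\ge 2\kappa_0$ this gives
\[
\mathrm E\ge\kappa_0\sqrt K\,\min_x h(x)\,(1+o(1)).
\]
Using $|\sin(x\theta_2)|\le x\theta_2$ gives $h(x)\ge\ell_{G_m}\sqrt b\,(1+o(1))$, uniformly in $x$. Therefore
\[
\mathrm E\ge\kappa_0\,\ell_{G_m}2^{n/2}(1+o(1))=\kappa_0\,n\,(\alpha+\tfrac12)\,2^{n/2}(1+o(1)),
\]
where the last equality uses $\ell_{G_m}=\alpha n+n/2+c$. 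This lower bound matches the Case~I/II leading term. Comparing it with Step~1, the single iteration wins exactly when $2^{-c}(\alpha+1)<\kappa_0(\alpha+\tfrac12)$, that is, when $c>c_*$. The $o(1)$ corrections shift the threshold only by $o(1)$.

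\emph{Step 3 (competitors with small rotation, $\phi<\varepsilon$).} This is the main obstacle: Corollary~\ref{lem:asympProb} drops the $1/K$ baseline term, which is exactly what dominates here. I will therefore expand Eq.~\eqref{eq:exactP} exactly, using $\cos^2\phi=1-\phi^2+\mathcal O(\phi^4)$. This gives
\[
\pr\le\frac1K+\phi^2(1+\mathcal O(\varepsilon^2))\quad\text{with}\quad \phi^2\lesssim\frac{(2k+1)^2x^2}{N}.
\]
Set $u=(2k+1)x$, and write $K\phi^2\le u^2/b$, which is $\mathcal O(\varepsilon^2)$ in this regime.

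The depth satisfies $\ell(S)\ge k(\ell_{G_m}x+\ell_\Delta)\ge\tfrac13(2k+1)\ell_{G_m}x$. Hence
\[
\mathrm E\ge\frac{K\ell_{G_m}u}{3(1+u^2/b)}(1+\mathcal O(\varepsilon^2)).
\]
The goal is to show this always exceeds $(\alpha+1)\ell_nK(1+o(1))$ unless $(k,x)=(1,1)$. There are three sub-cases.
\begin{itemize}
\item For $u\ll\sqrt b$, the gain in probability is a factor $1+o(1)$, while the depth is at least $2\ell_{G_m}+\ell_n>(\alpha+1)\ell_n$ whenever $k\ge2$ or $x\ge2$. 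So these sequences lose.
\item For $u=\Theta(\sqrt b)$, the bound is already of order $K\ell_{G_m}\sqrt b=\Theta(n2^{n/2})$ with a constant exceeding $\kappa_0(\alpha+\tfrac12)$. It then merges with the Step~2 bound once $\varepsilon$ is chosen suitably.
\item For $u\gg\sqrt b$, the rotation $\phi$ cannot remain below $\varepsilon$, so this sub-case falls under Step~2.
\end{itemize}

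Choosing $\varepsilon\to0$ slowly closes the argument. Hence for $c>c_*+o(1)$ the minimizer over the alternating family is $k_*=1$, $\tilde k_*=0$.
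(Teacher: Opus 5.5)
Your Steps~1 and~2 are sound, and Step~2 goes further than the paper. The paper only sets the single-iteration depth $n(\alpha+1)2^{n/2-c}$ against the Case~I optimum $\kappa_0 n(\alpha+1/2)2^{n/2}$ and reads off $c_*$. It never bounds the other alternating sequences. Your estimates $y/\sin^2 y\ge 2\kappa_0$ and $h(x)\ge \ell_{G_m}/\theta_2$ instead give a true lower bound $\kappa_0\ell_{G_m}\sqrt N(1-o(1))$ for every competitor with $\phi\ge\varepsilon$. One detail there: the factorized form is only reliable for $\varepsilon\le\phi\le 2\kappa_0$, where $\sin^2\phi$ is bounded below; for $\phi>2\kappa_0$, use $\mathrm{E}\ge\ell(S)\ge\frac{\phi}{2}\sqrt K\,h(x)(1-o(1))$.

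The gap is in Step~3, which confuses the scales $\sqrt b$ and $\sqrt N$. Since $\phi\approx u/\sqrt N$, the regime $\phi<\varepsilon$ is $u\lesssim\varepsilon\sqrt N=\varepsilon\sqrt K\sqrt b$. So $K\phi^2\le u^2/b$ is not $\mathcal{O}(\varepsilon^2)$ there; it can reach $K\varepsilon^2$. Your third bullet is therefore false. Plain global search ($x=1$) with $2k+1\approx\varepsilon\sqrt N/2$ has $\phi\approx\varepsilon/2$, yet $u/\sqrt b\approx\varepsilon\sqrt K/2\to\infty$. As written, the whole window $K^{-1/2}\ll\phi<\varepsilon$ is covered by neither step. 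The same slip appears in your second bullet: for $u=\Theta(\sqrt b)$ the bound is of order $K\ell_{G_m}\sqrt b=\sqrt K\,\ell_{G_m}\sqrt N$, not $\Theta(n2^{n/2})$, so it does not ``merge'' with Step~2.

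The repair uses your own inequality $\mathrm{E}\ge K\ell_{G_m}u/[3(1+u^2/b)]\,(1-o(1))$, plus a reverse bound on $u$. The inequality is safe because $\operatorname{Pr}=\sin^2\phi+\cos^2\lambda\cos^2\phi/(K-\sin^2\lambda)\le K^{-1}+\sin^2\phi$ holds exactly. On the physical interval $0<\lambda\le\pi/2$, the facts $\sin\lambda\ge 2\lambda/\pi$, $\arcsin t\ge t$ and $\theta_2\ge b^{-1/2}$ give $\phi\ge 2u/(\pi\sqrt N)$. Hence $\phi<\varepsilon$ forces $u<\pi\varepsilon\sqrt N/2$. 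Then split into three ranges:
\begin{enumerate}
\item For $\sqrt b\le u<\pi\varepsilon\sqrt N/2$, use $u/(1+u^2/b)\ge b/(2u)$ to get $\mathrm{E}\gtrsim N\ell_{G_m}/(6u)>\ell_{G_m}\sqrt N/(3\pi\varepsilon)$. This exceeds $\kappa_0\ell_{G_m}\sqrt N$, and hence the $G_n$ depth when $c>c_*$, for any fixed $\varepsilon<1/(3\pi\kappa_0)$.
\item For $13\le u\le\sqrt b$, you get $\mathrm{E}\gtrsim K\ell_{G_m}u/6\ge\frac{13}{12}(\alpha+1)nK$, using $\ell_{G_m}\ge\frac12(\alpha+1)n$.
\item For the finitely many $(k,x)\ne(1,1)$ with $u\le 12$, the success probability is $K^{-1}(1+\mathcal{O}(b^{-1}))$. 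The depth is at least $\ell_{G_n}+\ell_{G_m}=2\ell_{G_m}+\ell_\Delta$ (not $2\ell_{G_m}+\ell_n$), which is $\ge\frac32(\alpha+1)\ell_n(1+o(1))$. It is this fixed factor $\frac32$, not mere strict inequality, that beats the $1+o(1)$ probability gain.
\end{enumerate}
With these three ranges Step~3 closes for a fixed small $\varepsilon$, and no limit $\varepsilon\to0$ is needed. Your argument then proves the minimality claim more completely than the paper's leading-order comparison.
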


The threshold $c_*$ follows by comparing the leading expected depth of one global iteration, namely $n(\alpha+1)2^{n/2-c}$, with the competing Case~I depth $\kappa_0n(\alpha+1/2)2^{n/2}$. Their equality gives Eq.~\eqref{eq:caseIII_crossover}. Numerically, $c_*(\alpha=1)\approx0.950$ and $c_*(\alpha\to\infty)\approx0.535$. Substituting $(k,\tilde k)=(1,0)$ into the exact probability in Lemma~\ref{lem:exactProb} gives Eq.~\eqref{eq:MED_III}. The loss of the repeated-search advantage is consistent with the query-complexity analysis of Ref.~\cite{jiang2026exactboundsquantumpartial}. Details of the proof and calculations are given in Appendix~\ref{app:proof_optimal}.

\begin{figure}[t]
\centering
\includegraphics[width=\linewidth]{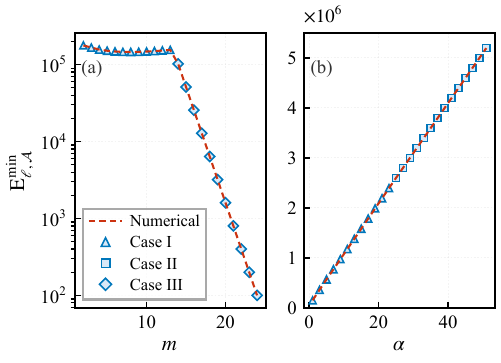}
\caption{Minimum expected circuit depth $\mathrm{E}_{\ell,\mathcal A}^{\min}$ of the alternating sequence for $N=2^{25}$. Panel (a) shows its dependence on $m$ at fixed $\alpha=1$ on a logarithmic vertical scale. Panel (b) shows its dependence on $\alpha$ at fixed $m=5$ on a linear vertical scale. Red dashed lines denote numerical results. Blue triangles, squares, and diamonds denote the analytical predictions in Cases~I, II, and III, given by Corollaries~\ref{cor:MED_caseI_depth}, \ref{cor:MED_caseII_depth}, and Theorem~\ref{thm:MED_caseIII}, respectively. We use the linear-depth model $\ell_n=n$ and $\ell_m=m$.}
\label{fig:min_E_n25_theory_vs_num}
\end{figure}

Figure~\ref{fig:quantum_partial_search_2x2_n25} compares the analytical and numerical depth-optimal iteration counts for $N=2^{25}$. At fixed $\alpha=20$, panels (a) and (c) pass through all three cases as $m$ increases. For $2\leq m\leq4$, the optimum is in Case~II with $\tilde k=0$, so the sequence consists only of global Grover operators. Case~I begins at $m=5$. Within $5\leq m\leq13$, the optimal $\tilde k$ increases in discrete steps while the corresponding outer iteration count $k$ decreases, showing that longer local subsequences replace some of the repeated global iterations. At $m=14$, the optimum enters Case~III with $(k,\tilde k)=(1,0)$. At fixed $m=5$, panels (b) and (d) instead show a Case~I--II transition as $\alpha$ increases. The optimal $\tilde k$ decreases from $3$ to $1$ within Case~I and becomes $0$ from $\alpha=25$, while $k$ increases to the value for global Grover search.

We compare the analytical and numerical results for the minimum expected depth in Fig.~\ref{fig:min_E_n25_theory_vs_num}. At fixed $\alpha=1$, panel (a) remains in Case~I for $2\leq m\leq13$, where the minimum expected depth changes only moderately. The threshold $m>n/2+c_*(1)\approx13.45$ from Eq.~\eqref{eq:caseIII_crossover} is first crossed at $m=14$, producing a direct transition to Case~III. In this regime, Eq.~\eqref{eq:MED_III} gives $\mathrm{E}_{\ell,\mathcal A}^{\min,\mathrm{III}}\approx(\alpha+1)\ell_nK$. Since $K=2^{n-m}$, each unit increase in $m$ approximately halves the minimum expected depth. Case~II does not occur in panel (a) because local iterations remain cost-effective at $\alpha=1$ before the Case~III threshold is reached. At fixed $m=5$, panel (b) remains in Case~I through $\alpha=23$ and enters Case~II at $\alpha=25$, consistent with $\alpha_{\mathrm{bd}}\approx24.4$ from Eq.~\eqref{eq:alpha_bd}.

\section{\label{sec:critical} Critical depth ratios relative to the GRK algorithm}

This section compares the depth-optimized GRK and alternating sequences $S_\mathcal{A}$. In Secs.~\ref{subsec:GRKdepth}, \ref{subsec:alpha_c}, and~\ref{subsec:DR}, respectively, we optimize the depth of the partial-search algorithm based on the GRK ansatz, derive the critical depth ratio, and evaluate the reduction ratio in the minimum expected circuit depth relative to the GRK algorithm. 

\subsection{\label{subsec:GRKdepth} Minimum expected depth of the GRK algorithm}

The GRK algorithm with the sequence $S_{\mathcal{G}}(k_1,k_2)=G_nG_m^{k_2}G_n^{k_1}$ is nontrivial for $m\le\lfloor n/2\rfloor$. For larger
$m$, its expected-query advantage over classical search disappears \cite{jiang2026exactboundsquantumpartial}. Set the rescaled iteration counts for the GRK algorithm as
\begin{equation}
v_{\mathrm{tot}}=\frac{k_{\mathrm{tot}}}{\sqrt N},\qquad
v_2=\frac{k_2}{\sqrt b}.
\label{eq:GRK_rescaled_iterations}
\end{equation}
Using $K^{-1/2}=\sqrt{b/N}$, the GRK success probability within the
first physical interval has the expansion
\begin{multline}
\pr_{\mathcal{G}}(v_{\mathrm{tot}},v_2)
=\sin^2(2v_{\mathrm{tot}})\\
+\frac{2}{\sqrt K}\bigl(\sin(2v_2)-v_2\bigr)\sin(4v_{\mathrm{tot}})\\
+\mathcal{O}(K^{-1})+\mathcal{O}(N^{-1/2}).
\label{eq:GRK_prob_depthopt}
\end{multline}
Note that the first term comes from global search, while the second term represents the probability advantage of partial search. 

In the linear-depth model with $\ell_n=n$ and $\ell_m=m$, the single-run depth of the GRK sequence $S_{\mathcal{G}}(k_1,k_2)$ is
\begin{multline}
\ell(S_{\mathcal{G}})
=(\alpha+1)n k_{\mathrm{tot}}-(n-m)k_2\\
=(\alpha+1)n v_{\mathrm{tot}}\sqrt N-(n-m)v_2\sqrt b.
\label{eq:GRK_single_depth}
\end{multline}
The corresponding expected depth is therefore
\begin{equation}
\mathrm{E}_{\ell,\mathcal{G}}(v_{\mathrm{tot}},v_2)=\frac{(\alpha+1)n v_{\mathrm{tot}}\sqrt N-(n-m)v_2\sqrt b}
{\pr_{\mathcal{G}}(v_{\mathrm{tot}},v_2)}.
\label{eq:GRK_expected_depth}
\end{equation}
We then optimize the total iteration count $v_{\mathrm{tot}}$ and the local iteration count $v_2$ to obtain the minimum expected depth within the GRK sequence.

\begin{lemma}\label{lem:GRK_opt_parameters}
Assume $b,K\to\infty$, and $m\le\lfloor n/2\rfloor$. Define
\begin{equation}
q_{\mathcal{G}}=\frac{n-m}{(\alpha+1)n}. 
\label{eq:qG_definition}
\end{equation}
Within the continuous asymptotic
optimization of the GRK sequence, the minimizing parameters are given by
\begin{equation}
v_{\mathrm{tot},*}=\kappa_2+\mathcal{O}(K^{-1/2}),\quad v_{2,*}=\frac12\arccos\!\left(\frac{1-q_{\mathcal{G}}}{2}\right),
\label{eq:GRK_v2star}
\end{equation}
where $\kappa_2\approx0.5829$.
\end{lemma}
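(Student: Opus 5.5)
The plan is a perturbative minimization of Eq.~\eqref{eq:GRK_expected_depth} in the small parameter $\varepsilon=K^{-1/2}$, using the expansion \eqref{eq:GRK_prob_depthopt} as given. First I divide out the positive constant $(\alpha+1)n\sqrt N$. Since $\sqrt b/\sqrt N=\varepsilon$, the objective becomes, up to that factor,
\begin{equation*}
F(v_{\mathrm{tot}},v_2)=\frac{v_{\mathrm{tot}}-\varepsilon\, q_{\mathcal G}v_2}{\sin^2(2v_{\mathrm{tot}})+2\varepsilon\bigl(\sin(2v_2)-v_2\bigr)\sin(4v_{\mathrm{tot}})+o(\varepsilon)},
\end{equation*}
with $q_{\mathcal G}$ as in Eq.~\eqref{eq:qG_definition}. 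Here I use that the remainders $\mathcal O(K^{-1})+\mathcal O(N^{-1/2})$ are $o(\varepsilon)$ in the joint limit $b,K\to\infty$.

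Next I take logarithms and expand to first order in $\varepsilon$. This gives $\log F=\log\bigl(v_{\mathrm{tot}}/\sin^2(2v_{\mathrm{tot}})\bigr)+\varepsilon\,\Phi(v_{\mathrm{tot}},v_2)+o(\varepsilon)$, where
\begin{equation*}
\Phi=-\frac{q_{\mathcal G}v_2}{v_{\mathrm{tot}}}-4\bigl(\sin(2v_2)-v_2\bigr)\cot(2v_{\mathrm{tot}}).
\end{equation*}
The zeroth-order term depends only on $v_{\mathrm{tot}}$. With $y=2v_{\mathrm{tot}}$ it equals $\log 2+\log\bigl(y/\sin^2 y\bigr)$ (up to the constant $-\log 2$ from $v_{\mathrm{tot}}=y/2$), which is exactly the Grover stopping problem of Eq.~\eqref{eq:E_min_Grover}. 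Its nondegenerate minimum is at $y_0$, so $v_{\mathrm{tot}}=y_0/2\approx0.5828=\kappa_2$. Because this minimum has positive second derivative, the implicit function theorem shows that the $\mathcal O(\varepsilon)$ perturbation moves the minimizer by only $\mathcal O(\varepsilon)$. This gives $v_{\mathrm{tot},*}=\kappa_2+\mathcal O(K^{-1/2})$.

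The local parameter $v_2$ first enters at order $\varepsilon$. To leading order it therefore minimizes $\Phi(\kappa_2,v_2)$, and errors in $v_{\mathrm{tot}}$ or in the remainders only shift it by $o(1)$. To evaluate $\Phi$ at $v_{\mathrm{tot}}=\kappa_2$ I use the stationarity condition $\tan y_0=2y_0$. It gives $\cot(2\kappa_2)=1/(2y_0)=1/(4\kappa_2)$, so that
\begin{equation*}
\Phi(\kappa_2,v_2)=-\frac{1}{\kappa_2}\Bigl(q_{\mathcal G}v_2+\sin(2v_2)-v_2\Bigr).
\end{equation*}
Maximizing $g(v_2)=(q_{\mathcal G}-1)v_2+\sin(2v_2)$ gives $g'(v_2)=q_{\mathcal G}-1+2\cos(2v_2)=0$, that is, $\cos(2v_2)=(1-q_{\mathcal G})/2$. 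The second derivative is $g''=-4\sin(2v_2)<0$, so this critical point is a maximum. Since $0<q_{\mathcal G}<1$, the argument of the arccos lies in $(0,1/2)$. Hence $v_{2,*}\in(\pi/6,\pi/4)$, which is within the first physical interval. It reduces to the known $\pi/6$ when $q_{\mathcal G}\to0$, and it is the unique interior maximizer there.

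I expect the main obstacle to be justifying the decoupling, i.e.\ that $v_2$ can be fixed from the $\mathcal O(\varepsilon)$ term alone. The expected depth depends on $v_2$ only at relative order $\varepsilon$. The neglected terms $\mathcal O(K^{-1})$ and $\mathcal O(N^{-1/2})$ must therefore be shown to be $o(\varepsilon)$ uniformly in $v_2$ on the admissible interval. I would settle this using the uniformity of the expansion \eqref{eq:GRK_prob_depthopt} from Ref.~\cite{jiang2026exactboundsquantumpartial} together with the fact that the maximizer of $g$ is nondegenerate.
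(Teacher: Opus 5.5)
Your proposal is correct and follows the paper's argument. Both proofs expand the expected depth to first order in $K^{-1/2}$, fix $v_{\mathrm{tot}}=y_0/2$ from the leading Grover stopping problem, use $\tan y_0=2y_0$ to reduce the order-$\sqrt b$ coefficient to $\sin(2v_2)-(1-q_{\mathcal{G}})v_2$, maximize that by strict concavity, and get the $\mathcal{O}(K^{-1/2})$ shift of $v_{\mathrm{tot},*}$ from the nondegenerate leading minimum. Working with $\log F$ instead of expanding the ratio directly is only cosmetic. One small slip: the zeroth-order term equals $\log(y/\sin^2 y)-\log 2$ rather than $+\log 2$, which does not affect the minimizer.
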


The detailed proof is presented in Appendix~\ref{app:critical}. Here $q_{\mathcal{G}}$ is the fractional depth saved by replacing a global Grover iteration with a local one. In contrast, $q_{\mathcal{A}}$ in Eq.~\eqref{eq:qA_definition} normalizes the same saving by the local-operator depth $\ell_{G_m}$ rather than the global-operator depth $\ell_{G_n}$ and satisfies $q_{\mathcal{G}}=q_{\mathcal{A}}/(1+q_{\mathcal{A}})$. The optimal global iteration count $k_{\mathrm{tot}}$ approaches the optimal Grover stopping
point. The local optimum depends on the relative depth saving
$q_{\mathcal{G}}$. It approaches $\pi/6$ as $q_{\mathcal{G}}\to0$ and increases
toward $\pi/4$ as $q_{\mathcal{G}}\to1$. Based on these optimal iteration counts, we obtain the following minimum expected depth for the GRK algorithm. 

\begin{theorem}\label{thm:GRKdepth}
Under the assumptions of Lemma~\ref{lem:GRK_opt_parameters}, define
\begin{equation}
\label{eq:g*}
g_*(q_{\mathcal{G}})=\sin(2v_{2,*})-(1-q_{\mathcal{G}})v_{2,*}.
\end{equation}
The minimum expected depth within the GRK sequence is
\begin{multline}
\mathrm{E}_{\ell,\mathcal{G}}^{\min}=\kappa_0(\alpha+1)n\sqrt N
-\frac{(\alpha+1)n}{\sin^2y_0}g_*(q_{\mathcal{G}})\sqrt b \\
+o((\alpha+1)n\sqrt b).
\label{eq:GRKdepth_expand}
\end{multline}
\end{theorem}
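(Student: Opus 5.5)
We prove Theorem~\ref{thm:GRKdepth} by substituting the optimal parameters of Lemma~\ref{lem:GRK_opt_parameters} into the expected depth, Eq.~\eqref{eq:GRK_expected_depth}. We then expand that quotient to first order in the small parameter $K^{-1/2}=\sqrt{b/N}$. Factor the numerator as $(\alpha+1)n\sqrt N\,\bigl(v_{\mathrm{tot}}-(1-\delta)\cdots\bigr)$. More precisely, we write
\begin{equation*}
\ell(S_{\mathcal G})=(\alpha+1)n\sqrt N\Bigl(v_{\mathrm{tot}}-\frac{q_{\mathcal G}v_2}{\sqrt K}\Bigr),
\end{equation*}
which uses $(n-m)/((\alpha+1)n)=q_{\mathcal G}$ and $\sqrt b=\sqrt N/\sqrt K$. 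By Eq.~\eqref{eq:GRK_prob_depthopt}, the denominator is $\sin^2(2v_{\mathrm{tot}})+\epsilon\,P_1(v_{\mathrm{tot}},v_2)$, where $\epsilon=K^{-1/2}$ and $P_1=2(\sin(2v_2)-v_2)\sin(4v_{\mathrm{tot}})$. The error terms are $\mathcal O(K^{-1})+\mathcal O(N^{-1/2})$, and both are $o(K^{-1/2})$ in the joint limit.

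Next we perform the first-order perturbative minimization. Set $F_0(v)=v/\sin^2(2v)$. Its minimizer $v_0$ satisfies $\tan(2v_0)=4v_0$, so $v_0=y_0/2$ and $F_0(v_0)=\kappa_0$. This matches $v_{\mathrm{tot},*}=\kappa_2+\mathcal O(K^{-1/2})$ at leading order only if $\kappa_2$ is identified appropriately. In fact, only the leading value $v_0$ matters here, by the envelope theorem: since $\partial_vF_0(v_0)=0$, the $\mathcal O(\epsilon)$ shift of the optimal $v_{\mathrm{tot}}$ contributes only $\mathcal O(\epsilon^2)$ to the minimum. Hence
\begin{equation*}
\frac{\mathrm E_{\ell,\mathcal G}}{(\alpha+1)n\sqrt N}=F_0(v_{\mathrm{tot}})-\epsilon\left[\frac{q_{\mathcal G}v_2}{\sin^2(2v_{\mathrm{tot}})}+\frac{v_{\mathrm{tot}}P_1}{\sin^4(2v_{\mathrm{tot}})}\right]+o(\epsilon).
\end{equation*}
At $v_{\mathrm{tot}}=v_0$ we have $\sin(4v_0)=2\sin(2v_0)\cos(2v_0)$, and the stationarity condition gives $\cos(2v_0)=\sin(2v_0)/(4v_0)$. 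Therefore
\begin{equation*}
\frac{v_0P_1}{\sin^4 y_0}=\frac{\sin(2v_2)-v_2}{\sin^2y_0}.
\end{equation*}
The bracket then becomes
\begin{equation*}
\frac{1}{\sin^2y_0}\bigl[\sin(2v_2)-(1-q_{\mathcal G})v_2\bigr].
\end{equation*}
Maximizing this over $v_2$ gives $2\cos(2v_2)=1-q_{\mathcal G}$, which recovers $v_{2,*}$ of Lemma~\ref{lem:GRK_opt_parameters}, and the maximal value is $g_*(q_{\mathcal G})$. Multiplying back by $(\alpha+1)n\sqrt N\,\epsilon=(\alpha+1)n\sqrt b$ produces Eq.~\eqref{eq:GRKdepth_expand}.

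The main obstacle is controlling the remainders uniformly. Two conditions must hold. First, the $\mathcal O(K^{-1})$ and $\mathcal O(N^{-1/2})$ terms in the probability, after multiplication by $\sqrt N$, must be $o(\sqrt b)$. This holds because $\sqrt N K^{-1}=\sqrt b\,K^{-1/2}$ and $\sqrt N N^{-1/2}=1=o(\sqrt b)$. Second, integer rounding of $k_{\mathrm{tot}}$ and $k_2$ shifts $v$ by $\mathcal O(N^{-1/2})$ and $\mathcal O(b^{-1/2})$, respectively. The first shift is harmless, again by stationarity. The second shift also enters only at second order, because $v_{2,*}$ is an interior maximizer with $0<v_{2,*}<\pi/4$, the physical interval.

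I would also check that the condition $m\le\lfloor n/2\rfloor$ keeps us away from the boundary solution $k_{\mathrm{tot}}=1$. This requires comparing the resulting value with the single-iteration cost $\approx(\alpha+1)nK$. That cost exceeds $\kappa_0(\alpha+1)n\sqrt N$ when $K\gtrsim\sqrt N$, and this is exactly where the restriction on $m$ enters.
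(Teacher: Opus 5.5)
Your proposal is correct and follows essentially the same route as the paper: expand the expected depth to first order in $K^{-1/2}$, fix $v_{\mathrm{tot}}$ at the leading stationary point $v_0=y_0/2$ (the paper's $\kappa_2$), so that its $\mathcal{O}(K^{-1/2})$ shift enters only quadratically, use $\tan y_0=2y_0$ to reduce the correction coefficient to $(\sin(2v_2)-v_2)/\sin^2y_0$, and then maximize $\sin(2v_2)-(1-q_{\mathcal G})v_2$. Your remainder bookkeeping also matches the paper's, namely that the probability remainders, the quadratic shift, and integer rounding all contribute $o((\alpha+1)n\sqrt b)$.
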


Here $g_*(q_{\mathcal{G}})$ is the optimized coefficient of the order-$\sqrt b$ correction, obtained by maximizing
$\sin(2v_2)-(1-q_{\mathcal{G}})v_2$ over $0<v_2<\pi/2$, with the maximum
attained at $v_{2,*}$. The first term in Eq.~\eqref{eq:GRKdepth_expand} comes from the full Grover search. The second, negative term of order $\sqrt b$ quantifies the depth saving produced by
local Grover iterations. See Appendix~\ref{app:depth_GRK} for the derivations. 

For later comparison, the small-$q_{\mathcal{G}}$ expansion, corresponding to a large depth ratio $\alpha$, is
\begin{equation}
g_*(q_{\mathcal{G}})=g_*(0)+(\pi/6)q_{\mathcal{G}}+\mathcal{O}(q_{\mathcal{G}}^2),   
\end{equation}
which gives
\begin{multline}
\mathrm{E}_{\ell,\mathcal{G}}^{\min}
=\kappa_0(\alpha+1)n\sqrt N\\
-\left(\kappa_4(\alpha+1)n+\kappa_5(n-m)
+\mathcal{O}\!\left(\frac{(n-m)^2}{(\alpha+1)n}\right)\right)\sqrt b\\
+o((\alpha+1)n\sqrt b),
\label{eq:GRKdepth_small_q}
\end{multline}
where $\kappa_4\approx0.4054$ and
$\kappa_5=\pi/(6\sin^2y_0)=\pi\kappa_0/(6\kappa_2)\approx0.6200$. Since the global Grover operator has depth $(\alpha+1)n$, Eq.~\eqref{eq:E^min_oracle_GRK} gives the first two terms
$\kappa_0(\alpha+1)n\sqrt N-\kappa_4(\alpha+1)n\sqrt b$ in
Eq.~\eqref{eq:GRKdepth_small_q}. The circuit-depth optimization further
accounts for the depth $n-m$ saved by each local iteration and produces the
additional term $-\kappa_5(n-m)\sqrt b$. This term modifies the coefficient of
the same order-$\sqrt b$ correction rather than entering at a smaller
asymptotic order.

\subsection{\label{subsec:alpha_c} Critical depth ratios}

In Sec.~\ref{subsec:expected-depth}, we have obtained the crossing point in the depth ratio $\alpha$ between the alternating sequence $S_\mathcal{A}$ and global Grover search, as given by Eq.~\eqref{eq:alpha_bd}. However, this transition does not include the GRK sequence, which is more depth-efficient than global Grover search, as indicated in Ref.~\cite{jiang2026exactboundsquantumpartial}. 

Under the linear-depth model, we have the depth-saving parameter $q_{\mathcal{A}}=(n-m)/(\alpha n+m)$ given by Eq.~\eqref{eq:qA_definition}. Recall that when $q_{\mathcal{A}}<q_{\mathrm{bd}}$, where $q_{\mathrm{bd}}$ is given by Eq.~\eqref{eq:q_bd}, the alternating sequence $S_\mathcal{A}$ reduces to global Grover search. Since the GRK sequence is more efficient than global Grover search in both oracle-query count and circuit depth, we expect $q_{\mathrm{bd}}<q_{\mathcal{A},\mathrm{c}}$, where $q_{\mathcal{A},\mathrm{c}}$ is the transition point between the alternating sequence $S_\mathcal{A}$ and the GRK sequence $S_\mathcal{G}$, which is uniquely determined by the following theorem. 

\begin{theorem}\label{thm:alpha_c}
Assume $b\gg1$, $K\gg1$, and $m\le\lfloor n/2\rfloor$. The critical cost ratio $q_{\mathcal{A},\mathrm{c}}$ is the unique solution of
\begin{multline}
(1+q_{\mathcal{A},\mathrm{c}})
\left[1-\frac{1}{\kappa_2}
g_*\!\left(\frac{q_{\mathcal{A},\mathrm{c}}}
{1+q_{\mathcal{A},\mathrm{c}}}\right)2^{-(n-m)/2}\right]\\
=\min_{j\in\mathbb Z_{\ge1}}
\frac{j+q_{\mathcal{A},\mathrm{c}}}
{\sqrt b\,|\sin(j\theta_2)|},
\label{eq:qAc_integer}
\end{multline}
where the function $g_*(q_{\mathcal{G}})$ with $q_{\mathcal{G}}=q_{\mathcal{A}}/(1+q_{\mathcal{A}})$ is defined in Eq.~\eqref{eq:g*}.
\end{theorem}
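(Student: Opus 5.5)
The plan is to equate the two minimum expected depths obtained earlier, write both in units of $\kappa_0\ell_{G_m}\sqrt N$, and reduce the crossing condition to Eq.~\eqref{eq:qAc_integer}. Uniqueness will then follow from a monotonicity argument in $q_{\mathcal{A}}$.

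\emph{Alternating side.} By Lemma~\ref{lem:opt_k} and Eq.~\eqref{eq:E_xy}, for every admissible $x=j$ we have $\mathrm{E}^{\min}_{\ell,\mathcal A}=\kappa_0\sqrt K\min_j h(j)\,(1+\mathcal O(K^{-1/2}))$. This form covers Cases~I and~II at once, including $j=1$ (Corollary~\ref{cor:MED_caseII_depth}). Dividing by $\ell_{G_m}$ and using $\ell_\Delta/\ell_{G_m}=q_{\mathcal A}$ gives
\begin{equation*}
\mathrm{E}^{\min}_{\ell,\mathcal A}=\kappa_0\ell_{G_m}\sqrt N\,\min_{j\ge1}\frac{j+q_{\mathcal A}}{\sqrt b\,|\sin(j\theta_2)|}.
\end{equation*}

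\emph{GRK side.} Theorem~\ref{thm:GRKdepth} gives
\begin{equation*}
\mathrm{E}^{\min}_{\ell,\mathcal G}=\kappa_0(\alpha+1)n\sqrt N\Bigl[1-\tfrac{1}{\kappa_0\sin^2y_0}\,g_*\,K^{-1/2}\Bigr].
\end{equation*}
Since $\kappa_0\sin^2 y_0=y_0/2$, one needs the identity $y_0/2=\kappa_2$ in the asymptotic limit, consistent with Lemma~\ref{lem:GRK_opt_parameters}, where $v_{\mathrm{tot},*}\to\kappa_2$ and $\sin^2(2v_{\mathrm{tot}})$ is maximized against $v_{\mathrm{tot}}$. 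If this identity fails, I would instead rederive the prefactor $1/\kappa_2$ directly by inserting $v_{\mathrm{tot},*}$ into Eq.~\eqref{eq:GRK_expected_depth}; checking it is the first concrete step. For the depth prefactor, note that $(\alpha+1)n=\ell_{G_n}=\ell_{G_m}+\ell_\Delta=\ell_{G_m}(1+q_{\mathcal A})$ in the linear model. Also $q_{\mathcal G}=q_{\mathcal A}/(1+q_{\mathcal A})$ and $K^{-1/2}=2^{-(n-m)/2}$. Equating the two expressions and cancelling the common factor $\kappa_0\ell_{G_m}\sqrt N$ gives exactly Eq.~\eqref{eq:qAc_integer}. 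The $o(\sqrt b)$ and $\mathcal O(K^{-1/2})$ remainders are dropped at the order displayed.

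\emph{Uniqueness.} Define $F(q)$ as the left side minus the right side of Eq.~\eqref{eq:qAc_integer}. The right side is a minimum of functions affine in $q$ with slopes $1/(\sqrt b\sin(j\theta_2))$. It is therefore concave and piecewise linear, with slope at most $1/(\sqrt b\sin\theta_2)=1$ and exactly $1$ only on the $j=1$ branch. The left side has derivative $1-\kappa_2^{-1}K^{-1/2}\,d[(1+q)g_*]/dq$. The envelope theorem gives $dg_*/dq_{\mathcal G}=v_{2,*}$, so this derivative is $1-\mathcal O(K^{-1/2})$.

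\begin{itemize}
\item At $q=q_{\mathrm{bd}}$, and below it, the minimizer is $j=1$. The right side then equals $1+q$, strictly above the left side because $g_*>0$. Hence $F<0$, matching the expectation $q_{\mathrm{bd}}<q_{\mathcal A,\mathrm c}$.
\item For large $q$ (Case~I), the right side is $1+\mathcal O((q\theta_2)^{2/3})\cdot$ sublinear, by Corollary~\ref{cor:MED_caseI_depth}. The left side grows linearly, so $F>0$.
\end{itemize}

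\emph{Main obstacle.} The hard step is proving that $F$ crosses zero only once, because the right side has finitely many kinks. My approach would be to show that $F$ is strictly increasing for $q>q_{\mathrm{bd}}$. There the active branch has $j\ge2$, so its slope is $j$-dependent: it equals $1/(\sqrt b\sin(j\theta_2))\approx 1/j\le 1/2$ up to $\mathcal O(b^{-1})$. The left slope is $1-\mathcal O(K^{-1/2})$, so $F'\ge 1/2-o(1)>0$ on every linear piece. Continuity of both sides across kinks then gives a unique root by the intermediate value theorem.
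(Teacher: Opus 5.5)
Your proposal is correct. The derivation of Eq.~\eqref{eq:qAc_integer} is the paper's: equate Eq.~\eqref{eq:GRKdepth_expand} with the integer-minimum form of Eq.~\eqref{eq:MED_boundary_layer}, use $(\alpha+1)n=\ell_{G_m}(1+q_{\mathcal A})$ and $q_{\mathcal G}=q_{\mathcal A}/(1+q_{\mathcal A})$, and cancel $\ell_{G_m}$. The paper's factor $(n-m)/(nq_{\mathcal A})$ is just $\ell_{G_m}/n$. The identity you flagged does hold, $\kappa_2=y_0/2$, but for a different reason than you gave. $v_{\mathrm{tot}}$ minimizes $v/\sin^2(2v)$, so $\tan(2v)=4v$ and $2v=y_0$; it is not that $\sin^2(2v)$ is maximized.

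Where you differ is the uniqueness argument. The paper normalizes by $\kappa_0 n2^{n/2}$ and differentiates in $\alpha$. Every candidate has slope $A_j'(\alpha)=j/(\sqrt b\,|\sin(j\theta_2)|)\ge1$, because $|\sin(j\theta_2)|\le j\sin\theta_2$. The GRK slope is $1-\kappa_2^{-1}(\sin 2v_{2,*}-v_{2,*})2^{-(n-m)/2}<1$. So GRK minus the lower envelope is strictly decreasing on the whole range, without ever identifying the active $j$; existence comes from GRK being shallower for $\alpha\ge\alpha_{\mathrm{bd}}$.

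You work in $q$ with the $q$-dependent normalization $\ell_{G_m}$, and there monotonicity is not global. Since $\frac{d}{dq}\bigl[(1+q)g_*\bigr]=\sin(2v_{2,*})$, your left side has slope $1-K^{-1/2}\sin(2v_{2,*})/\kappa_2$. The $j=1$ branch has slope exactly $1$, so $F$ decreases on $(0,q_{\mathrm{bd}}]$. That is why you need two extra inputs: Case~II to certify that $j=1$ is active and $F<0$ below $q_{\mathrm{bd}}$, and a slope bound on the active branch above $q_{\mathrm{bd}}$.

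Your justification of that bound via $\sin(j\theta_2)\approx j\theta_2$ is only heuristic; it fails once $j\theta_2$ is not small. The bound itself is right and follows from the branch comparison you already set up. Write $s_j=1/(\sqrt b\,|\sin(j\theta_2)|)$. An active $j\ge2$ must beat $j=2$, so $(j+q)s_j\le(2+q)s_2$. This gives $s_j\le s_2=1/(2\cos\theta_2)$, hence $F'\ge 1-1/(2\cos\theta_2)-\mathcal{O}(K^{-1/2})>0$ on every piece.

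The trade-off is this. The paper's argument is shorter and branch-agnostic. Yours needs more from Case~II, but it directly localizes the root at $q_{\mathcal A,\mathrm c}>q_{\mathrm{bd}}$, i.e.\ $\alpha_{\mathrm c,\mathcal A}<\alpha_{\mathrm{bd}}$. It also gives a quantitative lower bound on $F'$, and an explicit existence check at $q=\Theta(1)$ (physical because $(n-m)/m\ge1$) that the paper leaves implicit.
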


Equation~\eqref{eq:qAc_integer} follows by equating the depth-optimized GRK expansion in Eq.~\eqref{eq:GRKdepth_expand} with the minimum expected depth of the alternating sequence in Eq.~\eqref{eq:MED_boundary_layer}. The detailed derivation is given in Appendix~\ref{app:critical}. Once the critical $q_{\mathcal{A},\mathrm{c}}$ is obtained, the corresponding critical depth ratio is
\begin{equation}
\alpha_{\mathrm{c},\mathcal{A}}
=\frac{n-m}{nq_{\mathcal{A},\mathrm{c}}}-\frac{m}{n},
\label{eq:alpha_c_from_q}
\end{equation}
which satisfies $0<\alpha_{\mathrm{c},\mathcal{A}}<\alpha_{\mathrm{bd}}$, where $\alpha_{\mathrm{bd}}$ is given by Eq.~\eqref{eq:alpha_bd}. For $0<\alpha<\alpha_{\mathrm{c},\mathcal{A}}$, $S_{\mathcal{A}}$ is shallower than the GRK algorithm.

Consider $n,m\to\infty$ with $m\le n/2$ and $3m-n\to\infty$, so that $m$ lies asymptotically above the crossover $n/3$. Then, equating the minimum expected depths in Eqs.~\eqref{eq:Enew_expand} and~\eqref{eq:GRKdepth_expand} defines the asymptotic critical cost ratio $q_{\mathcal{A},\mathrm{c}}^{\mathrm{asy}}$ through
\begin{multline}
\kappa_6(q_{\mathcal{A},\mathrm{c}}^{\mathrm{asy}})^{-1/3}2^{-m/3}\\
+\frac{1+q_{\mathcal{A},\mathrm{c}}^{\mathrm{asy}}}
{\kappa_2q_{\mathcal{A},\mathrm{c}}^{\mathrm{asy}}}
g_*\!\left(\frac{q_{\mathcal{A},\mathrm{c}}^{\mathrm{asy}}}
{1+q_{\mathcal{A},\mathrm{c}}^{\mathrm{asy}}}\right)
2^{-(n-m)/2}=1.
\label{eq:qAc_implicit}
\end{multline}
The corresponding $\alpha_{\mathrm{c},\mathcal{A}}^{\mathrm{asy}}$ follows from Eq.~\eqref{eq:alpha_c_from_q} with $q_{\mathcal{A},\mathrm{c}}$ replaced by $q_{\mathcal{A},\mathrm{c}}^{\mathrm{asy}}$. The large-$m$ asymptotic solution derived below verifies the validity range of the root from its $n$ and $m$ scaling. When $3m-n$ is not asymptotically large and positive, Eq.~\eqref{eq:qAc_implicit} is only a formal continuation, and the integer equation in Theorem~\ref{thm:alpha_c} remains the appropriate comparison. At leading order, we can find the asymptotic critical cost ratio $q_{\mathcal{A},\mathrm{c}}^{\mathrm{asy}}$ and the corresponding depth ratio $\alpha_{\mathrm{c},\mathcal{A}}^{\mathrm{asy}}$ explicitly as follows. 

\begin{corollary}\label{cor:alpha_c_asymp}
Let $n,m\to\infty$ with $m\le\lfloor n/2\rfloor$. If $n-3m\to\infty$, the small-$m$ leading-order solution of Eq.~\eqref{eq:qAc_implicit} gives
\begin{equation}
q_{\mathcal{A},\mathrm{c}}^{\mathrm{asy}}
\simeq\frac{9}{8}2^{-m},\quad \alpha_{\mathrm{c},\mathcal{A}}^{\mathrm{asy}}
\simeq\frac{8}{9}\frac{n-m}{n}2^m-\frac{m}{n}.
\label{eq:alpha_c_left}
\end{equation}
If $3m-n\to\infty$, the large-$m$ leading-order solution gives
\begin{subequations}
\begin{align}
& q_{\mathcal{A},\mathrm{c}}^{\mathrm{asy}}
\simeq\frac{\kappa_4}{\kappa_0}2^{-(n-m)/2},\\
& \alpha_{\mathrm{c},\mathcal{A}}^{\mathrm{asy}}
\simeq\frac{\kappa_0}{\kappa_4}\frac{n-m}{n}2^{(n-m)/2}-\frac{m}{n}.
\label{eq:alpha_c_right}
\end{align}
\end{subequations}
\end{corollary}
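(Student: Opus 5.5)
The plan is to solve the implicit equation~\eqref{eq:qAc_implicit} by dominant balance in each regime. I will write it as $T_1+T_2=1$ with
\begin{equation*}
T_1=\kappa_6\,q^{-1/3}2^{-m/3},\qquad
T_2=\frac{1+q}{\kappa_2 q}\,g_*\!\left(\frac{q}{1+q}\right)2^{-(n-m)/2},
\end{equation*}
where $q=q_{\mathcal{A},\mathrm{c}}^{\mathrm{asy}}$. Both terms decrease in $q$, and $g_*$ is continuous with $g_*(0)=\sin(2v_{2,*})-v_{2,*}>0$ at $v_{2,*}=\pi/6$. So the root is unique and tends to $0$ whenever both exponentials vanish. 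In each regime I will guess a scaling for $q$, show that one of $T_1,T_2$ is $o(1)$ there, and solve the remaining equation (one term $=1$) at leading order. I then convert to $\alpha$ through Eq.~\eqref{eq:alpha_c_from_q}, which is exact given $q$.

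\emph{Small-$m$ regime, $n-3m\to\infty$.} I take the ansatz $q=c\,2^{-m}$ with $c=\Theta(1)$. Then $T_1=\kappa_6c^{-1/3}$ is of order one. Since $q\to0$, $g_*(q/(1+q))=g_*(0)+\mathcal{O}(q)$, which gives $T_2=\Theta\bigl(2^{m}2^{-(n-m)/2}\bigr)=\Theta\bigl(2^{(3m-n)/2}\bigr)\to0$. The leading equation is therefore $\kappa_6c^{-1/3}=1$, so $c=\kappa_6^3=(3^{2/3}/2)^3=9/8$ and $q\simeq\tfrac98 2^{-m}$. Substituting into Eq.~\eqref{eq:alpha_c_from_q} gives $\alpha\simeq\tfrac89\tfrac{n-m}{n}2^m-\tfrac mn$.

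\emph{Large-$m$ regime, $3m-n\to\infty$.} I take the ansatz $q=d\,2^{-(n-m)/2}$. Now $T_1=\kappa_6 d^{-1/3}2^{(n-m)/6-m/3}=\Theta\bigl(2^{(n-3m)/6}\bigr)\to0$. Meanwhile $T_2\to g_*(0)/(\kappa_2 d)$, using $1+q\to1$ and $g_*(q_{\mathcal{G}})\to g_*(0)$. Hence $d=g_*(0)/\kappa_2$. To match the stated constant, I will identify $g_*(0)$ from Theorem~\ref{thm:GRKdepth} and Eq.~\eqref{eq:GRKdepth_small_q}. At $q_{\mathcal{G}}=0$ the $\sqrt b$ coefficient is $\kappa_4=g_*(0)/\sin^2y_0$. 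The stated identity $\kappa_5=\pi/(6\sin^2y_0)=\pi\kappa_0/(6\kappa_2)$ gives $\sin^2y_0=\kappa_2/\kappa_0$. Together these yield $g_*(0)/\kappa_2=\kappa_4/\kappa_0$, so $q\simeq(\kappa_4/\kappa_0)2^{-(n-m)/2}$. Then Eq.~\eqref{eq:alpha_c_from_q} gives Eq.~\eqref{eq:alpha_c_right}.

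\emph{Main obstacle.} The delicate step is making ``leading order'' rigorous, i.e.\ showing that the neglected term's perturbation changes $c$ or $d$ only by a factor $1+o(1)$. I will handle this by monotonicity. For fixed $\epsilon>0$, evaluate $T_1+T_2-1$ at $q=(1\pm\epsilon)$ times the candidate. The sign changes for large $n,m$, because the dominant term moves by a fixed amount while the other is $o(1)$. The $\mathcal{O}(q)$ corrections in $g_*$ and $1+q$ are also $o(1)$ there. Uniqueness then places the root in between. I will also note that in the small-$m$ branch $q b=\Theta(1)$ sits at the edge of Case~I, so this result is the formal solution of Eq.~\eqref{eq:qAc_implicit} as stated, not a claim about the integer equation~\eqref{eq:qAc_integer}.
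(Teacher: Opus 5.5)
Your proposal is correct and follows essentially the paper's route. Both arguments do a dominant balance in Eq.~\eqref{eq:qAc_implicit}, which the paper phrases through the strictly increasing cubic in $u_c=[(n-m)/(nq)]^{1/3}$, and both use the same identification $g_*(0)/\kappa_2=\kappa_4/\kappa_0$; your $(1\pm\epsilon)$ sandwich is a more careful version of the paper's ``keep the leading terms'' step. The one claim you should justify is that $T_2$ decreases in $q$, which is not obvious because $(1+q)/q$ decreases while $g_*(q/(1+q))$ increases; it follows from $T_2\propto g_*(q_{\mathcal G})/q_{\mathcal G}$, the envelope identity $g_*'=v_{2,*}$, and $\sin(2v_{2,*})>v_{2,*}$ for $v_{2,*}\in[\pi/6,\pi/4]$.
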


Equation~\eqref{eq:alpha_c_left} gives $q_{\mathcal{A},\mathrm{c}}^{\mathrm{asy}}b\to9/8$, equivalently $q_{\mathcal{A},\mathrm{c}}^{\mathrm{asy}}\simeq(9/8)b^{-1}=\Theta(b^{-1})$. Thus, this root lies at the integer-boundary scale (Case II) rather than satisfying the interior condition $b^{-1}\ll q_{\mathcal{A},\mathrm{c}}^{\mathrm{asy}}\ll b^{1/2}$ (Case I). By contrast, Eq.~\eqref{eq:alpha_c_right} gives
\begin{subequations}
\begin{align}
&\frac{q_{\mathcal{A},\mathrm{c}}^{\mathrm{asy}}}{b^{-1}}
\simeq\frac{\kappa_4}{\kappa_0}2^{(3m-n)/2}\to\infty,\\
&\frac{q_{\mathcal{A},\mathrm{c}}^{\mathrm{asy}}}{b^{1/2}}
\simeq\frac{\kappa_4}{\kappa_0}2^{-n/2}\to0.
\end{align}
\end{subequations}
Thus the large-$m$ root lies in the interior regime $b^{-1}\ll q_{\mathcal{A},\mathrm{c}}^{\mathrm{asy}}\ll b^{1/2}$ as a consequence of the stated $n$ and $m$ limits. Moreover, we find that the critical depth ratio does not increase monotonically with $m$. It has the maximum shown below. 

\begin{corollary}\label{cor:alpha_c_max}
For the asymptotic crossover curve defined by Eq.~\eqref{eq:qAc_implicit}, the critical depth ratio at leading order has a maximum $\alpha_{\mathrm{c},\mathcal{A}}^{\max}\simeq0.2976\times 2^{n/3}$ at $m\simeq n/3+1.2175$. 
\end{corollary}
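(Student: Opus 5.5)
The plan is to find the leading-order behaviour of Eq.~\eqref{eq:qAc_implicit} in a window of width $\mathcal{O}(1)$ around $m=n/3$. There both terms on the left-hand side are of the same order, whereas away from it one term dominates and Corollary~\ref{cor:alpha_c_asymp} applies. I would write
\begin{equation*}
m=\frac n3+\delta,\qquad q_{\mathcal{A},\mathrm{c}}^{\mathrm{asy}}=Q\,2^{-n/3},
\end{equation*}
with $\delta,Q=\Theta(1)$.

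\textbf{Step 1: reduce to an $n$-independent equation.} Since $q_{\mathcal{A},\mathrm{c}}^{\mathrm{asy}}\to0$, we have $q_{\mathcal{G}}=q/(1+q)\to0$ and $1+q\to1$. Hence $g_*(q_{\mathcal{G}})=g_*(0)+\mathcal{O}(2^{-n/3})$ by the small-$q_{\mathcal{G}}$ expansion after Theorem~\ref{thm:GRKdepth}. Substituting gives $2^{-m/3}=2^{-n/9-\delta/3}$ and $q^{-1/3}=Q^{-1/3}2^{n/9}$ in the first term, and $2^{-(n-m)/2}/q=Q^{-1}2^{\delta/2}$ in the second. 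The factors of $2^n$ cancel, leaving
\begin{equation*}
F(Q,\delta)\equiv\kappa_6Q^{-1/3}2^{-\delta/3}+\frac{g_*(0)}{\kappa_2}Q^{-1}2^{\delta/2}=1 .
\end{equation*}
I would then identify $g_*(0)/\kappa_2=\kappa_4/\kappa_0$. This follows by comparing the $q_{\mathcal{G}}\to0$ limit of Eq.~\eqref{eq:GRKdepth_expand} with Eq.~\eqref{eq:GRKdepth_small_q}, which gives $\kappa_4=g_*(0)/\sin^2y_0$, together with $\kappa_0/\kappa_2=1/\sin^2y_0$ (stated next to Eq.~\eqref{eq:GRKdepth_small_q}). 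This is consistent with the large-$m$ limit in Corollary~\ref{cor:alpha_c_asymp}.

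\textbf{Step 2: reduce the maximization of $\alpha$ to minimizing $Q$.} From Eq.~\eqref{eq:alpha_c_from_q},
\begin{equation*}
\alpha_{\mathrm{c},\mathcal{A}}^{\mathrm{asy}}=\frac{n-m}{n}\,\frac{2^{n/3}}{Q}-\frac mn=\frac{2}{3Q}\,2^{n/3}\bigl(1+o(1)\bigr).
\end{equation*}
The term $m/n$ and the $\delta/n$ correction to $(n-m)/n$ are subleading. Maximizing $\alpha$ over $m$ is therefore equivalent at leading order to minimizing $Q(\delta)$ along the curve $F=1$.

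By implicit differentiation, $\partial_\delta F=0$ at the extremum. Write $A$ and $B$ for the two terms of $F$. Then
\begin{equation*}
\partial_\delta F=\ln2\left(-\frac A3+\frac B2\right)=0,
\end{equation*}
which together with $A+B=1$ gives $A=3/5$ and $B=2/5$. This yields two explicit equations:
\begin{equation*}
Q^{-1/3}2^{-\delta/3}=\frac{3}{5\kappa_6},\qquad Q^{-1}2^{\delta/2}=\frac{2\kappa_0}{5\kappa_4}.
\end{equation*}
Eliminating $\delta$ (cube the first, take the combination that cancels $2^{\delta}$) gives $Q$ in closed form, and then $\delta$ follows. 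Numerically, with $\kappa_6=3^{2/3}/2$ and the tabulated $\kappa_0,\kappa_4$, this should reproduce $2/(3Q)\approx0.2976$ and $\delta\approx1.2175$.

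\textbf{Step 3: confirm the extremum is a global maximum of $\alpha$.} As $\delta\to-\infty$ the first term forces $Q\sim2^{-\delta}\to\infty$, and as $\delta\to+\infty$ the second forces $Q\sim2^{\delta/2}\to\infty$. Since $F$ is strictly decreasing in $Q$, $Q(\delta)$ is a smooth function with a unique critical point, and that point is a global minimum.

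The main obstacle I expect is the bookkeeping of which corrections are negligible. These are the replacement $g_*(q_{\mathcal{G}})\to g_*(0)$, the factor $(n-m)/n\to2/3$, and the $-m/n$ term. In addition, for $\delta=\mathcal{O}(1)$ the condition $3m-n\to\infty$ fails, so Eq.~\eqref{eq:qAc_implicit} is only a formal continuation there. I would therefore state the result explicitly as a property of the asymptotic crossover curve, as the corollary does, and not as a statement about the integer equation of Theorem~\ref{thm:alpha_c}.
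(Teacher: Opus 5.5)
Your proposal is correct and is essentially the paper's argument. The paper writes the same leading-order balance (with $g_*\to g_*(0)$ and $g_*(0)/\kappa_2=\kappa_4/\kappa_0$) as a cubic in $u_c=(\alpha_{\mathrm{c},\mathcal A}^{\mathrm{asy}}+m/n)^{1/3}$ and imposes $\mathrm{d}u_c/\mathrm{d}m=0$, which in your rescaled variables is exactly $\mathrm{d}Q/\mathrm{d}\delta=0$; this gives the same $3/5$--$2/5$ split, $2^{3\delta/2}\simeq\frac{50}{27}\kappa_0\kappa_6^3/\kappa_4$, and $\alpha_{\mathrm{c},\mathcal A}^{\max}\simeq(2/5)^{5/3}\kappa_0^{2/3}\kappa_4^{-2/3}\kappa_6^{-1}2^{n/3}$, while your Step~3 (the critical point is the global minimum of $Q(\delta)$) is a small addition the paper leaves implicit.
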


At this maximum, $q_{\mathcal{A},\mathrm{c}}^{\mathrm{asy}}b=\Theta(1)$, so it lies at the boundary scale rather than in the interior limit. Using $b=2^m$, Eq.~\eqref{eq:alpha_c_from_q} then gives $\alpha_{\mathrm{c},\mathcal{A}}^{\max}=\Theta(2^m)$. Since $m=n/3+\mathcal{O}(1)$ at the maximum, this is equivalently $\alpha_{\mathrm{c},\mathcal{A}}^{\max}=\Theta(2^m)=\Theta(2^{n/3})$, showing exponential growth with both $m$ and $n$ along the maximizing curve. The derivation of Eq.~\eqref{eq:qAc_implicit} and the proofs of Corollaries~\ref{cor:alpha_c_asymp} and~\ref{cor:alpha_c_max} are given in Appendix~\ref{app:critical}.

The results above concern the alternating family $S_{\mathcal{A}}$ compared with the GRK algorithm. For the optimized three-parameter family $S_{\mathcal{D}}$ defined in Eq.~\eqref{eq:S_D}, we denote its crossing with the GRK sequence $S_{\mathcal G}$ by $\alpha_{\mathrm{c},\mathcal{D}}$. The three-parameter family $S_{\mathcal{D}}$ can be viewed as an edge correction to the alternating family $S_{\mathcal{A}}$. Appendix~\ref{app:boundary_proof} shows that the edge operator gives a controlled subleading correction to the alternating family $S_{\mathcal{A}}$. Because $S_{\mathcal{A}}\subset S_{\mathcal{D}}$, the exact full-family transition of $S_{\mathcal{D}}$ cannot occur below the corresponding transition for $S_{\mathcal{A}}$. A closed-form analytical optimization of $S_{\mathcal{D}}$ is not available here, so we determine $\alpha_{\mathrm{c},\mathcal{D}}$ by numerical integer scans.

\begin{figure}[!t]
\centering
\includegraphics[width=\linewidth]{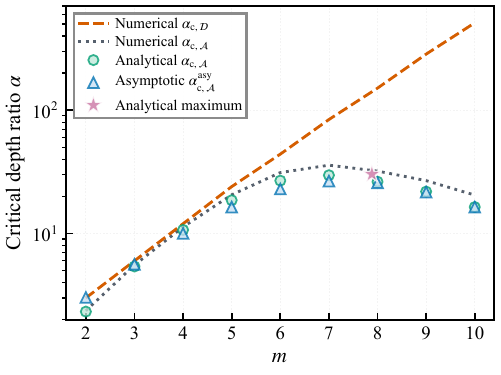}
\caption{Critical depth ratios as functions of $m$ for $N=2^{20}$. The dark-gray dotted line gives the numerical finite-size crossings $\alpha_{\mathrm{c},\mathcal{A}}$ obtained by exact integer optimization of $S_{\mathcal A}$ and GRK. The orange dashed line shows the numerical crossing points for the full-family transition $\alpha_{\mathrm{c},\mathcal{D}}$. Green circles are obtained from the reduced-integer analytical Eqs.~\eqref{eq:qAc_integer} and~\eqref{eq:alpha_c_from_q}, giving $\alpha_{\mathrm{c},\mathcal{A}}$. Blue triangles give the continuous-asymptotic analytical roots $\alpha_{\mathrm{c},\mathcal{A}}^{\mathrm{asy}}$ of Eq.~\eqref{eq:qAc_implicit}. The magenta star marks the large-$n$ analytical maximum in Corollary~\ref{cor:alpha_c_max}, evaluated at $n=20$. We use the linear-depth model $\ell_n=n$ and $\ell_m=m$.}
\label{fig:alpha_c_vs_m_aps}
\end{figure}

In Fig.~\ref{fig:alpha_c_vs_m_aps}, we numerically obtain the critical depth ratios of the optimized three-parameter family $S_{\mathcal{D}}$ and the alternating sequence $S_{\mathcal{A}}$. The numerical crossing $\alpha_{\mathrm{c},\mathcal A}$ first increases from $2.362$ at $m=2$ to a maximum of $35.81$ at $m=7$, and then decreases to $20.48$ at $m=10$. By contrast, the full-family crossing $\alpha_{\mathrm{c},\mathcal D}$ remains above $\alpha_{\mathrm{c},\mathcal A}$ and rises rapidly from $3$ to $511$ over the same range, showing that the finite-size edge degree of freedom can substantially delay the transition to the GRK family. The analytical points reproduce the nonmonotonic behavior of $\alpha_{\mathrm{c},\mathcal A}$, while the continuous-asymptotic roots approach the numerical results as $m$ increases. At $m=10$, the reduced-integer and continuous-asymptotic results give $16.388$ and $16.379$, respectively. The analytical maximum $(m,\alpha)\simeq(7.88,30.23)$ also locates the turnover near $m=8$. 

\subsection{\label{subsec:DR} Reduction ratio in the minimum expected circuit depth}

To quantify the efficiency of depth-optimized partial search, we consider the reduction ratio in the minimum expected circuit depth of the alternating sequence $S_\mathcal{A}$ relative to the GRK algorithm, given by 
\begin{equation}
\eta_{\mathcal A} = \frac{\mathrm{E}_{\ell,\mathcal{G}}^{\min} - \mathrm{E}_{\ell,\mathcal{A}}^{\min}}{\mathrm{E}_{\ell,\mathcal{G}}^{\min}} \times 100\%,
\end{equation}
where $\mathrm{E}_{\ell,\mathcal{G}}^{\min}$ and $\mathrm{E}_{\ell,\mathcal{A}}^{\min}$ are the minimum expected circuit depths of the GRK algorithm $S_\mathcal{G}$ and the alternating sequence $S_\mathcal{A}$, respectively. The corresponding ratio $\eta_{\mathcal D}$ for the three-parameter family $S_\mathcal{D}$ is defined similarly. 

For compactness, define the normalized correction parameters
\begin{align}
C_{\mathcal{A}}&=\kappa_6\left(\frac{n-m}{n}\right)^{2/3}
(\alpha+m/n)^{1/3}2^{-m/3},\notag\\
C_{\mathcal{G}}&=\frac{\alpha+1}{\kappa_2}g_*(q_{\mathcal{G}})2^{-(n-m)/2}.
\label{eq:CA_CG}
\end{align}
Then, in the large-$b$ and large-$K$ limits, we can analytically determine the reduction ratio as follows. 

\begin{figure}[!t]
\centering
\includegraphics[width=\linewidth]{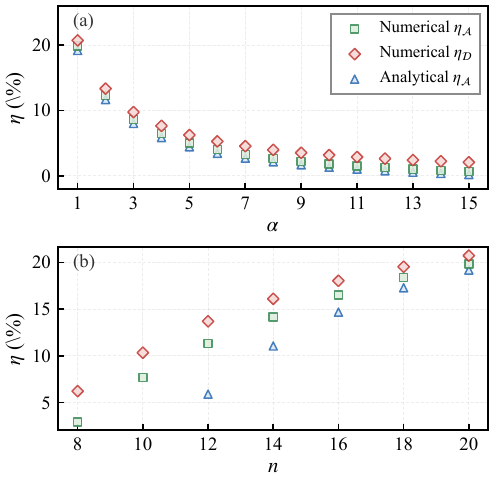}
\caption{Reduction ratios in the minimum expected circuit depth relative to the depth-optimized GRK baseline for $\ell_n=n$ and $\ell_m=m$. Panel~(a) shows the dependence on $\alpha=1,2,\ldots,15$ for $n=20$ and $m=10$, whereas panel~(b) shows the dependence on even $n=8,10,\ldots,20$ for $m=n/2$ and $\alpha=1$. Green squares and red diamonds denote the finite-size numerical values of $\eta_{\mathcal A}$ and $\eta_{\mathcal D}$, respectively; blue triangles denote the analytical $\eta_{\mathcal A}$ from Eq.~\eqref{eq:DR}. Every GRK baseline is obtained by direct integer depth optimization, and every numerical $S_{\mathcal A}$ point is recomputed from the exact finite-size evolution in the three-dimensional invariant subspace.}
\label{fig:emin_eta}
\end{figure}

\begin{corollary}\label{cor:DR}
In the limits $b\gg1$ and $K\gg1$, for $b^{-1}\ll q_{\mathcal{A}}\ll b^{1/2}$ and $q_{\mathcal{A}}>q_{\mathcal{A},\mathrm{c}}^{\mathrm{asy}}$, equivalently $\alpha<\alpha_{\mathrm{c},\mathcal{A}}^{\mathrm{asy}}$, we have
\begin{equation}
\eta_{\mathcal A} \simeq
\frac{n-m-n\bigl(C_{\mathcal{A}}+C_{\mathcal{G}}\bigr)}
{n\bigl(\alpha+1-C_{\mathcal{G}}\bigr)}\times 100\%.
\label{eq:DR}
\end{equation}
\end{corollary}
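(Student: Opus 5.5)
The plan is to substitute the two leading-order minimum-depth expansions into the definition of $\eta_{\mathcal A}$ and simplify. Because $q_{\mathcal A}$ lies in the interior window $b^{-1}\ll q_{\mathcal{A}}\ll b^{1/2}$, Corollary~\ref{cor:MED_caseI_depth} applies to the alternating sequence. Its linear-depth expansion, Eq.~\eqref{eq:Enew_expand}, gives
\begin{equation*}
\mathrm{E}_{\ell,\mathcal{A}}^{\min}\simeq\kappa_0 n2^{n/2}\bigl(\alpha+m/n+C_{\mathcal A}\bigr),
\end{equation*}
with $C_{\mathcal A}$ exactly as defined in Eq.~\eqref{eq:CA_CG}. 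For the GRK baseline I use Theorem~\ref{thm:GRKdepth}. Writing $\sqrt b=\sqrt N\,2^{-(n-m)/2}$ and $\kappa_0\sin^2 y_0=\kappa_0\cdot\kappa_0/\kappa_0\cdots$, I note $1/\sin^2y_0=\kappa_0/\kappa_2$. This follows since $\kappa_0=y_0/(2\sin^2y_0)$, and $\kappa_2$ is the Grover stopping coefficient $y_0/2$, consistent with $\kappa_5=\pi\kappa_0/(6\kappa_2)=\pi/(6\sin^2y_0)$ in the paper. Factoring $\kappa_0 n\sqrt N$ out of Eq.~\eqref{eq:GRKdepth_expand} then gives
\begin{equation*}
\mathrm{E}_{\ell,\mathcal{G}}^{\min}\simeq\kappa_0 n2^{n/2}\Bigl(\alpha+1-\tfrac{\alpha+1}{\kappa_2}g_*(q_{\mathcal G})2^{-(n-m)/2}\Bigr)=\kappa_0 n2^{n/2}\bigl(\alpha+1-C_{\mathcal G}\bigr).
\end{equation*}

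Next I form the difference and the ratio. The common prefactor $\kappa_0 n2^{n/2}$ cancels, leaving
\begin{equation*}
\eta_{\mathcal A}\simeq\frac{(\alpha+1-C_{\mathcal G})-(\alpha+m/n+C_{\mathcal A})}{\alpha+1-C_{\mathcal G}}=\frac{1-m/n-C_{\mathcal A}-C_{\mathcal G}}{\alpha+1-C_{\mathcal G}}.
\end{equation*}
Multiplying numerator and denominator by $n$ yields Eq.~\eqref{eq:DR}. The hypothesis $q_{\mathcal A}>q_{\mathcal A,\mathrm c}^{\mathrm{asy}}$ enters only to ensure the sign is positive. By Eq.~\eqref{eq:qAc_implicit}, the crossover is exactly where the numerator vanishes: dividing that equation by $q_{\mathcal A}/(1+q_{\mathcal A})$ turns it into $C_{\mathcal A}+C_{\mathcal G}=(n-m)/n$ under the linear-depth identities $q_{\mathcal A}=(n-m)/(\alpha n+m)$ and $1+q_{\mathcal A}=(\alpha+1)n/(\alpha n+m)$. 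This identity also serves as a consistency check on the definitions of $C_{\mathcal A}$ and $C_{\mathcal G}$.

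The main obstacle is controlling the error terms. The dropped remainders must be negligible relative to the retained corrections $C_{\mathcal A}$ and $C_{\mathcal G}$, not merely relative to the leading term. These remainders are the $o(2^{-m/3})$ relative error in Eq.~\eqref{eq:Enew_expand} and the $o((\alpha+1)n\sqrt b)$ term in Eq.~\eqref{eq:GRKdepth_expand}. Otherwise the numerator, which is a small difference, could be swamped. I would argue that each remainder is $o(C_{\mathcal A})$ or $o(C_{\mathcal G})$ respectively, which is what the cited expansions state. The ratio is then correct at the order displayed, and I would interpret ``$\simeq$'' as equality up to these relative corrections.
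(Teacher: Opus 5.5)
Your proposal is correct and follows the paper's proof. Like the paper, you normalize both leading-order depth expansions by $\kappa_0 n2^{n/2}$, using $1/\sin^2y_0=\kappa_0/\kappa_2$ to identify $C_{\mathcal G}$, then subtract and multiply numerator and denominator by $n$. Your side check that the numerator vanishes exactly on Eq.~\eqref{eq:qAc_implicit} is also right, but the algebraic step is misdescribed: you must multiply that equation by $(n-m)/n$ and use $\alpha+m/n=(n-m)/(nq_{\mathcal A})$ and $\alpha+1=\frac{n-m}{n}\frac{1+q_{\mathcal A}}{q_{\mathcal A}}$, rather than divide it by $q_{\mathcal A}/(1+q_{\mathcal A})$.
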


\begin{proof}
Dividing the approximate expected depths in Eqs.~\eqref{eq:GRKdepth_expand} and
\eqref{eq:Enew_expand} by $\kappa_0n2^{n/2}$ gives, to the same relative order,
\begin{subequations}
\begin{align}
\widehat E_{\mathcal{G}}=\frac{\mathrm{E}_{\ell,\mathcal{G}}^{\min}}{\kappa_0n2^{n/2}}
&\simeq(\alpha+1)-C_{\mathcal{G}},\\
\widehat E_{\mathcal{A}}=\frac{\mathrm{E}_{\ell,\mathcal{A}}^{\min,\mathrm{asy}}}{\kappa_0n2^{n/2}}
&\simeq\alpha+\frac mn+C_{\mathcal{A}}.
\label{eq:DR_truncated_depths}
\end{align}
\end{subequations}
Hence $\widehat E_{\mathcal{G}}-\widehat E_{\mathcal{A}}
=(n-m)/n-C_{\mathcal{A}}-C_{\mathcal{G}}$. Substituting this difference
into the definition of $\eta_{\mathcal A}$ and multiplying the numerator and denominator
by $n$ yields Eq.~\eqref{eq:DR}.
\end{proof}

Figure~\ref{fig:emin_eta} compares the analytical prediction of the reduction ratio in the minimum expected circuit depth in Eq.~\eqref{eq:DR} with finite-size numerical optimization. For the representative point $\alpha=1$, $n=20$, and $m=10$, Eq.~\eqref{eq:CA_CG} gives $C_{\mathcal A}\simeq0.0744$ and $C_{\mathcal G}\simeq0.0517$, so Eq.~\eqref{eq:DR} yields $\eta_{\mathcal A}\simeq19.19\%$. The numerical alternating result is $19.85\%$, while optimization over the full family $S_{\mathcal D}$ gives $\eta_{\mathcal D}=20.73\%$. Panel~(a) shows that the reduction decreases as global diffusion becomes relatively cheaper: over $\alpha=1,\ldots,15$, the numerical values of $\eta_{\mathcal A}$ and $\eta_{\mathcal D}$ fall from $19.85\%$ to $0.64\%$ and from $20.73\%$ to $2.06\%$, respectively. The analytical $\eta_{\mathcal A}$ follows the same trend. Panel~(b) fixes $\alpha=1$ and $m=n/2$: as $n$ increases from $8$ to $20$, the numerical $\eta_{\mathcal A}$ and $\eta_{\mathcal D}$ rise from $2.93\%$ to $19.85\%$ and from $6.22\%$ to $20.73\%$, respectively. In the limit $n\to\infty$ along $m=n/2$, the correction terms $C_{\mathcal A}$ and $C_{\mathcal G}$ vanish and Eq.~\eqref{eq:DR} gives $\eta_{\mathcal A}\to25\%$. This limit has a direct leading-order interpretation: for $\alpha=1$, the normalized leading depths above satisfy $\widehat E_{\mathcal G}\to2$ and $\widehat E_{\mathcal A}\to3/2$, so the alternating sequence removes one quarter of the leading GRK expected depth. At finite $n$, $C_{\mathcal A}>0$ raises the alternating depth while $C_{\mathcal G}>0$ lowers the GRK baseline; both effects reduce the observed gain below $25\%$, explaining the upward trend in panel~(b) and the approximately $20\%$ reduction at $n=20$.

\section{Conclusions}
\label{sec:conclusions}

In this work, we seek a quantum partial-search algorithm optimized for circuit depth. We propose the alternating sequence $S_{\mathcal{A}}=(G_nG_m^{\tilde k})^k$ and optimize its expected circuit depth. The $O(3)$ geometric analysis yields the exact success probability in Lemma~\ref{lem:exactProb} and its asymptotic form in Corollary~\ref{lem:asympProb}. The analytical optimal parameters and minimum expected depths of $S_\mathcal{A}$ in the three asymptotic regimes are given in Theorems~\ref{thm:MED_caseI}, \ref{thm:MED_caseII}, and~\ref{thm:MED_caseIII} and Corollaries~\ref{cor:MED_caseI_depth} and~\ref{cor:MED_caseII_depth}. Comparison with the depth-optimized GRK sequence gives the critical depth ratio $\alpha_{\mathrm{c},\mathcal{A}}$, below which $S_{\mathcal A}$ is shallower than GRK (Theorem~\ref{thm:alpha_c}). Corollary~\ref{cor:alpha_c_asymp} gives the asymptotic critical depth ratios, while Corollary~\ref{cor:alpha_c_max} gives the leading-order maximum $\alpha_{\mathrm{c},\mathcal{A}}^{\max}\simeq0.2976\times2^{n/3}$. At $\alpha=1$, $n=20$, and $m=10$, the analytical and numerical reductions in the minimum expected circuit depth are both approximately $20\%$. For $m=n/2$, we find that the reduction in the minimum expected circuit depth reaches $\eta_{\mathcal A}\to25\%$ as $n\to\infty$. Our work systematically develops a depth-efficient partial-search algorithm and determines the critical boundary below which it outperforms the conventional GRK algorithm in minimum expected circuit depth.

A direct extension of our work is to develop depth-efficient partial-search algorithms for multiple targets. The GRK algorithm has been generalized to multiple-target partial search~\cite{Choi2007QuantumPS,zhong2009quantum,Zhang2017QuantumPS}, but other partial-search operator sequences may be more suitable than the alternating sequence $S_{\mathcal A}$ in the multiple-target setting. The second question is how to design a depth-efficient partial-search algorithm valid for $m>\lfloor n/2\rfloor$, beyond the single-iteration case established in Theorem~\ref{thm:MED_caseIII}. To our knowledge, neither a depth-efficient nor an oracle-efficient partial-search algorithm has been established for $m>\lfloor n/2\rfloor$. The third extension of our work is to design a depth-efficient algorithm for partial amplitude amplification when the initial state has nonuniform amplitudes, building on the general amplitude-amplification framework and analyses of search from arbitrary initial amplitude distributions~\cite{brassard2000quantum,biham1999arbitrary}. We leave these questions for future study.

\begin{acknowledgments}
This work was supported by the National Natural Science Foundation of China under Grant Nos. 12305028, 12275215, and 12247103 and by the Youth Innovation Team of Shaanxi Universities. KZ was supported by the China Postdoctoral Science Foundation under Grant No. 2025M773421, the Shaanxi Province Postdoctoral Science Foundation under Grant No. 2025BSHYDZZ017, and the Scientific Research Program funded by the Education Department of the Shaanxi Provincial Government under Program No. 24JP186. VK was supported by the U.S. Department of Energy, Office of Science, National Quantum Information Science Research Centers, Co-Design Center for Quantum Advantage (C2QA), under Contract No. DE-SC0012704.
\end{acknowledgments}

\appendix

\section{Numerical results for depth-optimal partial-search sequences}
\label{app:tables}

The depth-optimal partial-search sequences for $\alpha=1$ are presented in Table~\ref{tab:n10_alpha1_unrestricted}. Table~\ref{tab:n10_reqGn_combined} lists the supplementary results for $\alpha=3,5,10,100$. The depth-optimal sequences gradually approach the GRK form as $\alpha$ increases. At $\alpha=100$, all the depth-optimal sequences have the GRK form.

\begin{table*}[t]
\centering
\footnotesize
\caption{Depth-optimal partial-search sequences and the minimum expected depths for $\alpha=3,5,10,100$, with $n=10$, $\ell_n=n$, and $\ell_m=m$. We set the enumeration cutoff to $k_{\mathrm{tot}}\le24$ based on the bound in Eq.~\eqref{eq:enumeration_lower_bound}.}
\label{tab:n10_reqGn_combined}
\setlength{\tabcolsep}{4pt}
\begin{tabular}{@{}clcccc@{}}
\toprule
\multicolumn{6}{c}{$\alpha=3$} \\
\midrule
$m$ & Optimal sequence $S$ & $k_{\mathrm{tot}}$ & $\ell(S)$ & $\pr(S)$ & $\mathrm{E}_{\ell}^{\min}$ \\
\midrule
2 & $G_n G_m G_n^{16}$ & 18 & 712 & 0.8386 & 849.08 \\
3 & $G_n G_m^2 (G_n G_m)^8$ & 19 & 690 & 0.8482 & 813.46 \\
4 & $G_n G_m^2 (G_n G_m)^7$ & 17 & 626 & 0.8104 & 772.50 \\
5 & $G_n G_m^3 (G_n G_m)^6$ & 16 & 595 & 0.8021 & 741.79 \\
6 & $G_n$ & 1 & 40 & 0.0698 & 572.98 \\
7 & $G_n$ & 1 & 40 & 0.1318 & 303.44 \\
8 & $G_n$ & 1 & 40 & 0.2558 & 156.34 \\
9 & $G_n$ & 1 & 40 & 0.5039 & 79.38 \\
\bottomrule
\end{tabular}
\qquad
\begin{tabular}{@{}clcccc@{}}
\toprule
\multicolumn{6}{c}{$\alpha=5$} \\
\midrule
$m$ & Optimal sequence $S$ & $k_{\mathrm{tot}}$ & $\ell(S)$ & $\pr(S)$ & $\mathrm{E}_{\ell}^{\min}$ \\
\midrule
2 & $G_n G_m G_n^{16}$ & 18 & 1072 & 0.8386 & 1278.39 \\
3 & $G_n G_m G_n^{15}$ & 17 & 1013 & 0.8142 & 1244.12 \\
4 & $G_n G_m^2 (G_n G_m)^7$ & 17 & 966 & 0.8104 & 1192.07 \\
5 & $G_n G_m^3 (G_n G_m)^6$ & 16 & 915 & 0.8021 & 1140.73 \\
6 & $G_n$ & 1 & 60 & 0.0698 & 859.48 \\
7 & $G_n$ & 1 & 60 & 0.1318 & 455.16 \\
8 & $G_n$ & 1 & 60 & 0.2558 & 234.51 \\
9 & $G_n$ & 1 & 60 & 0.5039 & 119.07 \\
\bottomrule
\end{tabular}

\vspace{8pt}

\begin{tabular}{@{}clcccc@{}}
\toprule
\multicolumn{6}{c}{$\alpha=10$} \\
\midrule
$m$ & Optimal sequence $S$ & $k_{\mathrm{tot}}$ & $\ell(S)$ & $\pr(S)$ & $\mathrm{E}_{\ell}^{\min}$ \\
\midrule
2 & $G_n G_m G_n^{16}$ & 18 & 1972 & 0.8386 & 2351.66 \\
3 & $G_n G_m G_n^{15}$ & 17 & 1863 & 0.8142 & 2288.06 \\
4 & $G_n G_m^2 G_n^{14}$ & 17 & 1858 & 0.8299 & 2238.76 \\
5 & $G_n G_m^3 (G_n G_m)^6$ & 16 & 1715 & 0.8021 & 2138.10 \\
6 & $G_n$ & 1 & 110 & 0.0698 & 1575.71 \\
7 & $G_n$ & 1 & 110 & 0.1318 & 834.45 \\
8 & $G_n$ & 1 & 110 & 0.2558 & 429.94 \\
9 & $G_n$ & 1 & 110 & 0.5039 & 218.30 \\
\bottomrule
\end{tabular}
\qquad
\begin{tabular}{@{}clcccc@{}}
\toprule
\multicolumn{6}{c}{$\alpha=100$} \\
\midrule
$m$ & Optimal sequence $S$ & $k_{\mathrm{tot}}$ & $\ell(S)$ & $\pr(S)$ & $\mathrm{E}_{\ell}^{\min}$ \\
\midrule
2 & $G_n G_m G_n^{16}$ & 18 & 18172 & 0.8386 & 21670.54 \\
3 & $G_n G_m G_n^{15}$ & 17 & 17163 & 0.8142 & 21078.86 \\
4 & $G_n G_m^2 G_n^{14}$ & 17 & 17158 & 0.8299 & 20674.15 \\
5 & $G_n G_m^3 G_n^{12}$ & 16 & 16145 & 0.8089 & 19958.52 \\
6 & $G_n$ & 1 & 1010 & 0.0698 & 14467.86 \\
7 & $G_n$ & 1 & 1010 & 0.1318 & 7661.81 \\
8 & $G_n$ & 1 & 1010 & 0.2558 & 3947.66 \\
9 & $G_n$ & 1 & 1010 & 0.5039 & 2004.37 \\
\bottomrule
\end{tabular}
\end{table*}

\section{Algebraic structure and success probability of the alternating sequence}
\label{app:recurrence}

This appendix derives the exact success probability in Lemma~\ref{lem:exactProb}. The derivation is based on two ingredients: the $O(3)$ spectral decomposition of $G_nG_m^{\tilde k}$ and the amplitude recurrence on its two-dimensional invariant subspace.

\subsection{\texorpdfstring{$O(3)$}{O(3)} representation and rotation angle}

The operators $G_n$ and $G_m^{\tilde{k}}$ are orthogonal matrices, with $\det G_n=-1$ and $\det G_m^{\tilde{k}}=1$. To shorten the notation, define $T_{\tilde{k}}=G_nG_m^{\tilde{k}}$. The operator $T_{\tilde{k}}$ lies in $O(3)$ and has determinant $-1$. Its spectrum consists of one eigenvalue $-1$ and a conjugate pair $\mathrm{e}^{\pm\mathrm{i}\Omega}$. The space therefore decomposes into the $(-1)$ eigenspace and a two-dimensional invariant subspace $\mathcal{V}_2$, on which $T_{\tilde{k}}$ acts as a rotation by $\Omega$.

The normalized eigenvector with eigenvalue $-1$ is
\begin{equation}
|v_-\rangle
=\frac{1}{\sqrt{1-\sin^2\gamma\sin^2\lambda}}
\begin{pmatrix}
\cos\gamma\sin(\tilde{k}\theta_2)\\
-\cos\gamma\cos(\tilde{k}\theta_2)\\
\sin\gamma\cos\lambda
\end{pmatrix}.
\label{eq:v_minus}
\end{equation}
This eigenvector satisfies $\langle v_-|s_n\rangle=0$, where the initial state $|s_n\rangle$ is given by Eq.~\eqref{eq:sn_three}. The initial state therefore lies entirely in $\mathcal{V}_2$. The rotation angle follows from the trace. Let $\phi=2\tilde{k}\theta_2$. Using Eq.~\eqref{eq:matrix_GnGm} and $2\theta_2+\phi=2\lambda$, one obtains
\begin{equation}
\tr(T_{\tilde{k}}) = 1 - 4\sin^2\gamma\sin^2\lambda .
\end{equation}
Since $\tr(T_{\tilde{k}})=-1+2\cos\Omega$, we write the rotation angle as
\begin{equation}
\label{eq:Omega_half_app}
\Omega=2\arcsin(\sin\gamma\sin\lambda),
\end{equation}
which is in the range $\Omega\in[-\pi,\pi]$. Note that $\sin(\Omega/2)=\sin\gamma\,\sin\lambda$.

\subsection{Amplitude recurrence}

On $\mathcal{V}_2$, the restriction of $T_{\tilde{k}}$ is a planar rotation. Therefore it satisfies the quadratic identity
\begin{equation}
T_{\tilde{k}}^2 - 2\cos\Omega\,T_{\tilde{k}} + I = 0,
\label{eq:CH_quadratic}
\end{equation}
when acting on vectors in $\mathcal{V}_2$. Equivalently, let $\mu$ denote an eigenvalue of $T_{\tilde{k}}$. The full characteristic equation is
\begin{equation}
\det(\mu I-T_{\tilde{k}})
=(\mu+1)(\mu^2 - 2\cos\Omega\,\mu + 1)=0,
\label{eq:char_poly}
\end{equation}
but the $-1$ eigenspace component is absent from the evolution of $|s_n\rangle$. Denote $a_j = \langle \bar{b} | T_{\tilde{k}}^{\,j} | s_n \rangle$. Applying $\langle\bar b|$ to Eq.~\eqref{eq:CH_quadratic} acting on $T_{\tilde{k}}^j|s_n\rangle$ yields the second-order linear recurrence relation
\begin{equation}
a_{j+2} - 2\cos\Omega\,a_{j+1} + a_j = 0,
\end{equation}
with characteristic roots $\mathrm{e}^{\pm \mathrm{i}\Omega}$.

The initial conditions are
\begin{equation}
a_0 = \cos\gamma, \quad a_1 = \cos\gamma\,(2\cos\Omega-1).
\end{equation}
Writing the general solution in trigonometric form $a_j=A\cos(j\Omega)+B\sin(j\Omega)$, the initial conditions give
\begin{equation}
A = \cos\gamma,\quad B = -\cos\gamma\,\tan\frac{\Omega}{2}.
\end{equation}
Thus we obtain 
\begin{multline}
a_j = \cos\gamma\left(\cos(j\Omega)-\tan\frac{\Omega}{2}\sin(j\Omega)\right)\\
= \frac{\cos\gamma}{\cos(\Omega/2)}\cos\!\left(\frac{(2j+1)\Omega}{2}\right),
\label{eq:a_j_final}
\end{multline}
The success probability is $1-|a_k|^2$. Substituting Eq.~\eqref{eq:Omega_half_app} into Eq.~\eqref{eq:a_j_final} gives Eq.~\eqref{eq:exactP}.

\section{Asymptotic optimization of expected circuit depth}
\label{app:proof_optimal}

This appendix proves Theorems~\ref{thm:MED_caseI}, \ref{thm:MED_caseII}, and~\ref{thm:MED_caseIII}, together with Corollaries~\ref{cor:MED_caseI_depth} and~\ref{cor:MED_caseII_depth} in Sec.~\ref{sec:analytical}. The optimization of expected depth in Eq.~\eqref{eq:E_xy} is separable: for fixed $x$, the $y$ dependence is $y/\sin^2 y$, while all $x$-dependent factors are grouped into $h(x)$ defined in Eq.~\eqref{eq:h(x)}. This separability allows a two-step optimization procedure over the full domain where the success probability is nonzero.

\subsection{Continuous optimization}

We first treat $y$ as a continuous variable. For $K\gg 1$, the optimal $k$ scales as $\Theta(\sqrt{K}/|\sin(x\theta_2)|)$, consistent with Corollary~\ref{lem:asympProb}, so the continuous approximation is accurate at leading order. Taking the derivative of $y/\sin^2 y$ with respect to $y$ and setting it to zero gives the stationary condition $\tan y = 2y$. The smallest positive root of this transcendental equation is denoted $y_0 \approx 1.1656$, which coincides with the optimal stopping point for Grover's algorithm given by Eq.~\eqref{eq:E_min_Grover} in the large-database limit.

Solving the definition of $y$ in Eq.~\eqref{eq:intro_y} for the corresponding optimal $k$, we obtain
\begin{equation}
k_*(x) = \frac{y_0\sqrt{K}}{2\sin(x\theta_2)} - \frac{1}{2} + \mathcal{O}(1).
\label{eq:k_opt_x_app}
\end{equation}
For sufficiently large $K$ and $k\ge 1$, the bounded correction is negligible relative to the leading term. Its relative size is at most $\mathcal{O}(K^{-1/2})$.

\subsection{Cost-effective local-iteration regime (Case I)}

The optimal value of the inner iteration count $\tilde k$ is determined by the function $h(x)$ given by Eq.~\eqref{eq:h(x)} with $x=\tilde k+1$. For any admissible $x>0$ outside the first positive quarter-period, we reduce $x\theta_2$ modulo $\pi$ and, if necessary, reflect the remainder about $\pi/2$. The resulting angle lies in $(0,\pi/2]$ and has the same absolute sine as $x\theta_2$, while the corresponding value of $x$ is smaller. Because the numerator is strictly increasing in $x$, this folded point gives a smaller value of $h(x)$. All later half-periods and the second quadrant of the first period are therefore excluded from the continuous global minimum. On the remaining interval $0<x\le\pi/(2\theta_2)$, $\sin(x\theta_2)$ is positive.

Treating $x$ as a positive real variable on this interval and differentiating $h(x)$ gives the stationary condition
\begin{equation}
\tan(x\theta_2) = \theta_2\left(x + \frac{\ell_\Delta}{\ell_{G_m}}\right).
\end{equation}
Introducing the variable $z=x\theta_2$ and the alternating-family depth ratio $q_{\mathcal{A}}=\ell_\Delta/\ell_{G_m}$, the equation simplifies to
\begin{equation}
\tan z = z + q_{\mathcal{A}}\theta_2.
\label{eq:stationary_z_app}
\end{equation}
For the interior solution in the large-block limit $b\gg 1$, we have $z\ll 1$, so we may expand $\tan z$ in a Taylor series
\begin{equation}
\tan z = z + \frac{z^3}{3} + \frac{2z^5}{15} + \mathcal{O}(z^7).
\end{equation}
Substituting into Eq.~\eqref{eq:stationary_z_app} yields
\begin{equation}
\frac{z^3}{3} + \frac{2z^5}{15} + \mathcal{O}(z^7) = q_{\mathcal{A}}\theta_2.
\label{eq:z_expansion_app}
\end{equation}
We solve this equation perturbatively order by order. Denote the leading-order approximation to $z$ by $z_{\mathrm{LO}}$. The leading-order solution satisfies $z_{\mathrm{LO}}^3/3=q_{\mathcal{A}}\theta_2$, giving
\begin{equation}
z_{\mathrm{LO}}=(3q_{\mathcal{A}}\theta_2)^{1/3}.
\end{equation}
To obtain the subleading correction, set $z=z_{\mathrm{LO}}+\Delta z$ and substitute into Eq.~\eqref{eq:z_expansion_app}. Matching the next order gives $\Delta z=-2z_{\mathrm{LO}}^3/15+\mathcal{O}(z_{\mathrm{LO}}^5)$. The stationary point through subleading order is therefore
\begin{equation}
z_*=(3q_{\mathcal{A}}\theta_2)^{1/3}-\frac{2}{5}q_{\mathcal{A}}\theta_2
+\mathcal{O}\left((q_{\mathcal{A}}\theta_2)^{5/3}\right).
\label{eq:z_star_app}
\end{equation}
Transforming back to $x$ via $x=z/\theta_2$, we obtain the continuous optimum
\begin{equation}
x_*=(3q_{\mathcal{A}})^{1/3}\theta_2^{-2/3}-\frac{2}{5}q_{\mathcal{A}}
+\mathcal{O}\left(q_{\mathcal{A}}^{5/3}\theta_2^{2/3}\right).
\label{eq:x_star_app}
\end{equation}
The expansion presumes $z_*\ll1$ while $x_*$ is asymptotically separated
from the first few integers. Since $\theta_2\sim b^{-1/2}$, these conditions give the interior asymptotic window $b^{-1}\ll q_{\mathcal{A}}\ll b^{1/2}$.

The stationary point $x_*$ is a strict local minimum of $h(x)$. From
\begin{equation}
h'(x) = \frac{\ell_{G_m}\sin z - \theta_2(\ell_{G_m}x+\ell_\Delta)\cos z}{\sin^2 z},
\end{equation}
the stationary condition $\tan z_* = z_* + q_{\mathcal{A}}\theta_2$ yields
\begin{equation}
h''(x_*) = \frac{\ell_{G_m}\theta_2}{\cos z_*} > 0,
\label{eq:h_double_prime}
\end{equation}
which directly confirms that $x_*$ is a local minimum. Since $h(x)$ has only this single stationary point in $(0,\pi/(2\theta_2))$, it is the minimum on this interval. The preceding folding argument makes it the continuous global minimum over all $x>0$ with nonzero success probability.

Substituting $z_*$ into Eq.~\eqref{eq:k_opt_x_app} gives the optimal $k$. Expanding $\sin z_*$ to subleading order,
\begin{equation}
\sin z_* = z_* - \frac{z_*^3}{6} + \mathcal{O}(z_*^5) = z_{\mathrm{LO}} - \frac{3}{10}z_{\mathrm{LO}}^3 + \mathcal{O}(z_{\mathrm{LO}}^5),
\end{equation}
and therefore
\begin{equation}
\frac{1}{\sin z_*} = \frac{1}{z_{\mathrm{LO}}}\left(1 + \frac{3}{10}z_{\mathrm{LO}}^2 + \mathcal{O}(z_{\mathrm{LO}}^4)\right).
\label{eq:inv_sin_series}
\end{equation}
Substituting Eq.~\eqref{eq:inv_sin_series} into Eq.~\eqref{eq:k_opt_x_app}, with $z_{\mathrm{LO}}=(3q_{\mathcal{A}}\theta_2)^{1/3}$, we arrive at
\begin{multline}
k_*=\frac{y_0\sqrt K}{2(3q_{\mathcal{A}}\theta_2)^{1/3}}\\
\times\left(1+\frac{3^{5/3}}{10}(q_{\mathcal{A}}\theta_2)^{2/3}
+\mathcal{O}\bigl((q_{\mathcal{A}}\theta_2)^{4/3}\bigr)\right)
-\frac{1}{2}+\mathcal{O}(1).
\label{eq:k_int_app}
\end{multline}
The $\mathcal{O}(1)$ term collects the bounded effects of integer rounding and finite-$K$ corrections to the continuous outer stationary point in Eq.~\eqref{eq:k_opt_x_app}. It does not modify the displayed leading or scaled subleading terms.

Finally, we substitute $y=y_0$ and the stationary relation $\ell_{G_m}x_*+\ell_\Delta = \ell_{G_m}\tan z_*/\theta_2$ into the leading term of Eq.~\eqref{eq:E_xy}, which gives the minimal expected depth
\begin{multline}
\mathrm{E}_{\ell,\mathcal{A}}^{\min,\mathrm{I}}
= \frac{\sqrt{K}}{2}\frac{y_0}{\sin^2 y_0}
\frac{\ell_{G_m}\tan z_*}{\theta_2\sin z_*}
\Bigl(1+\mathcal{O}(K^{-1/2})\Bigr)\\
= \kappa_0\,\ell_{G_m}\,2^{n/2}\frac{1}{\cos z_*}
\Bigl(1+\mathcal{O}(K^{-1/2})+\mathcal{O}(b^{-1})\Bigr),
\end{multline}
where we have used $\sqrt K/\theta_2=2^{n/2}\bigl(1+\mathcal{O}(b^{-1})\bigr)$ and defined $\kappa_0 = y_0/(2\sin^2 y_0) \approx 0.6900$. For $q_{\mathcal{A}}=\mathcal{O}(1)$, $z_*=\mathcal{O}(2^{-m/6})$, and the relative correction introduced by $1/\cos z_*$ is $\mathcal{O}(2^{-m/3})$. If $q_{\mathcal{A}}=\Theta(1)$, then $z_*=\Theta(2^{-m/6})$ and this correction is $\Theta(2^{-m/3})$. Meanwhile,
\begin{equation}
\frac{K^{-1/2}}{2^{-m/3}}=2^{-(3n-5m)/6}, \qquad \frac{b^{-1}}{2^{-m/3}}=2^{-2m/3},
\end{equation}
both vanish for $m<n/2+\mathcal{O}(1)$ as $b,K\to\infty$, with $q_{\mathcal{A}}=\mathcal{O}(1)$. Thus the combined remainder is $o(2^{-m/3})$, which completes the proofs of Theorem~\ref{thm:MED_caseI} and Corollary~\ref{cor:MED_caseI_depth}.

\subsection{Boundary solution (Case II)}

The preceding asymptotic optimization, which treats $x$ as a continuous variable, cannot determine the first integer switch, since $x=\tilde k+1$ is a positive integer. The continuous stationary point equals $2$ at
\begin{equation}
q_{\mathrm{stat}}(b)=\frac{\tan(2\theta_2)}{\theta_2}-2
=\frac{8}{3b}+\mathcal{O}(b^{-2}).
\end{equation}
However, the exact integer transition between $x=1$ and $x=2$ occurs at a smaller value of $q_{\mathcal{A}}$. We set $q_{\mathcal{A}}=\ell_\Delta/\ell_{G_m}$ and recall that $\theta_2=\arcsin(b^{-1/2})$. The comparison between the first two admissible integers, $h(2)>h(1)$, gives $q_{\mathcal{A}}<q_{\mathrm{bd}}$, where
\begin{equation}
\label{eq:qbd_exact_app}
q_{\mathrm{bd}} = \frac{2(1-\cos\theta_2)}{2\cos\theta_2-1} = b^{-1}+\mathcal{O}(b^{-2}).
\end{equation}
Consequently, $q_{\mathrm{bd}}<q_{\mathrm{stat}}$ for large $b$. For
$q_{\mathcal{A}}=\mathcal{O}(b^{-1})$ with $q_{\mathcal{A}}\ge q_{\mathrm{bd}}$, the correct reduced solution
is the exact integer minimizer $x_{\mathrm{int}}$ in Eq.~\eqref{eq:x_integer}, not
the rounded interior asymptotic series. Substitution into Eq.~\eqref{eq:E_xy} gives
Eq.~\eqref{eq:MED_boundary_layer} directly.

The exact integer search in Eq.~\eqref{eq:x_integer} is finite. The denominator of $h(x)$ never exceeds unity, whereas its numerator increases linearly with $x$. Once this numerator reaches or exceeds the candidate value $h(1)$, no larger integer can improve the minimum. Thus only finitely many integer values of $x$ need to be examined.

For $q_{\mathcal{A}}<q_{\mathrm{bd}}$, the unique first-quadrant stationary point lies below $x=2$. This follows from $q_{\mathrm{bd}}<q_{\mathrm{stat}}$, because the stationary point moves to larger $x$ as $q_{\mathcal{A}}$ increases. Hence $h(x)$ is strictly increasing from $x=2$ to the end of the first quadrant. Together with $h(2)>h(1)$ from Eq.~\eqref{eq:qbd_exact_app}, this excludes every integer $x\ge2$ in that interval.

It remains to exclude the rest of the global domain. For $x\ge\pi/(2\theta_2)$, we have
\[
h(x)\ge \ell_{G_m}x+\ell_\Delta
\ge \frac{\pi}{2\theta_2}\ell_{G_m}+\ell_\Delta,
\]
whereas $h(1)=\ell_{G_m}(1+q_{\mathcal{A}})/\sin\theta_2=\ell_{G_m}\sqrt b\bigl(1+\mathcal{O}(b^{-1})\bigr)$ because $q_{\mathcal{A}}<q_{\mathrm{bd}}=\mathcal{O}(b^{-1})$. Since $\pi/2>1$, all $x$ in later quadrants and periods satisfy $h(x)>h(1)$ for sufficiently large $b$. Thus $x=1$ is the global integer minimum, not merely a local or fixed-$x$ asymptotic minimum.

When $q_{\mathcal{A}}<q_{\mathrm{bd}}$ (or $\alpha>\alpha_{\mathrm{bd}}$ under the linear-depth specialization), the optimum lies at $\tilde{k}=0$. In this regime, the expected depth as a function of $k$ takes the form
\begin{equation}
\mathrm{E}_{\ell,\mathcal{A}}(k,0)=\frac{k(\alpha+1)\ell_n}{\sin^2\left((2k+1)/2^{n/2}\right)}\bigl(1+\mathcal{O}(K^{-1})\bigr).
\end{equation}
Defining $\zeta = (2k+1)/2^{n/2}$, we have $\mathrm{E}_{\ell,\mathcal{A}}(k,0) \propto \zeta/\sin^2\zeta$, whose minimum is located at $\zeta = y_0$. The optimal iteration count and minimum expected depth are therefore
\begin{equation}
k_*=\frac{y_0}{2}2^{n/2}-\frac12+\mathcal{O}(1),
\end{equation}
and
\begin{equation}
\mathrm{E}_{\ell,\mathcal{A}}^{\min,\mathrm{II}} = \kappa_0(\alpha+1)\ell_n\,2^{n/2}\bigl(1+\mathcal{O}(2^{-n/2})\bigr),
\end{equation}
which completes the proofs of Theorem~\ref{thm:MED_caseII} and Corollary~\ref{cor:MED_caseII_depth}.

\subsection{Large-\texorpdfstring{$m$}{m} regime (Case III)}

The preceding cases assume that $y=(2k+1)\sin(x\theta_2)/\sqrt K$ can reach $y_0$, with $k=\mathcal{O}(\sqrt K/\sin\lambda)$ as specified in Corollary~\ref{lem:asympProb}. To locate the loss of this asymptotic advantage, set $m=n/2+c$ with $c=\mathcal{O}(1)$. The success probability of a single application of $G_n$ is $K^{-1}\bigl(1+o(1)\bigr)$, so
\begin{equation}
\mathrm{E}_{\ell}(G_n)\sim n(\alpha+1)2^{n/2-c}.
\end{equation}
For the competing Case~I solution, $m/n=1/2+o(1)$ and Eq.~\eqref{eq:Enew_expand} gives
\begin{equation}
\mathrm{E}_{\ell,\mathcal{A}}^{\min,\mathrm{asy}}
\sim\kappa_0n(\alpha+1/2)2^{n/2}.
\end{equation}
Equating the leading terms yields $c_*(\alpha)$ in
Eq.~\eqref{eq:caseIII_crossover}. For $c>c_*(\alpha)+o(1)$, a single global
iteration has the smaller expected depth. This separates the large-$m$ crossover from
the $x=1$-to-$x=2$ boundary layer. Subleading corrections determine
the nearest finite-$n$ integer threshold. The loss of partial-search advantage
is consistent with the query-complexity analysis of
Ref.~\cite{jiang2026exactboundsquantumpartial}.

In this regime, the optimal sequence reduces to a single global operator, $S_{\mathcal{A}}(1,0)=G_n$, with circuit depth $\ell(G_n)=(\alpha+1)\ell_n$. In the expression from Lemma~\ref{lem:exactProb}, setting $k=1$ and $\tilde{k}=0$ (so that $\lambda=\theta_2$ and $\sin\lambda=1/\sqrt{b}$) gives the success probability
\begin{equation}
\pr_{\mathcal{A}}(1,0)=1-\frac{K-1}{K-1/b}\cos^2\!\left(3\arcsin\frac{1}{\sqrt{N}}\right).
\end{equation}
Expanding in powers of $N^{-1}$, we obtain
\begin{equation}
\pr_{\mathcal{A}}(1,0) = \frac{1}{K}\left(1+\frac{8(K-1)}{Kb}\right)+\mathcal{O}(N^{-2}).
\end{equation}
Factoring out $K^{-1}$ without imposing $K\gg1$ gives
\begin{equation}
\pr_{\mathcal{A}}(1,0)
=\frac{1}{K}\left(1+\frac{8(K-1)}{N}
+\mathcal{O}\!\left(\frac{K}{N^2}\right)\right).
\label{eq:caseIII_finiteK_prob}
\end{equation}
The absolute $\mathcal{O}(N^{-2})$ remainder becomes
$\mathcal{O}(K/N^2)$ inside the parentheses. This form is uniform when
$K=\mathcal{O}(1)$, which includes $m=n-\mathcal{O}(1)$. Replacing
$(K-1)/K$ by unity would not be uniform there. The minimum expected circuit
depth is therefore
\begin{multline}
\mathrm{E}_{\ell,\mathcal{A}}^{\min,\mathrm{III}}
=\frac{(\alpha+1)\ell_n}{\pr_{\mathcal{A}}(1,0)}
\\
=(\alpha+1)\ell_n\frac{K}{1+8(K-1)/N}
\left(1+\mathcal{O}\!\left(\frac{K}{N^2}\right)\right),
\end{multline}
which completes the proof of Theorem~\ref{thm:MED_caseIII}.

\section{GRK depth optimization and critical-depth-ratio analysis}
\label{app:critical}

This appendix supplies the derivations supporting the GRK comparison and critical depth ratios in Sec.~\ref{sec:critical}. It first proves Lemma~\ref{lem:GRK_opt_parameters} and Theorem~\ref{thm:GRKdepth} by optimizing the total and local GRK iterations through order $\sqrt b$. It then proves Theorem~\ref{thm:alpha_c} by obtaining the reduced-integer crossing equation and establishing the uniqueness of its physical root. Finally, it derives the interior continuous crossover equation and proves Corollaries~\ref{cor:alpha_c_asymp} and~\ref{cor:alpha_c_max}, including the small- and large-$m$ asymptotic roots, their validity regimes, and the location and scaling of the maximum critical depth ratio.

\subsection{Depth optimization of the GRK algorithm}
\label{app:depth_GRK}

\begin{proof}[Proof of Lemma~\ref{lem:GRK_opt_parameters}]
Using $\sqrt b=\sqrt N/\sqrt K$ and
$n-m=(\alpha+1)nq_{\mathcal{G}}$, Eqs.~\eqref{eq:GRK_prob_depthopt}
and~\eqref{eq:GRK_expected_depth} give, through order $\sqrt b$,
\begin{multline}
\mathrm{E}_{\ell,\mathcal{G}}(v_{\mathrm{tot}},v_2)
=\frac{(\alpha+1)n v_{\mathrm{tot}}\sqrt N}
{\sin^2(2v_{\mathrm{tot}})}\\
-\frac{(\alpha+1)n\sqrt b}{\sin^2(2v_{\mathrm{tot}})}
\left(q_{\mathcal{G}}v_2
+\frac{2v_{\mathrm{tot}}(\sin(2v_2)-v_2)\sin(4v_{\mathrm{tot}})}
{\sin^2(2v_{\mathrm{tot}})}\right)\\
+o((\alpha+1)n\sqrt b).
\label{eq:GRK_expected_depth_expand}
\end{multline}
Here $v_{\mathrm{tot}}$ and $v_2$ are the total and local GRK iteration
counts rescaled by $\sqrt N$ and $\sqrt b$, respectively, as defined in
Eq.~\eqref{eq:GRK_rescaled_iterations}. The parameter $q_{\mathcal G}$,
defined in Eq.~\eqref{eq:qG_definition}, is the fractional depth saved by
replacing a global Grover iteration with a local one. The numerator and the
probability correction in Eq.~\eqref{eq:GRK_expected_depth_expand} follow from
Eqs.~\eqref{eq:GRK_single_depth} and~\eqref{eq:GRK_prob_depthopt}, respectively.
For fixed $v_2$, the leading coefficient is
$v_{\mathrm{tot}}/\sin^2(2v_{\mathrm{tot}})$. Stationarity in the first
physical interval gives $\tan(2v_{\mathrm{tot}})=4v_{\mathrm{tot}}$, hence
$v_{\mathrm{tot}}=\kappa_2=y_0/2$. Evaluating the order-$\sqrt b$ term there
and using $2\kappa_2\sin(2y_0)/\sin^2y_0=1$ yield
\begin{multline}
\min_{v_{\mathrm{tot}}}
\mathrm{E}_{\ell,\mathcal{G}}(v_{\mathrm{tot}},v_2)
=\kappa_0(\alpha+1)n\sqrt N\\
-\frac{(\alpha+1)n}{\sin^2y_0}
\left(\sin(2v_2)-(1-q_{\mathcal{G}})v_2\right)\sqrt b\\
+o((\alpha+1)n\sqrt b).
\label{eq:GRKdepth_v2_expand}
\end{multline}
Thus one maximizes $g(v_2)=\sin(2v_2)-(1-q_{\mathcal G})v_2$.
Because $g''(v_2)=-4\sin(2v_2)<0$ on $0<v_2<\pi/2$, its stationary
condition gives the unique maximizer in Eq.~\eqref{eq:GRK_v2star}, with
$\pi/6\le v_{2,*}\le\pi/4$. The strict leading-order minimum implies that the
order-$K^{-1/2}$ term shifts $v_{\mathrm{tot},*}$ only by
$\mathcal{O}(K^{-1/2})$, completing the proof.
\end{proof}

\begin{proof}[Proof of Theorem~\ref{thm:GRKdepth}]
Substituting $v_{2,*}$ into Eq.~\eqref{eq:GRKdepth_v2_expand} gives
Eq.~\eqref{eq:GRKdepth_expand}. The probability remainders
$\mathcal{O}(K^{-1})$ and $\mathcal{O}(N^{-1/2})$, the quadratic effect of the
$\mathcal{O}(K^{-1/2})$ shift in $v_{\mathrm{tot},*}$, and integer rounding all
contribute $o((\alpha+1)n\sqrt b)$. They therefore do not change the displayed
order-$\sqrt b$ coefficient.
\end{proof}

Finally, the envelope theorem gives $g_*'(q_{\mathcal G})=v_{2,*}$. Since
$v_{2,*}=\pi/6$ at $q_{\mathcal G}=0$, expanding $g_*$ and using
$g_*(0)/\sin^2y_0=\kappa_4$ and
$(\pi/6)/\sin^2y_0=\kappa_5$ directly gives
Eq.~\eqref{eq:GRKdepth_small_q}.

\subsection{Proof of Theorem~\ref{thm:alpha_c}}

Normalize Eqs.~\eqref{eq:GRKdepth_expand} and
\eqref{eq:MED_boundary_layer} by $\kappa_0n2^{n/2}$ and retain the integer
minimum in the alternating depth. The one-to-one change of variables
\begin{equation}
\alpha+\frac mn=\frac{n-m}{nq_{\mathcal{A}}},
\qquad
\alpha+1=\frac{n-m}{n}\frac{1+q_{\mathcal{A}}}{q_{\mathcal{A}}}
\end{equation}
also gives $q_{\mathcal{G}}=q_{\mathcal{A}}/(1+q_{\mathcal{A}})$. Equating the two depths and cancelling the common factor $(n-m)/(nq_{\mathcal{A}})$ yields Eq.~\eqref{eq:qAc_integer} at $q_{\mathcal{A}}=q_{\mathcal{A},\mathrm{c}}$.

For uniqueness, let $A_j(\alpha)$ be the normalized alternating-depth
candidate associated with a fixed $j$. The sine inequality and the envelope
identity $g_*'(q_{\mathcal G})=v_{2,*}$ give
\begin{align}
A_j'(\alpha)
&=\frac{j}{\sqrt b\,|\sin(j\theta_2)|}\ge1,\\
\frac{d}{d\alpha}
\frac{\mathrm{E}_{\ell,\mathcal G}^{\min}
}{\kappa_0n2^{n/2}}
&=1-\frac{\sin(2v_{2,*})-v_{2,*}}{\kappa_2}
2^{-(n-m)/2}<1.
\label{eq:critical_slopes}
\end{align}
Thus the GRK depth minus every $A_j$ is strictly decreasing. Only finitely
many candidates are active on a compact $\alpha$ interval, so taking their
lower envelope preserves continuity and strict decrease. There can therefore
be at most one crossing. At and above $\alpha_{\mathrm{bd}}$, Case~II reduces
$S_{\mathcal A}$ to serial Grover search, whereas Eq.~\eqref{eq:GRKdepth_expand}
contains a strictly negative order-$\sqrt b$ correction; hence GRK is
shallower there. On $0<\alpha<\alpha_{\mathrm{bd}}$,
$q_{\mathrm{bd}}<q_{\mathcal A}<(n-m)/m$, so the crossing is the unique
physical root of Eq.~\eqref{eq:qAc_integer}. Inverting the change of variables
gives Eq.~\eqref{eq:alpha_c_from_q} and the ordering on its two sides.

\subsection{Proof of Corollary~\ref{cor:alpha_c_asymp}}

Consider the interior regime of the Case~I window
$b^{-1}\ll q_{\mathcal A,\mathrm c}^{\mathrm{asy}}\ll b^{1/2}$ specified in
Theorem~\ref{thm:MED_caseI}. In this regime, substituting
$\alpha+m/n=(n-m)/(nq_{\mathcal A,\mathrm c}^{\mathrm{asy}})$ and
$q_{\mathcal G}=q_{\mathcal A,\mathrm c}^{\mathrm{asy}}/
(1+q_{\mathcal A,\mathrm c}^{\mathrm{asy}})$ into
Eqs.~\eqref{eq:Enew_expand} and~\eqref{eq:GRKdepth_expand}, then cancelling
$(n-m)/n$, gives Eq.~\eqref{eq:qAc_implicit}. The depth difference is strictly
decreasing in $\alpha$, since its derivative consists of negative
$2^{-m/3}(\alpha+m/n)^{-2/3}$ and
$2^{-(n-m)/2}[\sin(2v_{2,*})-v_{2,*}]$ contributions. Thus the continuous
equation has at most one positive root in this regime. The large-$m$ solution
below verifies the interior ordering, while the Case~I--II boundary is covered
by the integer-aware Theorem~\ref{thm:alpha_c}.

Any positive root of Eq.~\eqref{eq:qAc_implicit} has
$q_{\mathcal A,\mathrm c}^{\mathrm{asy}}\to0$ when both $m$ and $n-m$
diverge. We may therefore use
\begin{equation*}
g_*(q_{\mathcal G})=g_*(0)+(\pi/6)q_{\mathcal G}
+\mathcal{O}(q_{\mathcal G}^2)
\end{equation*}
and define
\begin{equation*}
u_c=[(n-m)/(nq_{\mathcal A,\mathrm c}^{\mathrm{asy}})]^{1/3}
=[\alpha_{\mathrm c,\mathcal A}^{\mathrm{asy}}+m/n]^{1/3}.
\end{equation*}
Keeping the
leading terms in Eq.~\eqref{eq:qAc_implicit} gives the balance
\begin{multline}
\kappa_4 2^{-(n-m)/2}u_c^3
+\kappa_0\kappa_6\left(\frac{n-m}{n}\right)^{2/3}2^{-m/3}u_c\\
-\kappa_0\frac{n-m}{n}\simeq0.
\label{eq:cubic_balance}
\end{multline}
The left-hand side is strictly increasing for $u_c>0$, so this approximation
has one positive root. Its two dominant balances are
\begin{align}
n-3m\to\infty:\qquad
u_c&\simeq\kappa_6^{-1}\left(\frac{n-m}{n}\right)^{1/3}2^{m/3},\\
3m-n\to\infty:\qquad
u_c^3&\simeq\frac{\kappa_0}{\kappa_4}
\frac{n-m}{n}2^{(n-m)/2}.
\end{align}
Using $q_{\mathcal A,\mathrm c}^{\mathrm{asy}}=(n-m)/(nu_c^3)$ and then
Eq.~\eqref{eq:alpha_c_from_q} gives Eqs.~\eqref{eq:alpha_c_left} and
\eqref{eq:alpha_c_right}. In the first case,
$q_{\mathcal A,\mathrm c}^{\mathrm{asy}}b\to\kappa_6^3=9/8$, so the result is
only a boundary-scale continuation. In the second,
$q_{\mathcal A,\mathrm c}^{\mathrm{asy}}/b^{-1}\to\infty$ and
$q_{\mathcal A,\mathrm c}^{\mathrm{asy}}/b^{1/2}\to0$, which verifies the
interior ordering. Equation~\eqref{eq:cubic_balance} is the leading-order expansion of
Eq.~\eqref{eq:qAc_implicit}.

\subsection{Proof of Corollary~\ref{cor:alpha_c_max}}

In the large-$n$ limit,
$q_{\mathcal G}=(n-m)/[n(\alpha+1)]\to0$, so
Eq.~\eqref{eq:cubic_balance} controls the leading scaling. Since
$\alpha_{\mathrm c,\mathcal A}^{\mathrm{asy}}=u_c^3-m/n$ grows
exponentially, its stationarity condition is
$\mathrm{d}u_c/\mathrm{d}m=0$ to leading order. Solving
Eq.~\eqref{eq:cubic_balance} together with its $m$ derivative gives
\begin{subequations}
\begin{align}
&u_c\simeq\frac35\kappa_6^{-1}
\left(\frac{n-m}{n}\right)^{1/3}2^{m/3},\\
&2^{(3m-n)/2}\simeq\frac{50}{27}
\frac{\kappa_0\kappa_6^3}{\kappa_4}.
\label{eq:max_balance}
\end{align}
\end{subequations}
The second relation yields
$m_{\mathrm{opt}}\simeq n/3+1.2175$. Substitution into the first gives
\begin{multline}
\alpha_{\mathrm{c},\mathcal A}^{\max}
\simeq\left(\frac{2}{5}\right)^{5/3}
\kappa_0^{2/3}\kappa_4^{-2/3}\kappa_6^{-1}2^{n/3}
\approx0.29763\,2^{n/3}.
\end{multline}
The omitted subtraction $m_{\mathrm{opt}}/n=\mathcal{O}(1)$ is subleading.
Moreover, $q_{\mathcal A,\mathrm c}^{\mathrm{asy}}=(n-m)/(nu_c^3)$ and
$b=2^m$ give $q_{\mathcal A,\mathrm c}^{\mathrm{asy}}b=\Theta(1)$ at the
maximum. It therefore belongs to the formal boundary-scale continuation,
which completes the proof.

\section{Finite-size edge-operator corrections}
\label{app:boundary_proof}

The analytical results in Sec.~\ref{sec:analytical} optimize the depth of the alternating sequence $S_{\mathcal A}$ defined in Eq.~\eqref{eq:S_A}. It is a special case of the finite-size family $S_{\mathcal D}$ in Eq.~\eqref{eq:S_D}. Consider $T_{\tilde k}=G_nG_m^{\tilde k}$ and the edge mismatch $\delta=\bar k-\tilde k$. Then we have
\begin{equation}
S_{\mathcal{D}}(k,\bar k,\tilde k)
=T_{\tilde k}G_m^\delta T_{\tilde k}^{\,k}.
\label{eq:SD_factorized}
\end{equation}
The choice $\delta=0$ gives $S_{\mathcal A}(k+1,\tilde k)$, so Eq.~\eqref{eq:SD_factorized} isolates the mismatch in one local factor $\delta$.

The operator $G_m^\delta$ is a rotation by $2\delta\theta_2$ in the local two-dimensional subspace and acts trivially on $|\bar b\rangle$, where $\theta_2=\arcsin(1/\sqrt b)$. Its spectrum therefore gives the exact identity
\begin{equation}
\|G_m^\delta-I\|=2\bigl|\sin(\delta\theta_2)\bigr|.
\label{eq:norm_exact}
\end{equation}
Here $\|\cdot\|$ denotes the operator norm. Writing $a_{\mathcal D}=\langle\bar b|S_{\mathcal D}|s_n\rangle$ and $a_{\mathcal A}=\langle\bar b|T_{\tilde k}^{\,k+1}|s_n\rangle$, unitary invariance and the Cauchy--Schwarz inequality give $|a_{\mathcal D}-a_{\mathcal A}|\leq2|\sin(\delta\theta_2)|$. Since $\pr=1-|a|^2$, this yields
\begin{multline}
\bigl|\pr_{\mathcal{D}}(k,\bar k,\tilde k)
-\pr_{\mathcal{A}}(k+1,\tilde k)\bigr|\\
\leq4\bigl|\sin(\delta\theta_2)\bigr|
\bigl(1+\bigl|\sin(\delta\theta_2)\bigr|\bigr).
\label{eq:prob_bound}
\end{multline}
This bound holds for every $b$ and every integer $\delta$. For fixed $\delta$ and $b\gg1$, it is of order $\mathcal{O}(|\delta|/\sqrt b)$.

\begin{proposition}\label{prop:boundary_error}
Assume $b\gg1$, $K\gg1$, and an interior Case~I optimum satisfying $q_{\mathcal{A}}b\to\infty$ and $q_{\mathcal{A}}/\sqrt b\to0$. Let $|\delta|\ll\tilde k_*+1=\Theta\bigl((q_{\mathcal{A}}b)^{1/3}\bigr)$. The edge-corrected success probability satisfies
\begin{equation}
\pr_{\mathcal{D}}(k,\bar k,\tilde k)
=\pr_{\mathcal{A}}(k+1,\tilde k)
+\mathcal{O}\!\left(\frac{|\delta|}{\sqrt N}\right)
+\mathcal{O}\!\left(\frac{\delta^2}{b}\right).
\label{eq:PD_vs_PA}
\end{equation}
Here $\pr_{\mathcal{A}}(k+1,\tilde k)$ is the exact edge-free probability given by Lemma~\ref{lem:exactProb}.
\end{proposition}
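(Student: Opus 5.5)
The plan is to sharpen the crude estimate \eqref{eq:prob_bound}. Instead of bounding $\|G_m^\delta-I\|$ as a whole, I will use the factorization \eqref{eq:SD_factorized} and the fact that $G_m^\delta-I$ acts only on the in-block plane $\mathcal{P}=\mathrm{span}\{|t\rangle,|b\bar t\rangle\}$. Set $|\psi\rangle=T_{\tilde k}^{\,k}|s_n\rangle$ and $\langle\varphi|=\langle\bar b|T_{\tilde k}$, so that $|\varphi\rangle=T_{\tilde k}^{\mathsf T}|\bar b\rangle$ since all matrices are real orthogonal. Then
\begin{equation*}
a_{\mathcal D}-a_{\mathcal A}=\langle\varphi|\bigl(G_m^\delta-I\bigr)|\psi\rangle .
\end{equation*}
Let $P$ be the projector onto $\mathcal{P}$ and $J$ the generator of rotations in $\mathcal{P}$, with $J|t\rangle=-|b\bar t\rangle$ and $J|b\bar t\rangle=|t\rangle$, matching the sign convention of $G_m$ in Eq.~\eqref{eq:matrix_GnGm}. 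Reading off $G_m^\delta$ from Eq.~\eqref{eq:matrix_GnGm} gives the exact decomposition
\begin{equation*}
G_m^\delta-I=\bigl(\cos(2\delta\theta_2)-1\bigr)P+\sin(2\delta\theta_2)\,J ,
\end{equation*}
valid for integers $\delta$ of either sign.

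The key step is to show that $|\varphi\rangle$ has only an $\mathcal{O}(K^{-1/2})$ component in $\mathcal{P}$. Since $G_m^{\tilde k}$ fixes $|\bar b\rangle$ and preserves $\mathcal{P}$, we have $P|\varphi\rangle=P\,G_m^{-\tilde k}G_n^{\mathsf T}|\bar b\rangle$ and hence $\|P\varphi\|=\|P\,G_n^{\mathsf T}|\bar b\rangle\|$. By Eq.~\eqref{eq:matrix_GnGm}, the in-block entries of the third row of $G_n$ are $-2\sin\gamma\cos\gamma\sin\theta_2$ and $2\sin\gamma\cos\gamma\cos\theta_2$. Therefore $\|P\varphi\|=2\sin\gamma\cos\gamma\le 2K^{-1/2}$ exactly. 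Together with $\|P\psi\|\le1$ and $\|J\|=1$, Cauchy--Schwarz gives
\begin{equation*}
|a_{\mathcal D}-a_{\mathcal A}|\le \frac{2}{\sqrt K}\Bigl(2\sin^2(\delta\theta_2)+|\sin(2\delta\theta_2)|\Bigr)
=\mathcal{O}\!\left(\frac{|\delta|}{\sqrt N}\right)+\mathcal{O}\!\left(\frac{\delta^2}{b\sqrt K}\right),
\end{equation*}
using $\sin\theta_2=b^{-1/2}$ and $N=Kb$.

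To pass to probabilities, I will use $\pr=1-|a|^2$ and $\bigl||a_{\mathcal D}|^2-|a_{\mathcal A}|^2\bigr|\le|a_{\mathcal D}-a_{\mathcal A}|\,(|a_{\mathcal D}|+|a_{\mathcal A}|)\le 2|a_{\mathcal D}-a_{\mathcal A}|$. This yields Eq.~\eqref{eq:PD_vs_PA}, in fact with the quadratic term carrying an extra factor $K^{-1/2}$, so the stated $\mathcal{O}(\delta^2/b)$ holds a fortiori. The identification of $\pr_{\mathcal A}(k+1,\tilde k)$ with the closed form of Lemma~\ref{lem:exactProb} is immediate, because $a_{\mathcal A}=\langle\bar b|T_{\tilde k}^{\,k+1}|s_n\rangle$ is exactly the amplitude computed in Appendix~\ref{app:recurrence}.

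The hypotheses $|\delta|\ll\tilde k_*+1$ and the Case~I window are not needed for the inequality itself. They ensure that $\bar k=\tilde k+\delta$ stays nonnegative and within the first physical interval $0<\lambda\le\pi/2$, and that the correction is subleading. Since $\pr_{\mathcal A}=\Theta(1)$ at the Case~I optimum and $|\delta|\theta_2\ll z_*\to0$, the error is $o(1)$ relative to the leading term. I will note this briefly in a final remark.

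The only delicate point is the transpose bookkeeping in the key step. One must check that it is the \emph{left} vector $\langle\bar b|T_{\tilde k}$, and not $T_{\tilde k}^{\,k}|s_n\rangle$, whose in-block weight is small. I will verify this directly from the third row of $G_n$ in Eq.~\eqref{eq:matrix_GnGm}, together with the invariance of $|\bar b\rangle$ under $G_m$.
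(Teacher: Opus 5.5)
Your proof is correct, and it takes a genuinely different route from the paper's.

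The paper argues spectrally. It treats the edge factor as a perturbation of the rotation $T_{\tilde k}$ and computes the first-order shift $\Delta\Omega_\delta$ of the rotation angle in Eq.~\eqref{eq:delta_Omega}. It then uses the interior scaling $\lambda\simeq(3q_{\mathcal A})^{1/3}b^{-1/6}$ to argue that the relative angle perturbation is $|\delta|/(\tilde k_*+1)$. Finally, it estimates the leakage into the $(-1)$ eigenspace via $\langle\bar b|v_-\rangle=\mathcal{O}(\sin\gamma)$, leaving a quadratic remainder $\mathcal{O}(\delta^2/b)$. This is first-order bookkeeping that relies on the Case~I hypotheses and on $|\delta|\ll\tilde k_*+1$.

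You instead write $a_{\mathcal D}-a_{\mathcal A}$ as the single matrix element $\langle\bar b|T_{\tilde k}(G_m^\delta-I)T_{\tilde k}^{\,k}|s_n\rangle$. You note that $G_m^\delta-I$ is supported on the in-block plane, and you read off from the third row of $G_n$ that the final global operator couples that plane to $|\bar b\rangle$ only with weight $\sin 2\gamma$. This is the same mechanism the paper invokes through $\langle\bar b|v_-\rangle$, but used exactly and without any spectral decomposition.

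Combined with the exact norm \eqref{eq:norm_exact} in place of your triangle-inequality estimate, it gives the nonasymptotic bound $|\pr_{\mathcal D}(k,\bar k,\tilde k)-\pr_{\mathcal A}(k+1,\tilde k)|\le 4\sin(2\gamma)\,|\sin(\delta\theta_2)|$ for every $k$, $\tilde k$, $\delta$, $b$, and $K$. This bound does three things:
\begin{itemize}
\item it sharpens the paper's unconditional estimate \eqref{eq:prob_bound} by the factor $\sin 2\gamma$;
\item it shows that the quadratic term actually carries an extra factor $K^{-1/2}$;
\item it makes the interior-optimum and small-$|\delta|$ hypotheses unnecessary for the inequality itself, as you correctly note.
\end{itemize}

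What the paper's formulation retains is the picture of the edge as a relative perturbation $|\delta|/(\tilde k_*+1)$ of the spiral's rotation angle. It uses that picture in the subsequent remark about the loss of uniform suppression near the Case~I--II boundary. For the proposition as stated, your argument is both simpler and stronger.
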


\begin{proof}
For $b\gg1$, $G_m^\delta=I+2\delta\theta_2J_{xy}+\mathcal{O}((\delta\theta_2)^2)$, where $J_{xy}$ generates rotations in the local two-dimensional subspace. On the rotation subspace of $T_{\tilde k}$, the signed angle is $\Omega(\lambda)=2\arcsin(\sin\gamma\sin\lambda)$ with $\lambda=(\tilde k+1)\theta_2$. The edge mismatch changes it by
\begin{multline}
\Delta\Omega_\delta
=\Omega(\lambda+\delta\theta_2)-\Omega(\lambda)\\
=\frac{2\delta\theta_2\sin\gamma\cos\lambda}
{\sqrt{1-\sin^2\gamma\sin^2\lambda}}
+\mathcal{O}((\delta\theta_2)^2).
\label{eq:delta_Omega}
\end{multline}
At the interior optimum, $\lambda=(3q_{\mathcal A})^{1/3}b^{-1/6}(1+o(1))$, $\sin\gamma=K^{-1/2}$, and $\Omega\simeq2\sin\gamma\sin\lambda$. Hence the relative rotation-angle perturbation satisfies $|\Delta\Omega_\delta|/\Omega\simeq|\delta|/(\tilde k_*+1)=\mathcal{O}\bigl(|\delta|/(q_{\mathcal A}b)^{1/3}\bigr)$, which is small under the stated assumption. The corresponding change within the rotation subspace is $\mathcal{O}(|\delta|/\sqrt N)$. Equation~\eqref{eq:v_minus} also gives $\langle\bar b|v_-\rangle=\mathcal{O}(\sin\gamma)$, so the component injected into the $(-1)$ eigenspace has the same order, while the quadratic remainder is $\mathcal{O}(\delta^2/b)$. These estimates prove Eq.~\eqref{eq:PD_vs_PA}.
\end{proof}

Equation~\eqref{eq:prob_bound} is unconditional, whereas the proposition is local to a fixed small mismatch near an interior optimum. For $q_{\mathcal A}=\Theta(1)$, the proof gives a relative rotation-angle perturbation of order $\mathcal{O}(|\delta|/b^{1/3})$, and the single-run depth shift is exactly $\delta\ell_{G_m}$, which is subleading when $\delta=\mathcal{O}(1)$. Near the Case~I--II boundary, however, $\tilde k_*+1=\Theta(1)$, so no uniform $b^{-1/3}$ suppression follows. The edge degree of freedom may therefore shift finite-size transition points, and the unrestricted asymptotic optimization of $S_{\mathcal D}$ remains open.


%

\end{document}